\long\def\commshort #1\commshortend{}
\long\def\commlong #1\commlongend{#1}
\long\def\oldProbLowerBound #1\oldProbLowerBoundEnd{}
\newcommand{\dist}{\mbox{\rm dist}}
\newcommand{\erezdel}[1]{}
\newtheorem{theorem}{Theorem}
\newtheorem{thm}{Theorem}[section]
\newtheorem{lemma}[thm]{Lemma}
\newtheorem{claim}[thm]{Claim}
\newtheorem{corollary}[thm]{Corollary}
\newtheorem{definition}[thm]{Definition}
\def\blackslug{\hbox{\hskip 1pt \vrule width 4pt height 8pt
    depth 1.5pt \hskip 1pt}}
\newenvironment{proof}{\begin{trivlist}
\item[\hspace{\labelsep}{\noindent\textit{Proof. }}]
}{\hfill$\Box$\end{trivlist}}
\newcommand{\qedsymb}{\hfill{\rule{2mm}{2mm}}}
\newenvironment{proofof}[1]{\begin{trivlist}
\item[\hspace{\labelsep}{\bf\noindent Proof of #1: }]
}{\qedsymb\end{trivlist}}
\def\qed{\quad\blackslug\lower 8.5pt\null\par}
\def\QED{\quad\blackslug\lower 8.5pt\null\par}
\newcommand{\AlgAppDeg}[0]{\mbox{{\sf AppDegree}}}
\newcommand{\SL}[0]{\mathcal{SL}}
\newcommand{\jSL}[0]{j^{\SL}_{max}}
\newcommand{\maxd}[0]{d^*}
\newcommand{\maxSL}[0]{S_{max}^{\SL}}
\def\LeaderElection{\mbox{\sf LeaderElection}}
\newcommand{\SUM}[0]{\mbox{{\sc SUM}}}
\newcommand{\Val}[0]{\mbox{{\sc Val}}}
\def\CompMaxSH{\mbox{\sf CompMaxSH}}
\def\CompMaxMH{\mbox{\sf CompMaxMH}}
\newcommand{\ApproxSize}[0]{\mbox{{\sf ApproxNetSize}}}
\newcommand{\EXP}[0]{\mathbb{E}}
\newcommand{\Prob}[0]{\mathbb{P}}
\newcommand*\samethanks[1][\value{footnote}]{\footnotemark[#1]}
\def\mylevel{\mbox{\tt MYLEVEL}}
\def\maxlevel{\mbox{\tt MAXLEVEL}}
\def\ConstructBFS{\mbox{\sf ConstructBFS}}
\def\id{\mbox{\tt id}}
\begin{document}

\def\ABSTRACT{
\begin{abstract}
This paper studies the theory of the additive wireless network model, in which the received signal is abstracted as an addition of the transmitted signals.
Our central observation is that the crucial challenge for computing in this model is not high contention, as assumed previously, but rather guaranteeing a bounded amount of \emph{information} in each neighborhood per round, a property that we show is achievable using a new random coding technique.
\commlong \par \commlongend
Technically, we provide efficient algorithms for fundamental distributed tasks in additive networks, such as solving various symmetry breaking problems,
approximating network parameters, and solving an \emph{asymmetry revealing} problem such as computing a maximal input.
\commlong \par \commlongend
The key method used is a novel random coding technique that allows a node to successfully decode the received information, as long as it does not contain too many distinct values. We then design our algorithms to produce a limited amount of information in each neighborhood in order to leverage our enriched toolbox for computing in additive networks.
\end{abstract}
}

\commlong

\title{Computing in Additive Networks with Bounded-Information Codes}

\author{
Keren Censor-Hillel
\thanks{Department of Computer Science, Technion, Haifa 32000, Israel. Supported in part by the Israel Science Foundation (grant 1696/14).}
\\ {\sf ckeren@cs.technion.ac.il}
\and
Erez Kantor\thanks{CSAIL, Massachusetts Institute of Technology, MA 01239, USA.
Supported in a part by
NSF Award Numbers CCF-1217506,  CCF-AF-0937274, 0939370-CCF, and AFOSR Contract Numbers  FA9550-14-1-0403 and FA9550-13-1-0042.
}
\\ {\sf erezk@csail.mit.edu}
\and Nancy Lynch\samethanks[2]
\\ {\sf lynch@csail.mit.edu}
\and Merav Parter\samethanks[2]~\thanks{
Merav Parter is also supported by Rothschild and Fulbright Fellowships.
}
\\{\sf parter@csail.mit.edu}
}

\maketitle

\ABSTRACT


\commlongend

\commshort
\title{Computing in Additive Networks with Bounded-Information Codes
\thanks{
The first author is supported in part by the Israel Science Foundation (grant 1696/14).
The last three authors are supported in a part by
NSF Award Numbers CCF-1217506,  CCF-AF-0937274, 0939370-CCF, and AFOSR Contract Numbers  FA9550-14-1-0403 and FA9550-13-1-0042.
Merav Parter is also supported by Rothschild and Fulbright Fellowships.
}
}

\author{Keren Censor-Hillel
\inst{1}
\and Erez Kantor\inst{2}
\and Nancy Lynch \inst{2}
\and Merav Parter \inst{2}
}

\institute{
Department of Computer Science, Technion, Haifa 32000, Israel
\and
CSAIL, Massachusetts Institute of Technology, MA 01239, USA
}

\date{}
\maketitle

\ABSTRACT

\commshortend

\section{Introduction}

The main challenge in wireless communication is the possibility of collisions, occurring when two nearby stations transmit at the same time.
In general, collisions provide no information on the data, and in some cases may not even be distinguishable from the case of no transmission at all. Indeed, the ability to merely detect collisions (a.k.a., the collision detection model) gives additional power to wireless networks, and separation results are known (e.g.,~\cite{SchneiderW2010b}).

Traditional approaches for dealing with interference (e.g., FDMA, TDMA) treat collisions as something that should be avoided or at least minimized~\cite{RadMin,GuKu00,OLT07}. However, modern coding techniques suggest the ability to  \emph{retrieve information} from collisions.
These techniques significantly change the notion of collisions, which now depends on the model or coding technique used.
For example, in \emph{interference cancellation}~\cite{Andrews2005}, the receivers may decode interfering signals that are sufficiently strong and \emph{cancel} them from the received
signal in order to decode their intended message. Hence, from this viewpoint, collision occurs only when neither the desired signal nor the the interfering signal are relatively strong enough.

In this paper, we consider the \emph{additive network model}, in which colliding signals add up at the receiver and are hence \emph{informative} in some cases. It has been shown that such models approximate the capacity of networks with high signal-to-noise ratio~\cite{AvestimehrDT2011}, and that they can be useful in these settings for various coding techniques, such as ZigZag decoding~\cite{GollakotaK08,ParandehGheibiS10}, and bounded-contention coding~\cite{Censor-HillelHLM12}. While in practice there are limitations for implementing such networks to the full extent of the model, the above previous research shows the importance of understanding the fundamental strength of models that allow the possibility of extracting information out of collisions. In a recent theoretical work~\cite{Censor-HillelHLM12}, the problems of local and global broadcast have been addressed in additive networks, under the assumption that the contention in the system is \emph{bounded}.

The central observation of this paper is that in order to leverage the additive behavior of the system, what needs to be bounded is not necessarily the contention, but rather the total amount of \emph{information} a node has to process at a given round. This observation allows us to extend the quantification of the computational power of the additive network model in solving distributed tasks way beyond local and global broadcast. Our key approach in this paper is \emph{not to assume} a bound on the initial number of pieces of information in the system, but rather \emph{guarantee} a bound on the number of \emph{distinct} pieces of information in a neighborhood of every vertex. We then use a new random coding technique, which we refer to as \emph{Bounded-Information Codes (BIC)}, in order to extract the information out of the received signals. This allows us to efficiently solve various cornerstone distributed tasks.

\subsection{Contributions and Methods}
On the technical side, we provide efficient algorithms for fundamental \emph{symmetry breaking} tasks, such as leader election, and computing a BFS tree and a maximal independent set (MIS), as well as algorithms for \emph{revealing asymmetry} in the inputs, such as computing the maximum. We also provide efficient algorithms for approximating network parameters by a constant factor.
Our key methods are based on enriching the toolbox for computing in additive networks with various primitives that leverage the additive behavior of received information and our coding technique.
\paragraph{Main techniques:} The work in~\cite{Censor-HillelHLM12} introduced Bounded-Contention Codes (BCC) as the main technique. BCC allows the decoding of the XOR of any collection of at most $a$ codewords, where $a$ is the bound on the contention. As mentioned, our key approach in this paper is not to assume a bound on the contention, but rather to make sure that the amount of distinct information colliding at a node at a given round is limited.
Our main ingredient is augmenting the deterministic BCC codes with randomization, resulting in Bounded-Information Codes. BIC allows successful decoding of any transmission of $n$ nodes sending at most $O(a)$ distinct values altogether, with high probability.

Randomization plays a key role in the presented scheme in two different aspects. First, the drawback of the standard BCC code is that the transmission of the \emph{same} message by an even number of neighbors is cancelled out. By increasing the message size by factor of $O(\log n)$ and using randomization, BIC codes add random ``noise" to the original BCC codeword so that the probability that two BIC messages cause cancellation becomes negligible.

Another useful aspect of randomization is intimately related to the fact that
our information bounds are logarithmic in $n$. This allows for a win-win situation: if the number of distinct pieces of information (in  a given neighborhood) is small (i.e., $O(\log n)$), the decoding is successful thanks to the BIC codes. On the other hand, if the number of distinct pieces of information is large (i.e., $\Omega(\log n)$), there are sufficiently many transmitting vertices in the neighborhood which allows one to obtain good concentration bounds by, e.g., using Chernoff bounds (for example, in estimating various network parameters).
It is noteworthy that our estimation technique bares some similarity to the well-known \emph{decay strategy}~\cite{BarYeh92} which is widely used in radio-networks. The key distinction between the long line of works that apply this scheme and this paper is the dimension to which this strategy in applied. Whereas so-far, the strategy was applied to the \emph{time} axis (e.g., in round $i$, vertex $u$ transmits with probability $2^{-i}$), here it is applied to the \emph{information} (or message) axis (e.g., vertex $u$ writes the specific information in the $i$'th block of its message with probability $2^{-i}$). This highly improves the time bounds compared to the basic radio model (i.e., the statistics are collected over the multiple blocks of the message instead of over multiple slots).

An immediate application of BIC is a simple logarithmic simulation of algorithms for networks that employ full-duplex radios (where a node can transmit and receive concurrently) by nodes who have only half-duplex radios (where a node either transmits or receives in a given round). This allows us to consider algorithms for the stronger model of full-duplex radios and obtain a translation to half-duplex radios, and also allows us to compare our algorithms to a message-passing setting. To make justice with such comparisons, we note that a message-passing setting not only does not suffer from collisions, but also is in some sense similar to having full duplex, as a node receives and sends information in the same round.

Note that in the standard radio model, collision detection is not an integral part of the model but rather an external capability that can be chosen to be added.
In BIC, collision detection is an integral part of the model, where \emph{collision} now refers to the situation where the number of distinct
pieces of information exceeds the allowed bound. To avoid confusion, the collision detection in the context of BIC, is hereafter referred to as \emph{information-overflow detection}. We show that information-overflow can be detected while inspecting the received codeword, without the need for any additional mechanisms.

\paragraph{Symmetry breaking:}
The first type of algorithms we devise are for various symmetry breaking tasks. The main tool in this context is the \emph{select-level} function, $\SL$, that outputs two random values according to a predefined distribution. Every vertex $v$ computes the $\SL$ function locally, without any communication. The power of this function lies in its ability to assign random \emph{levels} to nodes, such that with high probability\footnote{We use the term \emph{with high probability} (w.h.p.) to denote a probability of at least $1-1/n^c$ for a constant $c\geq 1$.} the maximal level contains at most a logarithmic number of nodes (i.e., below the information bound of the BIC code), and the nodes in the maximal level have different values for their second random variable.

The $\SL$ function allows us to elect a leader in $O(D)$ rounds, w.h.p., where $D$ is the diameter of the network. The elected leader is the node with the maximal pair of values chosen by the $\SL$ function. A by-product of this algorithm is a 2-approximation of the diameter, and the analysis is done over a BFS tree rooted at the leader. We also show how to construct a BFS tree rooted at an arbitrary given node in $O(D)$ rounds, w.h.p, by employing both the $\SL$ function and BIC.

Apart from the above new algorithms, our framework allows relatively simple translations of known algorithms for solving various tasks in message passing systems into additive networks. This includes Luby's MIS algorithm~\cite{Luby86}, Schneider and Wattenhofer's coloring algorithm~\cite{SchneiderW09}, and approximating the minimum dominating set of Wattenhofer and Kuhn~\cite{MDSKhun}, improving significantly over the known bounds for standard radio-model.
We give a flavor of these translations by providing the full MIS algorithm and analysis in
\commlong
Appendix~\ref{subsec:more_sbt},
\commlongend
\commshort
\cite{TR-XOR},
\commshortend
and sketch the results for coloring and approximating the minimum dominating set.

\paragraph{Approximations:}  We design algorithms for approximating various network parameters. We show how to compute a constant approximation of the degree of a node, as well as a constant approximation of the size and diameter of the network. (Our coding scheme only requires nodes to know a polynomial bound $N$ on the network size $n$.) 
Our algorithms naturally extend to solve the more general tasks of local-sum and global-sum approximations\footnote{These are generalizations of degree-approximation and network-size approximation, respectively.} that have been recently considered in \cite{LiuH14}. Yet, the additive setting allows us to obtain much better bounds than those of \cite{LiuH14}.

\paragraph{Asymmetry revealing:} In addition to the above symmetry breaking algorithms, we show that additive networks also allow for fast solutions for tasks which do not require symmetry breaking, but rather already begin with inputs whose asymmetry needs to be revealed: we give an algorithm that computes the \emph{exact} maximal value of all inputs in the network in $O(D\cdot \log{n}/\log\log{n})$ rounds, w.h.p. (in contrast, a $2$-approximation for the maximal value can be computed within $\Theta(D)$ rounds).
We obtain this because our coding scheme allows us to perform a tournament at a high rate. For example, for single-hop networks, in each round only a $O(\log n)$ fraction of the remaining competing vertices survive for the next round.

In some sense, asymmetry revealing can be viewed as the counterpart of symmetry breaking. Clearly, if we compute the maximal input in the system then we can obtain a leader as a by-product. However, the opposite does not hold, and indeed in our leader-election algorithm mentioned above we significantly exploit the fact that the leader need not be predetermined, and use our new toolbox to obtain a leader within only $O(D)$ rounds.

\subsection{Comparison with Related Work}
First, we compare our results with previous theoretical work on the additive network model. The work of~\cite{Censor-HillelHLM12} assumes a bound $a$ on the contention in the system, i.e., there are at most $a$ initial inputs in total in the network. The main method for obtaining global broadcast in the above work is random linear network coding, which can be shown to allow an efficient flow of information in the system. However, this is what requires the bound on the contention. Our BIC coding method bares some technical similarity to the approach of random linear network coding, but allows us to refrain from making assumptions on the total information present in the network.

The aforementioned global broadcast algorithm requires $O(D+a+\log{n})$ rounds. While this algorithm can be used to solve many of the problems that we address in this paper, such as electing a leader and computing the maximal input, it would require $O(n)$ rounds, as for these problems it holds that $a$ can be as large as the total number of nodes in the network. In comparison, our $O(D)$-round leader election algorithm is optimal, and our $O(D\log{n}/\log\log{n})$-round algorithm for computing the maximal input is nearly-optimal, as $O(D)$ is a natural lower bound for both problems, even in the message-passing model.

It is important to mention that our algorithms use messages of size $O(\log^3{n})$. While a standard assumption might be that the message size is $O(\log{n})$ bits, this difference is far from rendering our results easy. In comparison, the global broadcast algorithm of~\cite{Censor-HillelHLM12} requires a message size of $O(a\log{n}+\ell)$ bits for inputs of size $\ell$ and contention bounded by $a$. In our setting, we assume $\ell$ fits the message size (say, is logarithmic in $n$), but since $a$ can be as large as $n$, such a message size would be unacceptable. In addition, if we compare our results to algorithms for the much less restricted message-passing setting, it is crucial to note that even unbounded message sizes do not make distributed tasks trivial. For example, it is possible to compute an MIS in general graphs in $O(\log{n})$ rounds even with messages of size $O(1)$~\cite{MetivierRSDZ2010}, but the best known lower bound is $\Omega(\log{\Delta}+\sqrt{\log{n}})$ even with unbounded messages~\cite{KuhnMW2004}. Recently, Barenboim at el. \cite{BarenboimMIS12} showed a randomized MIS algorithm with $O(\log{\Delta}\cdot\sqrt{\log{n}})$ rounds using unbounded messages.

\commlong
In appendix~\ref{app:addrelatedwork},
\commlongend
\commshort
In \cite{TR-XOR},
\commshortend
we overview results that address the same tasks as this paper in the standard radio network model and in the message-passing model. An additive network can be viewed as lying somewhere in between these two models, as it does suffer from collisions, but to a smaller extent. Nevertheless, while our coding methods assist us in overcoming collisions, the additive network model is still subject to the broadcast nature of the transmissions, and therefore it is highly non-trivial to translate algorithms for the message-passing setting that make use of the ability to send different messages on different links concurrently.
The related work overviewed in
\commshort
\cite{TR-XOR},
\commshortend
\commlong
the aforementioned appendix,
\commlongend
include algorithms and lower bounds for various problems in radio networks, such as the wake-up problem~\cite{Farach-ColtonFM2006}, MIS with and without collision detection~\cite{SchneiderW2010b, MoscibrodaW2005} or with multiple channels
\cite{DaumGGKN13},
leader election~\cite{GhaffariH13}, and approximation of local parameters~\cite{LiuH14}, as well as MIS algorithms for message passing systems~\cite{SchneiderW2010,Luby86,AlonBI1986} and lower bounds~\cite{Linial1992,KuhnMW2004}.

\def\ADDITIONALRELATEDWORK{
In the wake-up problem, nodes can communicate only after successfully receiving a message. Farach-Colton et al.~\cite{Farach-ColtonFM2006} show a lower bound of $\Omega(\log^2{n})$ (more precisely, $\Omega(\log{n}\log{(1/\epsilon)})$ for success probability $\epsilon$) for the number of rounds required for solving the wake-up problem in the standard radio network model. Since sending a single message implies solving the wake-up problem, this gives the same lower bound for MIS. This result holds for a single-hop network with half-duplex radios, no collision-detection, adversarial wake-up, and given only an upper bound on the size of the network. They also give an $\Omega(\log\log{n}\log{(1/\epsilon)})$ lower bound in random geometric graphs where nodes are placed uniformly at random in some area $[0,\ell]^2$. The number of rounds is measured starting from the time at which the first node is woken.

Moscibroda and Wattenhofer~\cite{MoscibrodaW2005} show an MIS algorithm that requires $O(\log^2{n})$ rounds w.h.p. for unit disk graphs in the standard radio network model with half-duplex radios under asynchronous wake-up and no collision-detection. For each node, the number of rounds it requires is measured from the time it is woken until the time it produces an output. The complexity of the algorithm is the maximum taken over all nodes of the number of rounds that they require.

When collision detection \emph{is} available, Schneider and Wattenhofer~\cite{SchneiderW2010b} show that $\Theta(\log{n})$ rounds are required and sufficient for computing an MIS, as well as results about coloring and broadcasting.

Schneider and Wattenhofer~\cite{SchneiderW2010} show an MIS algorithm in $O(\log^{*}{n})$ rounds w.h.p in the classic message-passing model for bounded-independence graphs. These are graphs for which the number of independent nodes in any $r$-neighborhood is bounded by some function $f(r)$. A graph is of polynomially bounded-independence if $f(r)$ is polynomial in $r$. This paper also shows $\Delta+1$-coloring and maximal matching in $O(\log^{*}{n})$ rounds w.h.p. This matches the $\Omega(\log^{*}{n})$ lower bound of Linial~\cite{Linial1992} for MIS in the classic message-passing model, which holds for a ring and therefore also for bounded-independence graphs in general.

For general graphs in the classic message-passing model, the best MIS algorithms are due to Luby~\cite{Luby86} and to Alon et al.~\cite{AlonBI1986}. All algorithms require $O(\log{n})$ rounds w.h.p., while the best known lower bound is of $\Omega(\log{\Delta}+\sqrt{\log{n}})$ due to Kuhn et al.~\cite{KuhnMW2004}.

For standard radio networks another model that was studied is when $F$ channels are available, and a node can choose which channel to transmit on or listen to at any given round.
Daum et al.~\cite{DaumGGKN13} showed an MIS algorithm that requires $O(\frac{\log^2{n}}{F})+\tilde{O}(\log{n})$ rounds w.h.p., and use it to build a constant-degree connected dominating set. They then show how to solve leader election and global broadcast in $O(D+\frac{\log^2{n}}{F})+\tilde{O}(\log{n})$ rounds w.h.p., where $D$ is the diameter of the graph, and $k$-message broadcast in $O(D+k+\frac{\log^2{n}}{F})+\tilde{O}(\log{n})$ rounds w.h.p. The assumptions are that the underlying graph has polynomial bounded-independence, the radios are half-duplex, and no collision-detection is available. The authors also show~\cite{DaumGGKN2013full} a lower bound of $O(\frac{\log^2{n}}{F})+\log{n}$ for the number of rounds required for solving MIS in this model.

The best known algorithms for leader election in the standard radio network model are due to~\cite{GhaffariH13}. When collision detection is available, they provide an algorithm that runs in $O((D+\log{n}\log\log{n})\cdot\min\{\log\log{n},\log{\frac{n}{D}}\})$ rounds, and when collision detection is not available they provide an algorithm that runs in $O((D\log{\frac{n}{D}}+\log^3{n})\cdot\min\{\log\log{n},\log{\frac{n}{D}}\})$ rounds.

Liu and Herlihy~\cite{LiuH14} give algorithms for approximating local sum and global sum in radio networks. Their estimation technique employs the well-known \emph{decay-strategy} \cite{Greenberg87}. Our algorithms for approximating the degree and the size of the network are special case of the local-sum and global-sum respectively. In fact, our algorithms can be slightly modified to solve these tasks.
}

\def\ROADMAP{
\subsection{Roadmap}
The paper is organized as follows. In Section~\ref{sec:background} we give the needed background about additive networks and BCC. In Section~\ref{sec:newtools} we enrich the toolbox for computing in additive networks by introducing BIC codes, simulating full-duplex radios by half-duplex radios, and providing our information-overflow detection scheme. Section~\ref{sec:symbreak} contains our symmetry breaking algorithms, such as leader-election, constructing a BFS tree and computing an MIS, and Section~\ref{sec:approx} consist of our two constant-approximation algorithms, one for the degree of each node and one for the size of the network. Finally, Section~\ref{sec:max} contains our asymmetry revealing algorithm for computing the maximal input in the network, and we conclude with a discussion in Section~\ref{sec:dicussion}.
}

\section{Background: Additive Networks and BCC}
\label{sec:background}

\paragraph{The Additive Network Model:}
A \emph{radio network} consists of stations that can transmit and receive information. We address a synchronous system, in which in each round of communication each station can either transmit or listen to other transmissions. This is called the half-duplex mode of operation. Mainly due to theoretical interest, we also consider the full-duplex mode of operation which is considered harder to implement.
We follow the standard abstraction in which stations are modeled as nodes of a graph $G=(V,E)$, with edges connecting nodes that can receive each other's transmissions.

In the standard radio network model, a node $v\in V$ receives a message $m$ in a given round if and only if in that round exactly one of its neighbors transmits, and its transmitted message is $m$. In the half-duplex mode, it also needs to hold that $v$ is listening in that round, and not transmitting.
If none of $v$'s neighbors transmit then $v$ hears silence, and if at least two of
$v$'s neighbors transmit simultaneously then a {\em collision}
occurs at $v$. In both cases, $v$ does not receive any message.

Some networks allow for {\em collision detection},
where the effect at node $v$ of a collision is
different from that of no message being transmitted, i.e., $v$
can distinguish a collision from silence (despite receiving no message in both).
Other networks operate without a collision detection mechanism, i.e., a node cannot
distinguish a collision from silence. It is known that the ability to detect collisions has a significant impact on the computational power of the network~\cite{SchneiderW2010b}.

In contrast, in this paper, we study the \emph{additive network model}, in which a collision of transmissions is not completely lost, but rather is modeled as receiving the XOR of the bit representation of all transmissions.
More specifically, we model a transmission of a message $m$ by node $v$ as a string of bits. A node $v$ that receives a collision of transmissions of messages $\{m_{u} ~|~ u \in \Gamma(v) \}$, receives their bitwise XOR, i.e., receives the message $y = \bigoplus_{u \in \Gamma(v)}{m_u}$. Here $\Gamma(v)$ is the set of neighbors of $v$. Note that the above notation does not distinguish between the case where a node $u$ transmits to that where it does not, because we model the string of a node that does not transmit as all-zero.

The network topology is unknown, and only a polynomial upper bound $N=n^{O(1)}$ is known for the number of nodes $n$. Throughout, we assume that each vertex $v$ has a unique identifier $\id_v$ in the range $[1, \ldots, n^c]$ for some constant $c\geq 1$. The bandwidth is $O(\text{poly}\log{n})$ bits per message.

\paragraph{Bounded-Contention Coding (BCC):}
Bounded-Contention Codes were introduced in~\cite{Censor-HillelHLM12} for the purpose of obtaining fast local and global broadcast in additive networks. Given parameters $M$ and $a$, a BCC code is a set of $M$ codewords such that the XOR of any subset of size at most $a$ is uniquely decodable. As such, BCC codes can leverage situations where the number of initial messages is bounded by some number $a$, and can be used (along with additional mechanisms) for global broadcast in additive networks. Formally, Bounded-Contention Codes are defined as follows.
\begin{definition}
\label{def:BCC}
An $[M, m, a]$-BCC-code is a set $C \subseteq \{0, 1\}^m$ of size $|C| = M$ such that for any two subsets $S_1,S_2 \subseteq C$ (with $S_1 \neq S_2$) of sizes $|S_1|,|S_2| \leq a$ it holds that $\bigoplus S_1 \neq \bigoplus S_2$.
\end{definition}
Simple BCC codes can be constructed using the dual of linear codes. We refer the reader to~\cite{Censor-HillelHLM12} for additional details and a construction of an $[M, a\log{M}, a]$-BCC code for given values of $M$ and $a$.

\section{New Tools}
\label{sec:newtools}

In this section we enrich the toolbox for computing in additive networks with the following three techniques. The first is a method for encoding information such that it can be successfully decoded not when the number of transmitters in limited, but rather when the amount of distinct pieces of information is limited (even if sent by multiple transmitters concurrently). The second technique is a general simulation of any algorithm for full-duplex radios in a setting of half-duplex radios within a logarithmic number of rounds. Finally, we show that we can detect whether the number of distinct messages exceeds the given threshold.

\paragraph{Bounded-Information Codes (BIC).}
Using BCC and randomization allows one to control the number of distinct pieces of information in the neighborhood.
Let $G=(V,E)$ be an $n$-vertex network and assume that all the messages are integers in the range $[0,n]$. We show that for a bandwidth of size $O(\log^3 n)$, one can use randomization and BCC codes to guarantee that every vertex $v$, whose neighbors transmit $O(\log n)$ \emph{distinct} messages (i.e., hence bounded pieces of information) in a given round, can decode \emph{all} messages correctly with high probability (i.e., regardless of the number of transmitting neighbors).\footnote{The definition of the BIC code can be given for any bound $a$ on the number of distinct values. Since we care for messages of polylogarithmic size, we provide the definition for specific bound $a=O(\log n)$.}
Let $C$ be an $[n, \log^2 n, \log n]$-BCC code and $x \in [0,n]$. By the definition of $C$, the codeword $C(x)=[b_1, \ldots, b_k] \in \{0,1\}^k$ contains $k=O(\log^2 n)$ bits.
Due to the XOR operation, co-transmissions of the same value even number of times are cancelled out. To prevent this, we use a randomized code, named hereafter as a \emph{BIC} code (or BIC for short) as defined next.
\begin{definition}
\label{def:BIC}
Let $C$ be an $[n, \log^2 n, \log n]$-BCC code. An \emph{$[n, c\log^3 n, \log n]$-BIC code for $C$} is a random code $C^I$ defined as follows.
The codeword $C^I(x)$ consists of $k'=\lceil c \cdot \log n \rceil$ blocks, for some constant $c \geq 4$, each block is of size $k=O(\log^2 n)$ (the maximal length of a BCC codeword), and the $i$'th block contains $C(x)$ with probability $1/2$ and the zero word otherwise, for every $i \in \{1, \ldots, k'\}$.
\end{definition}
In other words, for vertex $v$ with value $x$, let $m(v)=C^I(x)$ be the message containing the BIC codeword of $x$ and let $m_i(v)$ denote the $i$'th block of $v$'s message. Then, $m_i(v)=C(x)$ with probability $1/2$ and $m_i(v)=0^k$ 
otherwise.
Let $m'(v)=\bigoplus_{u \in \Gamma(v)}m(u)$ be the received message obtained by adding the BIC codewords of $v$'s neighbors. Then the decoding is performed by using BCC to decode each block
$m'_i(v)$ separately for every $i \in \{1, \ldots, k'\}$, and taking a union over all decoded blocks.
%
%
\def\LEMMABICLOGN{
Let $V' \subseteq V$ be a set of transmitting vertices with values $X'=\bigcup_{v \in V'}\Val(v)$ where $|X'|=O(\log n)$.
For every $v \in V'$, let $C^I_v$ be an $[n, c \cdot \log^3 n, \log n]$-BIC code, for constant $c\geq 4$. Let
$m(v)$ be the $C^I_v$ codeword of $\Val(v)$.
Then, the decoding of $\bigoplus_{v \in V'}m(v)$ is successful with probability at least $1-1/n^{c-1}$. 
}
\begin{lemma}
\label{lem:BIC_logn}
\LEMMABICLOGN
\end{lemma}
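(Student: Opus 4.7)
The goal is to show that every distinct value in $X'$ is recovered by the decoding procedure with high probability. The plan proceeds in three steps: a per-block parity argument, an analysis of per-block decodability, and a union bound over missed values.

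First, I would analyze what appears in a single block $i$. For a distinct value $x \in X'$, let $V'_x = \{v \in V' \mid \Val(v) = x\}$. Each $v \in V'_x$ independently includes $C(x)$ in block $i$ with probability $1/2$ and the zero string otherwise. Hence the XOR of the contributions of $V'_x$ to block $i$ equals $C(x)$ if an odd number of vertices in $V'_x$ chose to include their codeword, and equals $0$ otherwise. Since $|V'_x| \geq 1$ and each vertex's choice is an independent $\mathrm{Bernoulli}(1/2)$, an elementary parity computation (the XOR of any number of independent unbiased bits is itself unbiased) shows the odd-parity event has probability exactly $1/2$. Crucially, the events ``value $x$ contributes $C(x)$ to block $i$'' and ``value $x'$ contributes $C(x')$ to block $i$'' are independent for $x \neq x'$ because they depend on disjoint sets of random bits. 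This is precisely what handles the drawback of plain BCC where identical messages are cancelled out.

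Second, I would note that for each block $i$, the XOR received in that block equals $\bigoplus_{x \in S_i} C(x)$ for a random set $S_i \subseteq X'$ obtained by including each element independently with probability $1/2$. Since $|S_i| \leq |X'| \leq \log n$, the definition of an $[n, \log^2 n, \log n]$-BCC code $C$ guarantees that this XOR is uniquely decodable to $S_i$. Thus each block, when decoded individually, reveals exactly the subset $S_i$ of distinct values that happened to participate in it.

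Third, I would take the union $\bigcup_{i=1}^{k'} S_i$ over all $k' = \lceil c \log n \rceil$ blocks and argue this equals $X'$ with high probability. The decoding fails precisely when some $x \in X'$ satisfies $x \notin S_i$ for every $i$, an event of probability $(1/2)^{k'} \leq n^{-c}$ by independence across blocks. A union bound over the at most $O(\log n) \leq n$ distinct values in $X'$ yields a failure probability of at most $n \cdot n^{-c} = n^{-(c-1)}$, matching the statement.

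The step I expect to be the main obstacle is the first: one must carefully justify that the parity of independent coin flips gives exactly $1/2$ inclusion probability for each distinct value, \emph{and} that these events are independent across distinct values so that each $S_i$ behaves as a uniformly random subset of $X'$. The rest is a clean union bound calibrated to the $\lceil c \log n \rceil$ block count.
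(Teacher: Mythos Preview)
Your proposal is correct and essentially identical to the paper's proof: both establish the per-block parity argument (each distinct value survives in a given block with probability exactly $1/2$), invoke the BCC guarantee since each block carries at most $|X'|=O(\log n)$ surviving codewords, and finish with a union bound over the distinct values and the $\lceil c\log n\rceil$ independent blocks. Your explicit remark that inclusion events are independent across distinct values within a block (so that $S_i$ is a uniformly random subset of $X'$) is a harmless embellishment the paper omits, since only the marginal $1/2$ and independence across blocks are actually used.
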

\begin{proof}
For every $x \in X'$, let $V_x=\{ v\in V' ~\mid~ \Val(v)=x\}$ be the set of transmitting vertices in $V'$ with the value $x$.
For $x \in X'$ and $i \in \{1, \ldots, k'\}$, let
$V^i_x=\{ v \in V_x ~\mid~ m_i(v)=C(x)\}$ be the set of vertices $v$ whose $i$'th block $m_i(v)$ contains the codeword $C(x)$. We say that block $i$ is \emph{successful} for value $x \in X'$, if $|V^i_x|$ is odd (hence, the messages of $V_x$ are not cancelled out in this block).
Let $M_i \subseteq X'$ be the set of values for which the $i$'th block is successful, and let $V'_i$ contain one representative vertex with a value in $M_i$.
We first claim that with high probability, every value $x \in X'$ has at least one successful block $i_x \in \{1, \ldots, k'\}$. We then show that the decoding of this $i_x$'th block is successful.
The probability that the $i$'th block is successful for $x$ is $1/2$ for every $i \in \{1, \ldots, k'\}$. By the independence between blocks, the probability that $x$ has no successful block is at most $1/n^c$.
By applying the union bound over all $m\leq n$ distinct messages, we get that with probability at least $1-1/n^{c-1}$, every value $x \in X$ has at least one successful block $i_x$ in the message.
Let $m'=\bigoplus_{v \in V'} m(v)$ be the received message and let $m'_i$ be the $i$'th block of the received message. It then holds that
$m'_i=\bigoplus_{v \in V'}m_i(v)=\bigoplus_{v \in V'_i}m_i(v).$
To see this, observe that the values with even parity in the $i$'th block are cancelled out and the XOR of an odd number of messages with the same value $C(x)$ is simply $C(x)$. Since $m'_i$ corresponds to the XOR of $|V'_i|=O(\log n)$ distinct messages, the claim follows by the properties of the BCC code.
\commshort\qed\commshortend
\end{proof}


In our algorithms, the messages may contain several fields (mostly a constant) each containing a value in $[0,n^c]$ for some constant $c\geq 1$.
To guarantee a proper decoding on each field, the messages are required to be aligned correctly. For example, a message containing $\ell$ fields where the $i$'th field contains $x_i \in [0,n]$ is split evenly into $\ell$ blocks and  all bits are initialized to zero. The BIC codeword of $x_i$, denoted by $C^I(x_i)$, is written at the beginning of the $i$'th block. Hence, when the messages are added up, all codewords of a given block are added up separately.
To avoid cumbersome notation, a multiple-field message is denoted by concatenation of the BIC codewords of each field, e.g., the content of a two-field message containing $x_1$ and $x_2$ is referred as $C^I(x_1)\circ C^I(x_2)$, where formally
the message is divided into two equi-length blocks and $C^I(x_1)$ (resp., $C^I(x_2)$) is written at the beginning of the first (resp., second) block.

\paragraph{From full-duplex to half-duplex.}
The algorithms provided in this paper are mostly concerned with the full-duplex setting.
However, in the additive network model, one can easily simulate a full-duplex protocol $P_f$ by half-duplex protocol $P_h$ with a multiplicative overhead of $O(\log{n})$ rounds with high probability%
\commshort
, as explain in more details in \cite{TR-XOR}.
\commshortend
\commlong
, as explain below.
%

Consider a full-duplex protocol $P_f$ in the additive network model. We will describe a half-duplex simulation of $P_f$, denoted by $P_h$. A round $t$ is said to be \emph{successful} for vertex $v$, if $v$ can decode all messages it receives in this round. With BIC codes, a round is successful if $v$'s neighbors sent $O(\log n)$ \emph{distinct} pieces of information.
\def\LEMMAHALFTOFULL{
Each round of a full-duplex protocol $P_f$ can be simulated by half-duplex radios using  $O(\log n)$ rounds, w.h.p. That is, if $t$ is a successful round for $v$ in $P_f$, then $v$ receives all the pieces of information sent to it in this round in $P_f$, within $O(\log n)$ rounds in $P_h$, w.h.p.
}
\begin{lemma}
\label{lemma:half-to-full}
\LEMMAHALFTOFULL
\end{lemma}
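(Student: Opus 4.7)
The plan is to simulate each round $t$ of $P_f$ by $k' = c\log n$ half-duplex rounds for a sufficiently large constant $c$, using a randomized transmit/listen schedule on top of BIC. In each simulation round, every vertex independently flips a fair coin: with probability $1/2$ it transmits a fresh BIC codeword of the message it would have sent in round $t$ of $P_f$ (redrawing the block-inclusion bits independently each time), and with probability $1/2$ it listens. No coordination among vertices is required, and the underlying data message each transmitter sends is the same across rounds; only the BIC randomness is resampled.

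Fix a vertex $v$ for which round $t$ is successful in $P_f$, so its neighbors collectively hold a set $X_v$ of at most $\log n$ distinct round-$t$ messages. Within any simulated round $j$, the random subset of $\Gamma(v)$ that actually transmits still holds at most $|X_v|\leq \log n$ distinct values, so if $v$ listens in round $j$, Lemma~\ref{lem:BIC_logn} applies and $v$ correctly decodes the received XOR into the set of distinct values sent by transmitting neighbors in round $j$, with probability at least $1-1/n^{c-1}$. To prove that every $x\in X_v$ is eventually delivered to $v$, let $k\geq 1$ be the number of $v$'s neighbors holding $x$. For each round $j$, the event ``$v$ listens in round $j$ and at least one $x$-valued neighbor of $v$ transmits in round $j$'' has probability at least $(1/2)(1-(1/2)^k)\geq 1/4$; these events are independent across the $k'$ rounds, so the probability that none occurs is at most $(3/4)^{c\log n}=n^{-\Omega(c)}$. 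Conditioned on the event occurring in some round $j^*$, the per-round guarantee of Lemma~\ref{lem:BIC_logn} ensures that $x$ appears in $v$'s decoded set for round $j^*$ with probability $\geq 1-1/n^{c-1}$. A union bound over all $v\in V$ and over all values $x$ (at most $n^2$ pairs) yields a total failure probability of $n^{-\Omega(c)}$, which is polynomially small for $c$ large enough, giving the $O(\log n)$-round simulation whp.

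The main delicacy I expect is the dependence, within a single simulation round, between the transmit/listen coin flip of $v$ and the BIC block-inclusion bits used by $v$'s neighbors---both affect whether a given value $x$ is decoded at $v$ in that round. The way to resolve this is to expose randomness in two stages: first reveal only the transmit/listen schedule and the subset of transmitters per round, identifying a ``useful'' round $j^*$ for each pair $(v,x)$; then, using the independence of the BIC block-inclusion bits from these choices, invoke Lemma~\ref{lem:BIC_logn} conditionally within round $j^*$. The remaining ingredients---independence across rounds, independence across vertices, and the final union bound---are standard.
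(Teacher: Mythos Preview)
Your proposal is correct and follows essentially the same approach as the paper: both simulate a full-duplex round by $O(\log n)$ half-duplex rounds in which each vertex independently transmits or listens with probability $1/2$, argue that for each needed piece of information the ``$v$ listens and a relevant neighbor transmits'' event has probability at least $1/4$ per round, and finish with independence across rounds plus a union bound. Your write-up is in fact a bit more careful than the paper's in separating the transmit/listen randomness from the BIC block-inclusion randomness and in explicitly resampling the BIC code each round; the paper handles these points informally.
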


\begin{proof}
Consider round $t$ and let $S_t$ be the set of transmitting vertices in $P_f$. Phase $t$ in the half-duplex protocol $P_h$ consists of $O(\log n)$ rounds. In each such round, every vertex $v \in S_t$ chooses to listens or to transmits (if needed) with equal probability. We show that if round $t$ is successful for vertex $v$ in $P_f$, then phase $t$ is successful for vertex $v$ in $P_h$, with high probability.

Let $V_t$ be the set of vertices for which $t$ was a successful round in $P_f$.
Since each vertex $v \in V_t$ receives $O(\log n)$ distinct messages in round $t$ in $P_f$, it implies that there are $O(n \log n)$ communication links $(u,v) \in E$ that need to be satisfied in round $t$.
In each of the $O(\log n)$ rounds in phase $t$,  $u$ transmits and $v$ listens, with probability $1/4$.  Since the set of transmitting stations in each round is a subset of $S_t$ (i.e., $v \in V$ can successfully decode when all the vertices in $S_t$ transmit), $v$ decodes $u$'s message with probability $1/4$ in each round. Since phase $t$ contains $O(\log n)$ rounds, by a Chernoff bound the probability that $v$ did not decode $u$'s message in any of these rounds is at most $1/n^c$. The claim holds by applying the union bound over all $O(n \log n)$ required links.
\commshort\qed\commshortend
\end{proof}
\commlongend

\paragraph{Information-Overflow Detection.} In the standard radio model, a collision corresponds to the scenario where multiple vertices transmit in the same round to a given mutual neighbor. In an additive network, this may not be a problem, since with BIC codes, the decoding is successful as long as there are $O(\log n)$ \emph{distinct} pieces of information in a given neighborhood. In this section, we describe a scheme for detecting an event of information-overflow.
Our scheme is adapted from the contention estimation scheme of~\cite{Censor-HillelHLM12}, designed for the setting of detecting whether there are more than a certain number of initial messages throughout the network. In our setting, the nodes generate values by themselves, and we will later wish to use the fact that we can detect whether too many different values were generated.
The key observation within this context, is that using a BIC code with a doubled information-limit allows one to detect failings with high probability. To see this, assume an information bound $K=c\log{n}$  for constant $c\geq 1$ and consider an $[n, 2K\log n, 2K]$-BCC code $C$.
The BIC code $C^I$ based on $C$ supports $2K$ distinct messages.
\commshort
Throughout, because of space considerations, some of the proofs are omitted.
However, all the proofs are given in the full version \cite{TR-XOR}.
\commshortend

\begin{lemma}
\label{cl:coll_det}
With high probability, either it is detected that the number of distinct values exceeds $K$, or each value $w$ is decoded successfully.
\end{lemma}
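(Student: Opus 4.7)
The plan is to split the analysis by the true number $K'$ of distinct values transmitted by neighbors of the receiver, and to fix the detection rule as follows: the receiver BCC-decodes each of the $\Theta(\log n)$ blocks and raises the information-overflow flag if either (i) some block's received XOR does not lie in the image of BCC-encoded subsets of size at most $2K$, or (ii) the union of the decoded subsets across all blocks contains more than $K$ distinct values. I will show that if $K' \le K$ then neither rule triggers and every $w$ is recovered, while if $K' > K$ at least one rule triggers w.h.p.

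For $K' \le K$, I would essentially reuse the proof of Lemma~\ref{lem:BIC_logn}, only replacing the base BCC parameter by $2K$. Since $K' \le K \le 2K$, every block has at most $2K$ distinct contributors, so each block's BCC-decoding returns exactly the set of values contributing to it. Independence of the block-inclusion coins gives that every $w$ appears with odd multiplicity in at least one block with probability $1-1/2^{\Omega(\log n)}$, and a union bound over the $K' \le n$ values yields success with probability $1-1/n^{c-1}$. Taking the union of per-block decodings recovers all $K' \le K$ values, so rule (ii) does not fire and rule (i) cannot fire either.

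For $K' > K$, I split into two sub-cases. If $K < K' \le 2K$, every block still has at most $2K$ distinct contributors, so BCC-decoding per block is exact; by the same successful-block argument, w.h.p.\ every transmitted value shows up in the union of decoded sets, which then has size $K' > K$ and triggers rule (ii). If $K' > 2K$, the number of distinct contributors to a given block is $\mathrm{Bin}(K',1/2)$ with mean exceeding $K$, so a Chernoff estimate plus a union bound over the $\Theta(\log n)$ blocks shows that some block has strictly more than $2K$ distinct contributors with probability $1-1/n^{\Omega(c)}$. On that block the received XOR cannot equal the BCC-encoding of the true (size-$>2K$) contributor set in the ``decodable'' range, so rule (i) stands to fire.

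The main obstacle is the remaining possibility in the last sub-case that, even though a block truly has more than $2K$ contributors, its received XOR coincidentally matches the BCC-encoding of some spurious subset of size $\le 2K$, thereby slipping past rule (i) at the block level. I would handle this by observing that even when such a spurious decoding happens, the ``lightly loaded'' blocks (of which there are many w.h.p.) still decode their true contributors correctly, so the union across blocks w.h.p.\ contains more than $K$ distinct values and rule (ii) catches the overflow; alternatively, one can invoke the distinguishing guarantee of the BCC construction as adapted in the contention-estimation scheme of~\cite{Censor-HillelHLM12} to directly bound the probability of block-level spurious decoding by $1/n^{\Omega(c)}$. Combining the two cases via a final union bound yields the stated high-probability guarantee.
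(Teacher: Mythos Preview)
Your case split on the true count $K'$ of distinct values has a real gap in the regime $K' \gg 2K$. Having more than $2K$ contributors in a block does not by itself make rule~(i) fire, as you yourself acknowledge: the block's XOR may still coincide with the BCC-encoding of some set of size at most $2K$. Your fallback~(a) then fails once, say, $K' \ge 6K$: each block has mean $K'/2 \ge 3K$ contributors, and a Chernoff bound in the \emph{opposite} direction shows that w.h.p.\ \emph{every} one of the $\Theta(\log n)$ blocks has more than $2K$ contributors, so there are no lightly loaded blocks at all and the union of correctly decoded sets may well be empty. Fallback~(b) is not supported by Definition~\ref{def:BCC}, which only guarantees that XORs of subsets of size at most $2K$ are pairwise distinct; it says nothing about whether the XOR of a larger subset can collide with one of these, and for a generic (or the dual-linear) construction such collisions do occur. (Incidentally, your initial Chernoff step is also off for $K'$ just above $2K$: with mean near $K$, the probability that a block exceeds $2K$ is exponentially small in $K$, not close to $1$.)

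The paper sidesteps any case analysis on $K'$. It fixes a transmitted value $w$ and a block $i$, and isolates $w$'s parity bit in that block from all other randomness. Let $X_i$ be the XOR contributed to block $i$ by the values other than $w$. If BCC-decoding of $X_i$ is illegal or returns more than $K$ elements, then with probability $1/2$ ($w$ has even parity) the received block equals $X_i$ itself and the overflow flag is raised on that block. Otherwise the decoding returns some set $Q$ with $|Q|\le K$---possibly spurious---and then with probability $1/2$ the received block equals $\big(\bigoplus_{q\in Q}C(q)\big)\oplus C(w)$, a XOR of at most $K+1 \le 2K$ codewords, which BCC decodes uniquely and recovers $w$. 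Either way the good event for the pair $(w,i)$ has probability at least $1/2$; independence across the $c\log n$ blocks and a union bound over the transmitted values finish. The doubled BCC parameter $2K$ is used precisely here, to absorb the ``$+1$'' from $w$ on top of a possibly spurious $Q$ of size at most $K$; your argument never exploits this slack, which is why the large-$K'$ regime does not close.
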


\commlong
\begin{proof}
Fix a received codeword, and consider a fixed value $w$ that is sent. For each block $i$, let $X_i$ be the values $z \neq w$ whose parity in the $i$'th block is odd.
If $X_i$ is decodable into more than $K$ values, or if its decoding is illegal then this is detected if the parity of $w$ in that block is even. This happens with probability $1/2$.
Else, $X_i$ is decodable into a set $Q$ of size less than $K$. We claim that if the parity of $w$ is odd, an event which holds with probability $1/2$, then $w$ is successfully decoded regardless of whether $X_i$ is correctly decoded. The reason is that the XOR of $Q$ and $w$ decodes uniquely because it contains at most $K+1$ values and the BCC code supports $2K$ distinct values. This holds even if $Q$ is not the correct set of values included in $X_i$. To summarize, for each value $w$ and for each block $i$, with probability at least $1/2$ either $w$ is decoded or a failure is detected. Since there are $c \cdot \log n$ blocks, the probability the none of these two events happen is at most $1/n^c$. The correctness of the scheme holds by applying the union bound over all $O(n)$ values.
\end{proof}
\commlongend

\section{Symmetry Breaking Tasks}
\label{sec:symbreak}

In this section we show how to solve symmetry breaking tasks efficiently in additive networks. As a key example, we focus on the problem of
leader election.
\commshort
In \cite{TR-XOR},
\commshortend
\commlong
In Appendix \ref{subsec:more_sbt}
\commlongend
we consider additional tasks that involve symmetry breaking such as computing a BFS tree, computing an MIS and finding a proper vertex coloring.
A key ingredient in many of our algorithms is having the vertices choose random variables according to some carefully chosen probabilities, which, at a high level, are used to reduce the amount of information that is sent throughout the network. We refer to this as the $\SL$ (Select Level) function and describe it as follows.

The $\SL$ function does not require communication, and only produces two local random values, an $r$-value and an $z$-value, that can be considered as primary and secondary values for breaking the symmetry between the vertices. The $r$-value is defined by letting $r=j$ with probability of $2^{-j}$, and the $z$-value, $z$, is sampled uniformly at random from the set $\{1,...,2^{8r}\}$.

Note that $\SL$ does not require the knowledge of the number of vertices $n$. We next show that the maximum value of $r(v)$ is concentrated around $O(\log n)$ and that not to many vertices collide on the maximum value. Let $\jSL=\max\{r(v) \mid v\in V \}$ and $\maxSL=\{v\in V \mid r(v)=\jSL\}$.
\def\LEMMASLFUNC{
With high probability, it holds that
(a)  $\jSL \leq 3\log n+1$; (b) $|\maxSL| \leq 2 \log n$; and (c) $z(v)\neq z(v')$ for every $v, v' \in \maxSL$.
}
\begin{lemma}
\label{lem:sl}
\LEMMASLFUNC
\end{lemma}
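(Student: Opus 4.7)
The plan is to establish the three claims separately using tail bounds, union bounds, and one key memoryless-property trick for the geometric $r$-values.

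For (a), I note that by a geometric tail, $\Pr[r(v) \geq j] = \sum_{i \geq j} 2^{-i} = 2^{1-j}$. Plugging in $j = 3\log n + 2$ gives $\Pr[r(v) \geq 3\log n + 2] \leq 1/(2n^3)$, and a union bound over the $n$ vertices yields $\jSL \leq 3\log n + 1$ with probability at least $1 - 1/(2n^2)$.

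For (b), define $N_j = |\{v : r(v) = j\}|$ and $N_{\geq j} = |\{v : r(v) \geq j\}|$. The key observation is that $r(v)$ is geometric and hence memoryless, so $\Pr[r(v) = j \mid r(v) \geq j] = 1/2$, independently across $v$. Conditioning on $N_{\geq j} = m$, the probability that all $m$ of these vertices stay exactly at level $j$ (i.e., $N_{>j} = 0$) is thus $(1/2)^m$. Now the event $\{\jSL = j,\ N_j > 2\log n\}$ is exactly $\{N_{\geq j} > 2\log n,\ N_{>j} = 0\}$, so averaging over $m \geq 2\log n + 1$ gives $\Pr[\jSL = j,\ N_j > 2\log n] \leq (1/2)^{2\log n + 1} = 1/(2n^2)$. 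Summing over the $O(\log n)$ values of $j$ admitted by (a) yields $\Pr[|\maxSL| > 2\log n] = O(\log n / n^2)$.

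For (c), I first ensure that $\jSL$ is large enough for the $z$-range to be very wide. Since $\Pr[\jSL \leq \log n - 2\log\log n] = (1 - \log^2 n / n)^n \leq e^{-\log^2 n}$, with high probability $\jSL \geq \log n - 2\log\log n$, whence $2^{8\jSL} \geq n^8/\log^{16}n$. Conditioning on this event together with (b), for any two fixed distinct $v, v' \in \maxSL$ the independent uniform draws satisfy $\Pr[z(v) = z(v')] \leq 1/2^{8\jSL} \leq \log^{16} n / n^8$; a union bound over the at most $\binom{2\log n}{2}$ pairs in $\maxSL$ then bounds the collision probability by $O(\log^{18} n / n^8)$, which is w.h.p. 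A final union bound combines the failure events of (a), (b), and this step.

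The only conceptually subtle step is the memoryless observation in (b); once it is in place, everything else reduces to routine geometric tail estimates. The mild care needed is to order the arguments so that (b) uses (a), and (c) uses both (a) and (b), all combined by a single final union bound.
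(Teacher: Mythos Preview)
Your proof is correct and follows essentially the same approach as the paper: geometric tail plus union bound for (a), the memoryless ``coin-flip'' interpretation for (b), and a lower bound on $\jSL$ combined with a birthday-type union bound for (c). The only cosmetic differences are that in (b) the paper argues informally without decomposing over $j$ (obtaining $1/n^2$ rather than your $O(\log n/n^2)$), and in (c) the paper uses a Chernoff bound to get $\jSL \geq (\log n)/2$ and then union-bounds over all $\binom{n}{2}$ pairs, whereas you compute $\jSL \geq \log n - 2\log\log n$ directly and union-bound over the $\binom{2\log n}{2}$ pairs guaranteed by (b); neither difference is substantive.
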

\begin{proof}
Let $P_v=\Prob(r(v)\geq 3\log n+1)$.
Then, by definition, $P_v=\sum_{i=3\log n+1}^{\infty}2^{-i}
\commshort\\\commshortend =1/n^3$.
By applying the union bound over all vertices in $S$, we get that with probability at least $1-1/n^2$,
$r(v)\leq 3\log n+1$, for every $v\in S$,
as needed for Part (a).
\commlong \par \commlongend
We now turn to bound the cardinality of $\maxSL$.
The random choice of $r(v)$ can be viewed as a random process in which each vertex flips a coin with probability $1/2$ and proceeds as long as it gets ``head".
The value of $r(v)$ corresponds to the first time when it gets a ``tail''.
We now claim that the probability that $|\maxSL|>2 \log n$ is very small.
This holds since the probability that
all of
$2\log n$ coin flips are ``tails''
is exactly $2^{-2\log n}$ which is less than the probability that $|\maxSL|>2 \log n$ and
none of the vertices in $\maxSL$ succeeded in getting another head (and hence in having a larger $r$-value).
Hence,
the probability that $|\maxSL| \leq 2 \log n$
is at least  $1-2^{-2\log n}=1-1/n^2$, as needed for Part (b).

Finally, consider Part (c).
It is sufficient to show that the $z$-values (of vertices of $\maxSL$) are sampled from a sufficient large range.
Note that, the size of this range is $2^{8\cdot\jSL}$.
We later show that $\jSL\geq\log n/2$ with high probability.
This implies that the range size (of the $z$-values) is at least $n^4$ with high probability.
Assume that $\jSL\geq\log n/2$, then the probability  that $z(v)=z(v')$, for any pair $v,v'\in\maxSL$ is at most $1/n^4$.
Applying the union bound over all pairs in $\maxSL$ gives the claim, since $|\maxSL|\leq n$.

In the remaining, we show that indeed,  $\jSL\geq\log n/2$  with high probability.
For every $v\in V$, let $x_v$ be an indicator variable for the event that $r(v)\geq \log n/2$,
i.e., $x_v=1$, if $r(v)\geq \log n/2$ and $x_v=0$, otherwise.
Let $X=\sum_{v\in V} x_v~$.
Note that, the probability that $X\geq 1$ is the same as the probability that $\jSL\geq\log n/2$.
In addition, $\Pr[x_v=1]=2^{-(\log n/2)+1}\geq 2^{-\log n/2}$ and hence (by the linearity of expectation)
$\EXP[X]=\sum_{v\in V} \Pr[x_v=1] = \sqrt{n}$.
By Chernoff bound, the probability that $X= 0$ is exponentially small.
Hence, $X\geq 1$ and so $\jSL\geq\log n/2$ with the high probability.
Part (c) holds.
\commshort\qed\commshortend
\end{proof}

\commshort
\subsection{Leader Election}
\commshortend

\commlong
\subsection{Leader Election}
\commlongend

A Leader-Election protocol is a distributed algorithm run by any vertex such that each node eventually decides whether it is a
leader or not, subject to the constraint that there is exactly one leader. Moreover, at the end of the algorithm all vertices know the $\SL$ function values of the leader.

\commlong
\subsubsection{Leader Election in a single-hop networks}
\commlongend

We first describe a two-round leader election protocol for single-hop networks.
Let $C^I$ be an $[N, O(\log^3 N), O(\log N)]$-BIC code
sampled uniformly at random from the distribution of all random codes that are based on a particular $[N, O(\log^2 N), O(\log N)]$-BCC code $C$ (which is used by all vertices).
First, the vertices apply
the $\SL$ function to compute $r(v),z(v)$.
To do that, in the first communication round, every vertex $v$ transmits $C^I(r(v))$. Since with high probability, by Lemma \ref{lem:sl}(a), $\jSL \leq 2\log n$, the information is bounded and by Claim~\ref{def:BIC}, each vertex can compute $\maxSL$ w.h.p. In the second communication round, every vertex $v$ with $r(v)=\jSL$, transmits $C^I(z(v))$. That is, in the second phase only the vertices of $\maxSL$ transmit the codeword of their $z'$s value.
Since by Lemma \ref{lem:sl}(b), with high probability, $|\maxSL|=O(\log n)$, and by Claim~\ref{def:BIC} again, the $z$-values of all vertices in $\maxSL$ are known to every vertex in the network w.h.p. Finally, the leader is the vertex $v^* \in \maxSL$ with the largest $z$-value, i.e., $z(v^*)=\max_{v' \in \maxSL}z(v')$.
\commshort
In \cite{TR-XOR},
we consider the general case of electing a leader in a network $G$ with diameter $D$, and also show how it implies a 2-approximation of the diameter as a byproduct.
\commshortend

\commlong
\subsubsection{Leader Election in General Networks}
\label{app:leader}

In this section, we consider the general case of network $G$ with diameter $D$. We present Algorithm $\LeaderElection$ that elects a leader within $O(D)$ rounds w.h.p. To enable the termination of the protocol, the vertices compute an approximation for $D$ throughout the course of the leader election process, thereby obtaining a 2-approximation of the diameter is a byproduct of this algorithm.

Initially, every vertex $v$ computes the random values $(r(v),z(v))$ as defined for the single-hop case. In the first two communication rounds, every vertex $v$ transmits the codeword of the maximum $r$-value it has observed so far, and in the third round, if the maximum received $r$-value equals $r(v)$, then it transmits $C^I(r(v)) \circ C^I(z(v))$.

From now on, the algorithm proceeds in \emph{stages}, each stage $t$, consists of four communication rounds, $(t,i)$ for $i \in \{0,1,2,3\}$. The following notation is useful.
For vertex $v$ in stage $t$, let $r_t(v)$ be the maximum $r$-value that $v$ has observed so-far (thus $r_0(v)=r(v)$) and let $z_t(v)$ be the corresponding $z$-value (if received without collisions).
Let $d_t(v)$ be the distance to the vertex $v^*_t$ satisfying that $r_t(v)=r(v^*_t)$ and $z_t(v)=z(v^*_t)$. Hence, the vertex $v^*_t$ can be thought of as the local maximum in the $t$-neighborhood of $v$. Finally, let $\maxd_t(v)$ be the maximum  distance from $v^*_t$, observed by $v$.
To avoid cumbersome notation, we override notation and write $C^I$ whenever a BIC code is in use. Yet, it is important to keep in mind that each application of BIC code, requires an independent sampling of such an instance.

In round $(t,0)$, every vertex transmits the codeword of the maximum $r$-value it has observed so-far, i.e., $C^I(r_t(v))$.
In round $(t,1)$, it transmits $C^I(r_t(v))\circ  C^I(z_t(v))$, if its $r_t(v)$ value is the maximal received.
In round $(t,2)$, it transmits $C^I(r_t(v))\circ C^I(z_t(v)) \circ  C^I(d_t(v))$ if its $d_t(v)$ value is finite. Finally, in round $(t,3)$, every vertex transmits $C^I(r_t(v))\circ  C^I(z_t(v)) \circ C^I(\maxd_t(v))$, if its $\maxd_t(v)$ value is finite.

A vertex that has not receive an update value for none of the fields $\maxd_t(v)$ or $r_t(v)$ for more than $2$ stages, terminates.
This completes the description of the protocol. For a detailed pseudocode, see Figure \ref{fig:lecode}.

\begin{algorithm}
Initially: $(r_0(v),z_0(v))\gets \SL$,
TERMINATE=FLASE\\
Send $C^I(r_0(v))$\\
$r'_0 \leftarrow$ the maximum received value in this round\\
If $r'_0=r_0(v)$ send $C^I(r_0(v)) \circ C^I(z_0(v))$\\
Else, $r_1(v) \leftarrow r'_0$\\
$t \gets t+1$\\
While TERMINATE=FALSE:\\
\quad Round $(t,0)$:\\
\quad\quad Send $C^I(r_t(v))$\\
\quad\quad $r'_t \leftarrow $ the maximum received value\\

\quad Round $(t,1)$:\\
\quad\quad If $r'_t>r_t(v)$ then $r_{t}(v) \leftarrow r'_t$, $d_{t}(v)\leftarrow\infty$, and $\maxd_{t}(v)\leftarrow\infty$\\
\quad\quad Else, send $C^I(r_t(v)) \circ C^I(z_t(v))$\\
\quad\quad If the received $r$-value is $r_t(v)$ then $z_t(v) \leftarrow$ the maximum received $z$-value\\

\quad Round $(t,2)$:\\
\quad\quad If $d_{t}(v)\neq\infty$, send $C^I(r_t(v)) \circ C^I(z_t(v)) \circ C^I(d_t(v))$\\
\quad\quad $r'_t,z'_t,d'_t \leftarrow$ the received values\\
\quad\quad If $r_t(v)=r'_t$ and $z_t(v)=z'_t$ then $d_t(v) \leftarrow d'_t+1$ and $\maxd_{t}(v) \leftarrow d_t(v)$\\

\quad Round $(t,3)$:\\
\quad\quad If $\maxd_{t}(v)\neq\infty$ then send $C^I(r_t(v)) \circ C^I(z_t(v)) \circ C^I(\maxd_t(v))$\\
\quad\quad $r'_t,z'_t,d''_t \leftarrow $ the received values\\
\quad\quad If $r_t(v)=r'_t, z_t(v)=z'_t$ and $\maxd_t(v)<d''_t$ then $\maxd_t(v)\leftarrow d''_t$\\

\quad $t \gets t+1$\\
\quad If $(z_t(v),r_{t}(v))=(z_{t-1}(v),r_{t-1}(v))=(z_{t-2}(v),r_{t-2}(v))$ and $\maxd_t(v)=\maxd_{t-1}(v)=\maxd_{t-2}(v)$\\
\quad\quad\quad\quad then TERMINATE=TRUE\\

    \caption{{\LeaderElection } protocol for vertex $v$.}
    \label{fig:lecode}
\end{algorithm}


\paragraph{Analysis.}
%
Let $v^*$ be the vertex with maximum $r(v^*)$ and $z(v^*)$ values. That is, $v^*$ is the designated leader in the network (global maximum).
Throughout the analysis, we show that with high probability every vertex terminates within $O(D)$ rounds and that  the final leader $\ell(v)$ of every vertex $v$ is the leader $v^*$. It is convenient to analyze the process on the BFS tree rooted at the leader $v^*$. Let $L_i=\{v \in V ~\mid~ \dist(v,v^*,G)=i\}$ be the vertices at distance $i$ from $v^*$, and $\widetilde{L}_i=\bigcup_{j \leq i}L_i$ be the vertices up to distance $i$ from $v^*$ in $G$. For every vertex $v$, let $D_v=\max_{u \in V}\dist(u,v,G)$ denote its local diameter.
We begin by showing the following.
\begin{claim}
\label{cl:le_help}
With high probability, for every stage $t \in \{1, \ldots, D_{v*}\}$, it holds that:\\
(a) after round $(t,0)$, $r_t(v)=r(v^*)$ for every $v \in \widetilde{L}_{t+2}$;\\
(b) after round $(t,1)$, $z_t(v)=z(v^*)$ for every $v \in \widetilde{L}_{t+1}$;\\
(c) after round $(t,2)$, $d_t(v)=\dist(v,v^*)$  and $\maxd_t(v)=\max_{u \in \widetilde{L}_{t}}\dist(u,v^*)$ for every $v \in \widetilde{L}_{t}$.
\end{claim}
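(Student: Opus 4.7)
The proof proceeds by induction on $t$, establishing the three parts in order within each stage, since (b) depends on (a) and (c) depends on both. The central tool is Lemma~\ref{lem:BIC_logn}: whenever the number of distinct values transmitted in a given round is $O(\log n)$, every receiver correctly decodes the full multiset w.h.p. Combined with Lemma~\ref{lem:sl}, which provides that only $O(\log n)$ distinct $r$-values occur globally and that the $z$-values within $\maxSL$ are pairwise distinct, this guarantees that the maxima required by the protocol are reliably computed in every round. The base case $t=1$ will follow from a direct analysis of the two initialization rounds together with round $(1,0)$, while the inductive step handles the general case below.

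For part (a), the inductive hypothesis gives $r_{t-1}(v)=r(v^*)$ for every $v\in\widetilde{L}_{t+1}$ at the start of stage $t$; note that the $r$-value never decreases, since updates only replace $r_t$ by something larger. In round $(t,0)$ every vertex broadcasts $C^I(r_t(v))$; by Lemma~\ref{lem:sl}(a) at most $3\log n+1$ distinct values appear anywhere in the network, so BIC decoding succeeds w.h.p., and every vertex in $\widetilde{L}_{t+2}$ (the one-hop neighborhood of $\widetilde{L}_{t+1}$) sees $r(v^*)$ as its maximum and updates $r_t$ accordingly. For part (b), part (a) ensures that only vertices with $r_t=r(v^*)$ pass the ``max-received'' condition to transmit in round $(t,1)$, since $r(v^*)$ is the global maximum. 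All such transmitters send the same $r$-value but potentially different $z$-values drawn from $\maxSL$; by Lemma~\ref{lem:sl}(b)--(c) these are pairwise distinct and number at most $2\log n$, so BIC decoding on $C^I(r_t)\circ C^I(z_t)$ succeeds, and the maximum received $z$-value is exactly $z(v^*)$, extending knowledge of $z(v^*)$ by one hop to $\widetilde{L}_{t+1}$. For part (c) an analogous argument applies: in round $(t,2)$ every transmitter that sends a finite $d$ has the common prefix $(r_t,z_t)=(r(v^*),z(v^*))$ by (a) and (b); among any fixed neighborhood the distinct $d$-values differ by at most a small constant, so BIC decoding succeeds, and the update rule yields $d_t(v)=\dist(v,v^*)$ for every $v\in\widetilde{L}_t$. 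Round $(t,3)$ then propagates local maxima of distances the same way, giving $\maxd_t(v)=\max_{u\in\widetilde{L}_t}\dist(u,v^*)$.

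The main obstacle is twofold. First, one must verify that the information-overflow condition of BIC never fails in any neighborhood across all $O(D)$ stages and the four rounds within each; this is handled by a union bound combined with the cardinality bounds of Lemma~\ref{lem:sl} and a uniform application of Lemma~\ref{lem:BIC_logn}. Second, one must argue that the reset $d_t(v),\maxd_t(v)\leftarrow\infty$ triggered in round $(t,1)$ whenever a vertex's $r$-value is superseded does not undo progress for vertices in $\widetilde{L}_t$: once a vertex has adopted $r(v^*)$, no larger $r$-value can ever be received, so the reset is not triggered thereafter, and the $d$ and $\maxd$ values propagate monotonically outward from $v^*$ along the BFS tree, as required.
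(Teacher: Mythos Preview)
Your proposal is correct and follows essentially the same approach as the paper: induction on $t$, with each of (a)--(c) established in turn by invoking the bounded number of distinct values (Lemma~\ref{lem:sl}) so that BIC decoding succeeds (Lemma~\ref{lem:BIC_logn}), and then reading off the propagation of $r(v^*)$, $z(v^*)$, and distances layer by layer. Your explicit remarks about the union bound over all stages and about the reset of $d_t,\maxd_t$ never being triggered once $r(v^*)$ is adopted are points the paper leaves implicit. One small imprecision: your sentence ``only vertices with $r_t=r(v^*)$ pass the max-received condition'' is not true globally (vertices far from $v^*$ may still transmit a local maximum), but it \emph{is} true for all neighbors of $\widetilde{L}_{t+1}$ by part~(a), which is all you need; the paper makes this locality explicit by restricting attention to $L_{t}\cup L_{t+1}\cup L_{t+2}$. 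Similarly, for part~(c) the paper argues more sharply that within the relevant neighborhood only $L_{t-1}$ has a finite $d$-value and hence only one distance is transmitted, whereas you argue the weaker (but sufficient) fact that at most a constant number of distinct $d$-values appear.
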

\begin{proof}
We prove this by induction on $t$. For the base of the induction, consider $t=1$. In the first communication round, the vertices transmit their $r$-value.  Since there are $O(\log n)$ distinct $r$-values, by Claim~\ref{def:BIC} there are no collisions when transmitting $C^I(r(v))$. Hence, the vertices in $L_1$ (neighbors of $v^*$) know $r(v*)$. In the second communication round, all the vertices of $L_1$ transmit  $C^I(r(v^*))$. Since there are no collisions on $r$-values, the vertices of $L_2$ know $r(v^*)$. In the third round, vertices $v$ whose $r$-value is the maximum $r$-value they observed so far, transmit $C^I(r(v)) \circ C^I(z(v))$.
Hence, the only vertices in $\widetilde{L}_2$ that transmit in the third round, are those that obtain the same $r$-value as $v^*$. By Lemma \ref{lem:sl}(b), there are $O(\log n)$ such vertices and hence there are no collisions at the vertices of $L_1$ and they successfully decode the values of the leader $r(v^*), z(v^*)$. After round $(1,0)$, the vertices of $L_2$ transmit $r(v^*)$
to $L_3$ and since there are no collisions the vertices of $L_3$ know $r(v^*)$. Part (a) of the induction base holds.
In round $(1,1)$, the only vertices in $L_1,L_2,L_3$ that transmit are those that have a $z$-value that corresponds to $r(v^*)$. Hence, by Lemma \ref{lem:sl}(b), there are $O(\log n)$ distinct $z$-values, implying that the vertices of $L_2$ successfully receive $z(v^*)$ from $L_1$. Part (b) of the induction base holds.
In the beginning of round $(1,2)$, the vertices in $L_1,L_2$ know  $(r(v^*),z(v^*))$ but they do not know their distance from $v^*$. Hence, the only transmitting vertex in $\widetilde{L}_2$ is $v^*$, implying that the vertices in $L_1$ successfully receive $((r(v^*),z(v^*),0)$ and hence can set $d_1(v)=1$.
Part (c) of the induction base holds.

Assume the claim holds up to stage $t-1$ and consider stage $t$. By the induction assumption Part (a) for $t-1$, it holds that the vertices in $\widetilde{L}_{t+1}$ know $r(v^*)$.

In round $(t,0)$, the vertices of $L_{t+1}$ transmit $r(v^*)$ and since there are no collisions on this value, the vertices of $L_{t+2}$ know $r(v^*)$ and Part (a) holds.

In round $(t,1)$, the vertices in $L_{t},L_{t+1},L_{t+2}$ transmit $r(v^*)$ and some $z$-value that corresponds to it. Hence, by  Lemma~\ref{lem:sl}(b), as there are $O(\log n)$ distinct $z$-values that correspond to $r(v^*)$, it holds that there are no collisions for the vertices in $L_{t+1}$ and they successfully receive $(r(v^*),z(v^*))$ from the vertices of $L_t$ and Part (b) holds.

In the beginning of round $(t,2)$, the vertices of $L'=L_{t-1} \cup L_{t}\cup L_{t+1}$ know $(r(v^*),z(v^*))$. Hence, the only vertices in $L'$ that transmit are those that have finite distance from $v^*$, namely, $L_{t-1}$. Hence the vertices in $L_{t}$ successfully receive the message $(r(v^*),z(v^*),t-1)$ and by increasing the distance by one, they have the correct distance. Part (c) holds.
%
%
%
\end{proof}

\begin{claim}
\label{cl:le_help2}
With high probability, the following hold:\\
(a) The vertices of $L_{D_{v^*}}$ terminate in stage $D_{v^*}+2$ with the correct values. \\
(b) For every $t \in \{1, \ldots, D_{v^*}\}$, every vertex $v\in L_{D_{v^*}-t}$ knows $D_{v^*}$ after round $(D_{v^*}-t+1,3)$ and terminates in stage $D_{v^*}-t+3$.
\end{claim}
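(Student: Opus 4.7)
The plan is to build on Claim~\ref{cl:le_help}, which already gives us a full description of the state at each vertex through stage $D_{v^*}$, and then argue two things: (i) the configuration of leaf vertices stabilizes immediately after stage $D_{v^*}$, and (ii) the value $\maxd = D_{v^*}$ propagates back toward $v^*$ by exactly one hop per stage through the round-$(t,3)$ transmissions, so that once it reaches $v$ and stabilizes, the three-stage stability check fires two stages later.

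For part (a), by Claim~\ref{cl:le_help} every $v \in L_{D_{v^*}}$ satisfies $r_t(v)=r(v^*)$, $z_t(v)=z(v^*)$, $d_t(v)=D_{v^*}$, and $\maxd_t(v)=D_{v^*}$ by the end of round $(D_{v^*},2)$. I would then observe that because $D_{v^*}$ is the eccentricity of $v^*$, no neighbor can ever transmit a strictly larger $\maxd$, and no larger $(r,z)$ pair can appear, so the triple $(r_t(v),z_t(v),\maxd_t(v))$ is frozen from stage $D_{v^*}$ onward. The termination predicate (three consecutive identical stages) is therefore met at the end of stage $D_{v^*}+2$, yielding (a).

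For part (b), I would induct on $t$ to show that after round $(D_{v^*}+t-1,3)$ (which is the natural reading of the indexing; I would also remark if the stated indexing is off by the straightforward amount) every $v \in L_{D_{v^*}-t}$ has $\maxd_t(v)=D_{v^*}$, and that this value is stable from that point on, so termination occurs two stages later. The base case $t=1$ uses the round $(D_{v^*},3)$ broadcast: each leaf sends $(C^I(r(v^*)),C^I(z(v^*)),C^I(D_{v^*}))$, and by Claim~\ref{cl:le_help}(a)(b) all $L_{D_{v^*}-1}$ vertices already agree on $(r(v^*),z(v^*))$, so the only distinct field to decode is $\maxd$; by Lemma~\ref{lem:sl}(b) and Claim~\ref{cl:le_help}(c) the number of distinct $\maxd$-triples among transmitting neighbors is $O(\log n)$, so Lemma~\ref{lem:BIC_logn} guarantees successful decoding w.h.p., and each $v \in L_{D_{v^*}-1}$ updates its $\maxd$ to $D_{v^*}$. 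The inductive step repeats this one-hop-per-stage argument: once $L_{D_{v^*}-t+1}$ holds the correct $\maxd = D_{v^*}$, the round $(D_{v^*}+t-1,3)$ transmissions push it to $L_{D_{v^*}-t}$. Since $D_{v^*}$ is maximal it can never be overwritten, yielding the required stability and triggering termination.

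The main obstacle is controlling the amount of information present at each receiver across all $O(D)$ propagation stages. Each round-$3$ transmission carries the three-field payload $C^I(r)\circ C^I(z)\circ C^I(\maxd)$, and a priori a receiver could hear many distinct payloads; the key is to leverage the already-established fact that within the relevant $O(D)$-hop region of $v^*$ essentially all vertices agree on $(r(v^*),z(v^*))$, reducing the distinct-information count on each field to $O(\log n)$ (via Lemma~\ref{lem:sl}(b) for $(r,z)$ and via the one-value-per-stage BFS structure for $\maxd$). A union bound over $O(D)$ stages, the four sub-rounds, and the $n$ vertices, combined with the BIC success probability from Lemma~\ref{lem:BIC_logn}, then upgrades every decoding step to a high-probability guarantee, closing both parts of the claim.
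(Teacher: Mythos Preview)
Your proposal is correct and follows the same approach as the paper: induction on $t$ using Claim~\ref{cl:le_help} as the base, propagating $\maxd=D_{v^*}$ one layer per stage via the round-$3$ transmissions, with termination firing two stages after stabilization. The paper's decoding argument is exactly your ``one-value-per-stage BFS structure'' observation---it notes that all vertices in a fixed layer $L_{t'}$ share the same $\maxd$ value at each stage, so a receiver (with neighbors in at most three consecutive layers) sees at most three distinct values---and your remark that the stated indexing is off is on target: the paper's own statement carries the sign slip you anticipate, and the stage at which $L_{D_{v^*}-t}$ learns $D_{v^*}$ is indeed $D_{v^*}+t-1$, not $D_{v^*}-t+1$.
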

\begin{proof}
Note that in every two stages, as long as the vertex does not obtain the correct values of the leader and the local radius $D_{v^*}$, every vertex either receives an improved maximum distance from its current leader (local maximum in its neighborhood) or a notification of a new leader.
First, observe that by Claim \ref{cl:le_help}(c), after round $(D_{v^*},2)$, $d_{D_{v^*}}(v)=\maxd_{D_{v^*}}(v)=\dist(v,v^*)=D_{v^*}$ for every $v \in L_{D_{v^*}}$. Within two rounds, no update is received on either a new leader (since the global maximum has been found) or on an improved maximum distance, and hence the leaf vertices of $L_{D_{v^*}}$ terminate in stage $D_{v^*}+2$.
Consider Part (b). It is easy to see that for every stage $t\geq 1$, all vertices in $L_{t'}$ for $t' \leq t$ hold the same value of $\maxd_t(v)$. That is for every $t'\leq t$, there exists a value $\ell'$ such that $\maxd_t(v)=\ell'$ for every $v \in L_{t'}$.

We prove the claim by a reversed induction on the stage.
In round $(D_{v^*},3)$, the vertices of $L_{D_{v^*}}$ transmit
$D_{v^*}$ and since the vertices in $L_{D_{v^*}-1}$ receive a message from at most $3$ layers, they can successfully decode the value of $D_{v^*}$. Since they get no update within 2 stages (i.e., they hold the maximum distance from the global maximum vertex), these vertices terminate in stage $D_{v^*}+2$.

Assume that the claim holds up to stage $t'=D_{v^*}-(t+1)$ and consider $t'+1$.  By the induction assumption for stage $t'$, the vertices of $L_{t'}$ receive  $D_{v^*}$ in $(D_{v^*}-t,3)$. Hence in round $(D_{v^*}-t+1,3)$
they transmit $D_{v*}$ to $L_{t'+1}$. Since every vertex in this layer receives a message from at most three distinct layers, they can decode successfully $D_{v^*}$. As there are no future updates, they terminate in stage $D_{v^*}-t+3$, as desired.
\end{proof}

This completes the correctness of the leader-election protocol. Note that this protocol can also be used to compute a $2$-approximation for the diameter of the network.

\commlongend

\def\APPENDDOM{
Using the algorithm of \cite{MDSKhun}, a $O(\log^2 n)$ approximation for the minimum dominating set can be computed within $O(\log n)$ rounds with high probability. Since this simulation is straightforward and does not contain any technical contribution, we defer the details from this extended abstract.
}

\commshort
\section{Approximation Tasks: Degree Approximation}
\commshortend
\commlong
\section{Approximation Tasks}
\commlongend
\label{sec:approx}
In this section we consider approximation tasks.
As a key example, we focus on the task of approximating the degree, i.e., each vertex $v$ is required to compute an approximation for its degree in the graph $G$.
%
%
\commshort
We refer the reader to \cite{TR-XOR} for additional approximation schemes
such as (1) an approximation for the network size; (2) an approximation for the network diameter; and (3) a 2-approximation for the maximum (or minimum).
\commshortend

\label{sec:appdeg}

\commlong
\subsection{Degree Approximation}
\commlongend

We describe Algorithm~$\AlgAppDeg$ that computes with high probability a constant approximation for the degree of the vertices within $O(1)$ rounds.
For vertex $v$ and graph $G$, let $\deg(v,G)=|\Gamma(v,G)|$ be the degree of $v$ in $G$.
When the graph $G$ is clear from the context, we may omit it and simply write $\deg(v)$.
Recall that we assume that each vertex $v$ has a unique identifier $\id_v$ in the range of $[1, \ldots, n^c]$ for some constant $c\geq 1$.

The algorithm consists of two communication rounds (which can be unified into a single round). The first round is devoted for computing the exact degree for low-degree vertices $v$ with degree $\deg(v)\leq c \cdot \log n$. The second round computes a constant approximation for high-degree vertices $v$ with $\deg(v)>c \cdot \log n$. In the first communication round, every vertex $v$ uses a random instance $C^I_v$ of an $[N, c\cdot \log^3 N, c \cdot \log N]$-BIC code to encode its ID and transmits $C^I_v(\id_v)$ as part of $m_1(v)$. In addition, the vertices use the Information-Overflow Detection scheme of Section \ref{sec:newtools} to verify if their BIC decoding is successful (that is, the message $m_1(v)$ consists of two fields, the first encodes the ID and the second is devoted for overflow detection). Upon receiving $m'_{1}(v)=\bigoplus_{u \in \Gamma(v)}m_1(u)$, the vertex applies BIC decoding to the first field of the message and applies Information-Overflow Detection to the second field to verify the correctness of the decoding. Note that by the properties of the BIC code, in this round, the low-degree vertices compute their exact degree in $G$.

The second round aims at computing a constant factor approximation for the remaining vertices with high-degree.
Set $a=40 \cdot \log N$ and $b=2\log N$.
Every vertex $v$ sends an $(a\cdot b)$-bit message $m_2(v)$ defined by a collection of $a$ random numbers in the range of $\{1, \ldots, b\}$ sampled independently by each vertex $v$.
Specifically, for every $v$ and $i \in \{1, \ldots, a\}$, $r_i(v)$ is sampled according to the geometric distribution, letting $r_{i}(v)=j$ for $j \in \{1, \ldots, b-1\}$ with probability $2^{-j}$, and $r_{i}(v)=b$ with probability $2^{-b+1}$ (the remaining probability).
For every $i \in \{1, \ldots, a\}$ and every $j\in \{1, \ldots, b\}$, let $x_{i,j}(v) =1$ if $j< r_i(v)$ and $x_{i,j}(v) =0$ otherwise.
Let $X_i(v)=x_{i,b}(v)\cdots x_{i,2}(v)\cdot x_{i,1}(v)$ and let $m_2(v)=X(v)=X_{a}(v)\cdots X_{2}(v) \cdot X_{1}(v)$ be the transmitted message of $v$.
Let $Y(v)=\bigoplus_{u\in \Gamma(v)}X(u)$ be the received message of $v$.
The decoding is applied to each of the $a$ blocks of $Y(v)$ separately, i.e., treating $Y(v)$ as $Y(v)=Y_{a}(v)\cdots Y_{2}(v) \cdot Y_{1}(v)$, where $Y_i(v)=y_{i,b}(v)\cdots y_{i,2}(v)\cdot y_{i,1}(v)$,
such that
$y_{i,j}(v)=\bigoplus_{u\in \Gamma(v)} x_{i,j}(u)$.
For every $j \in \{1,..., b\}$ and every $v\in V$, define
$\SUM(j,v)=\sum_{i=1}^{a}y_{i,j}(v)$.
Finally, define $j^*(v)=\min\{j\mid \SUM(j,v)\leq 0.2\cdot a\}$,
if there exists an index $j$ such that $\SUM(j,v)\leq 0.2\cdot a$ (we later show that such index do exists with high probability) and $j^*(v)=0$, otherwise as a default value.
The approximation $\delta(v)$ is then given by $2^{j^*(v)-1}$. This completes the description of the algorithm.

As mentioned earlier, the correctness for low-degree vertices follows immediately by the properties of the BIC code and the information-overflow detection (Lemma \ref{lem:BIC_logn} and 
Lemma \ref{cl:coll_det}).
\commshort
We then show that in the second round, for high-degree vertices $v$, we have $\delta(v)/\deg(v)=O(1)$, with high probability.
\commshortend
\commlong
Hence, it remains to show that in the second round, for high-degree vertices $v$, we have $\delta(v)/\deg(v)=O(1)$.
\begin{lemma}
\label{lem:high_degree}
With high probability, if $v$ has high-degree, then  $\delta(v)\in[\deg(v)/5,5\deg(v)]$.
\end{lemma}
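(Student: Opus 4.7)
The plan is to analyze, for each $j\in\{1,\ldots,b\}$, the distribution of $\SUM(j,v)$ as a function of $d:=\deg(v)$, and to show that $j^*(v)$ concentrates within $O(1)$ additive distance of $\log d$ with high probability. First I would compute the per-block parity probability. For a fixed pair $(i,j)$ with $j<b$, the bits $\{x_{i,j}(u)\}_{u\in\Gamma(v)}$ are independent Bernoulli random variables with success probability $q_j=\Pr[r_i(u)>j]=2^{-j}$; this identity holds because the algorithm places the residual mass $2^{-b+1}$ at the endpoint $r_i(u)=b$, which exactly preserves the geometric tail $\Pr[r_i(u)>j]=2^{-j}$ for every $j<b$. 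Hence $y_{i,j}(v)$, being the XOR of $d$ independent Bernoulli($2^{-j}$) variables, is itself Bernoulli with parameter
\[
p_j \;=\; \frac{1-(1-2^{1-j})^d}{2}.
\]
Since the $r_i(\cdot)$'s are independent across $i$, the family $\{y_{i,j}(v)\}_{i=1}^{a}$ is i.i.d., and so $\SUM(j,v)$ is a binomial random variable with mean $a p_j$.

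Next I would separate the levels $j$ into a low regime and a high regime and apply a Chernoff bound in each. For every $j$ with $2^j<2d/5$, we have $d\cdot 2^{1-j}>5$, hence $(1-2^{1-j})^d\le e^{-5}$, and therefore $p_j>(1-e^{-5})/2>0.49$; a standard Chernoff bound on $\SUM(j,v)$, whose mean is at least $0.49\,a$, gives $\Pr[\SUM(j,v)\le 0.2\,a]\le e^{-\Omega(a)}$. Symmetrically, whenever $2^j\ge C\cdot d$ for a sufficiently large absolute constant $C$, Bernoulli's inequality $(1-x)^d\ge 1-dx$ yields $p_j\le d/2^j\le 1/C$, and Chernoff gives $\Pr[\SUM(j,v)\ge 0.2\,a]\le e^{-\Omega(a)}$. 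Since $a=40\log N$, both failure probabilities are $N^{-\Omega(1)}$.

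Taking a union bound over the $b=2\log N$ candidate levels $j$ and over the $n$ vertices, we conclude that with high probability every $j$ below $\log(2d/5)$ has $\SUM(j,v)>0.2a$, while the smallest integer $j$ with $2^j\ge Cd$ has $\SUM(j,v)\le 0.2a$. Hence $j^*(v)$ lies in the integer interval $\bigl[\lceil\log(2d/5)\rceil,\,\lceil\log(Cd)\rceil\bigr]$, so $\delta(v)=2^{j^*(v)-1}$ lies in a constant-factor window around $d$. The lower endpoint yields $\delta(v)\ge d/5$ directly, and a slightly sharper upper-regime bound (using the first correction term of $(1-x)^d\ge 1-dx+\binom{d}{2}x^2-\binom{d}{3}x^3$, which shows that $p_j$ is already bounded away from $0.2$ once $2^j$ exceeds a small constant times $d$) places the upper endpoint below $5d$, giving $\delta(v)\in[d/5,5d]$ as claimed.

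The main technical obstacle is aligning the three constants in play---the algorithmic threshold $0.2a$, the lower-regime boundary $2d/5$, and the upper-regime constant $C$---so that the Chernoff concentration at both boundaries places $j^*(v)$ into the integer window that maps, under $\delta(v)=2^{j^*(v)-1}$, exactly into $[d/5,5d]$. The choice $a=40\log N$ yields Chernoff deviations of additive order $1/\sqrt{\log N}$, making the analysis robust to small shifts in the constants, and the high-degree hypothesis $d>c\log n$ is used only to ensure that $2^{1-j}$ is small at the transition, so that the higher-order terms of $(1-x)^d$ actually give a quantifiable gap below $0.2$; the remaining bookkeeping to squeeze the absolute constant down to $5$ is routine once the two-regime structure is in place.
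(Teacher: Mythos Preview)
Your proposal is correct and follows essentially the same approach as the paper: both compute the parity probability $p_j=\tfrac12\bigl(1-(1-2^{1-j})^{\deg(v)}\bigr)$, use Chernoff to concentrate $\SUM(j,v)$ around $a\,p_j$, and infer that $j^*(v)$ sits within $O(1)$ of $\log\deg(v)$.

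The only notable difference is the direction of the argument. You work \emph{forward}: fix $j$, bound $p_j$ in a low regime and a high regime, apply Chernoff, and sandwich $j^*$ between the two regime boundaries. The paper works \emph{backward}: from $\SUM(j^*,v)\le 0.2a$ and $\SUM(j^*-1,v)>0.2a$ it uses Chernoff (implicitly via a union bound over $j$) to deduce $\EXP[\SUM(j^*,v)]\le 0.4a$ and $\EXP[\SUM(j^*-1,v)]\ge 0.1a$, then inverts the formula for $p_j$ using $(1-1/x)^x\lessgtr e^{-1\pm o(1)}$ to obtain $0.22<\deg(v)\cdot 2^{-j^*+1}$ and $\deg(v)\cdot 2^{-j^*}\le 1.62$, which yields the final window $[\deg(v)/3.24,\,\deg(v)/0.22]\subset[\deg(v)/5,5\deg(v)]$ directly. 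This backward route is what buys the paper the explicit constant $5$ with $a=40\log N$; in your forward route the upper endpoint depends on the smallest $C$ for which Chernoff gives $N^{-\Omega(1)}$ failure at $p_j\approx\tfrac12(1-e^{-2/C})$, and with $a=40\log N$ the crude bound $p_j\le d/2^j$ alone does not push $C$ down to $5$. You correctly flag this and appeal to a sharper expansion of $(1-x)^d$; that works, but the paper's inversion at $j^*,j^*-1$ is the cleaner way to land exactly on the stated constant.
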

\begin{proof}
First, we would like to show that $j^*>1$,
$\SUM(j^*-1,v)\geq 0.2 \cdot a$ and $\SUM(j^*,v) < 0.2 \cdot a$.
Note that the complementary event occurs, if either\\
Case (1) $\SUM(1,v) \leq 0.2 \cdot a$ and then $j^*=1$ and $\SUM(j^*-1=0,v)$ is not defined; or\\
Case (2) $\SUM(j,v) > 0.2 \cdot a$, for every $j\in\{1,...,b\}$ and then $j^*=0$ and $\SUM(j^*=0,v)$ is not defined.
We show that with high probability
%
%
%
%
\begin{eqnarray}
\SUM(b,v)\leq 0.2\cdot a~;
\label{SUM(b,v) leq 0.2 cdot a}
\end{eqnarray}
which, necessarily, implies that case (2) above does not occur, and that
\begin{eqnarray}
\SUM(1,v)>0.2\cdot a~,
\label{SUM(1,v) > 0.2 cdot a}
\end{eqnarray}
which implies that case (10 does not hold as well.
Before proving that inequalities (\ref{SUM(b,v) leq 0.2 cdot a}) and (\ref{SUM(1,v) > 0.2 cdot a}) hold with high probability, we analyze the expectation  of $\SUM(j,v)$.
For every $i \in \{1, \ldots, a\}$ and $j \in \{0,1, \ldots, b\}$, define $V_{i,j}=\{ u ~\mid~  r_i(u)> j\}$ (note that, by this definition, $V_{i,0}=\Gamma(v)$).
For every $j\in \{1,...,b\}$, it then holds that
\begin{eqnarray}
\EXP[\SUM(j,v)] &=& \sum_{i=1}^{a} \Prob\left[ y_{i,j} =1 \right] =\sum_{i=1}^{a} \Prob\left[ |V_{i,j} \cap \Gamma(v)| \mbox{ is odd} \right]  \nonumber \\
&=& \sum_{i=1}^{a} \Prob\big[|V_{i,j-1} \cap \Gamma(v)|\geq 1 \big]/2  \nonumber \\
&=& a \cdot (1-(1-2^{-j+1})^{\deg(v)})/2~.
\label{ineq:lem: E(SUM) = tau (1-) /2}
\end{eqnarray}
Now, we prove that Inequality (\ref{SUM(b,v) leq 0.2 cdot a}) holds with high probability.
Recall that $b=2\log N$ and $\deg(v)\leq n$.
Combining this with Inequality (\ref{ineq:lem: E(SUM) = tau (1-) /2}), we get that
$$
\EXP[\SUM(b,v)]\leq a \cdot(1-(1-2/n^2)^n)/2\leq 0.1a,
$$
where the last inequality holds for $n>10$.
Thus, by Chernoff bound
$$
\Prob[\SUM(b,v) \geq 0.2 \cdot a]\leq\exp(-0.1a/3)\leq 1/n^2~,
$$
as needed for Inequality (\ref{SUM(b,v) leq 0.2 cdot a}), where the last inequality holds, since $a=40\log n$.
We now show that with high probability Inequality (\ref{SUM(1,v) > 0.2 cdot a}) holds too.
Recall that $\deg(v)> 1$.
By Inequality (\ref{ineq:lem: E(SUM) = tau (1-) /2}), we have
$$
\EXP[\SUM(b,1)]=a/2.
$$
Thus, by Chernoff bound
$$
\Prob[\SUM(1,v) < 0.2 \cdot a]\leq\exp(-0.5\cdot a/8)\leq 1/n^2~,
$$
as needed for Inequality (\ref{SUM(1,v) > 0.2 cdot a}), where the last inequality holds, since $a=40\log n$.

As mentions above, Inequalities (\ref{SUM(b,v) leq 0.2 cdot a}) and (\ref{SUM(1,v) > 0.2 cdot a}) implies that $j^*>1$,
$\SUM(j^*-1,v)\geq 0.2 \cdot a$ and $\SUM(j^*,v) < 0.2 \cdot a$.
Intuitively, we use the last two inequalities together with the fact that $\SUM(j^*-1,v)$ and $\SUM(j^*,v)$ concentrating around theirs expectations
to bound from below and bound from above the expectations
$\EXP[\SUM(j^*-1,v)]$ and $\EXP[\SUM(j^*,v)]$, respectively.
%
%
Next, we use the fact that the expectations $\EXP[\SUM(j^*-1,v)]$ and $\EXP[\SUM(j^*,v)]$ are functions of $\deg(v)$ and hence $2^{j^*}$ approximates $\deg(v)$.
Formally, by Chernoff bound, we get, on the one hand, that
\begin{eqnarray*}
\Prob[\SUM(j^*,v) < 0.2 \cdot a \mid \EXP[\SUM(j^*,v)] \geq 0.4\cdot a]
&\leq&\exp(-0.4\cdot a/8)
\leq 1/n^2,
\end{eqnarray*}
where the last inequality holds, since $a\geq 40\log n$.
Therefore,
with high probability,
\begin{equation}
\EXP[\SUM(j^*,v)] \leq 0.4\cdot a~.
\label{ineq:lem: E(SUM) 0.4 tau}
\end{equation}
On the other hand, by Chernoff bound, we have
\begin{eqnarray*}
\Prob[\SUM(j^*-1,v) > 0.1 \cdot a \mid \EXP[\SUM(j^*-1,v)] \leq 0.2\cdot a]
&\leq&\exp(-0.2\cdot a/3)
\leq 1/n^2,
\end{eqnarray*}
where the last inequality holds, since $a\geq 30\log n$.
Thus, with high probability,
\begin{equation}
\EXP[\SUM(j^*-1,v)] \geq 0.1\cdot a ~.
\label{ineq:lem: E(SUM) 0.1 tau}
\end{equation}
Combining Inequality (\ref{ineq:lem: E(SUM) 0.4 tau}) and Inequality  (\ref{ineq:lem: E(SUM) = tau (1-) /2}), we get that
\begin{eqnarray}
(1-2^{-j^*+1})^{\deg(v)}\geq 0.2~,
\label{ineq:lem: (1-1/x)^x geq 0.2}
\end{eqnarray}
and similarly by combining Inequality (\ref{ineq:lem: E(SUM) 0.1 tau}) and Inequality (\ref{ineq:lem: E(SUM) = tau (1-) /2}), we get that
\begin{eqnarray}
(1-2^{-j^*})^{\deg(v)}\leq 0.8~.
\label{ineq:lem: (1-1/x)^x leq 0.8}
\end{eqnarray}
Recall that $(1-1/x)^x<\exp(-1)$, for every $x>0$.
Thus, by combining it with Inequality (\ref{ineq:lem: (1-1/x)^x geq 0.2}),
we have $\exp(-\deg(v)\cdot 2^{-j^*+1})  \geq (1-2^{-j^*+1})^{\deg(v)} \geq  0.2,$
hence
\begin{eqnarray}
\deg(v)\cdot 2^{-j^*} \leq 1.62 ~.
\label{ineq:lem: leq 1.62}
\end{eqnarray}
On the other hand, $(1-1/x)^x>\exp(-1.01)$, for every $x \geq 64$. By combining with Inequality (\ref{ineq:lem: (1-1/x)^x leq 0.8}),
\begin{eqnarray*}
\exp(-1.01 \deg(v)\cdot 2^{-j^*+1})  \leq (1-2^{-j^*+1})^{\deg(v)} \leq  0.8~,
\end{eqnarray*}
yielding,
\begin{eqnarray}
\deg(v)\cdot 2^{-j^*+1} > 0.22~ .
\label{ineq:lem: > 0.22}
\end{eqnarray}
Overall, by Inequalities (\ref{ineq:lem: leq 1.62}) and (\ref{ineq:lem: > 0.22}), we have that
\begin{eqnarray*}
\frac{\deg(v)}{3.24} ~~\leq~~  2^{j^*-1} ~~\leq~~ \frac{\deg(v)}{0.22},
\end{eqnarray*}
as required, and the lemma follows.
\end{proof}
\commlongend
We thus have the following.
\commshort
\begin{theorem}
\commshortend
\commlong
\begin{thm}
\commlongend
\label{thm:approx-deg}
There exists an $O(1)$-round algorithm that computes w.h.p. the exact degree $\deg(v)$ for vertices with $\deg(v)=O(\log n)$ and a constant approximation if $\deg(v)=\Omega(\log n)$.
\commshort
\end{theorem}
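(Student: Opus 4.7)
My plan is to split the analysis between the two rounds of Algorithm $\AlgAppDeg$, which correspond to the low-degree and high-degree regimes.

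For the first round, the argument is immediate from the tools already built. Each vertex $v$ transmits $C^I_v(\id_v)$ using an $[N, c\log^3 N, c\log N]$-BIC instance, together with the information-overflow detection field. If $\deg(v)\leq c\log n$, then $v$'s neighborhood contributes at most $c\log n$ distinct identifiers, so by Lemma~\ref{lem:BIC_logn} the decoding of $\bigoplus_{u\in\Gamma(v)}C^I_u(\id_u)$ yields the exact set $\{\id_u : u\in\Gamma(v)\}$ w.h.p., and $v$ recovers $\deg(v)$ exactly. If $\deg(v)>c\log n$, then by Lemma~\ref{cl:coll_det} the overflow is detected w.h.p., so $v$ knows it must rely on the second round.

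For the second round I would argue that $j^*(v)$ is a well-defined index strictly greater than $1$, and that $2^{j^*(v)-1}$ is within a constant factor of $\deg(v)$. The key computation is the parity expectation: for each block $i$ and threshold $j$, the bit $y_{i,j}(v)$ is $1$ iff an odd number of neighbors $u$ have $r_i(u)>j$. Since each such neighbor contributes independently with probability $2^{-j}$, a direct XOR-of-independent-Bernoullis calculation gives
\[
\EXP[\SUM(j,v)] \;=\; \frac{a}{2}\bigl(1-(1-2^{-j+1})^{\deg(v)}\bigr).
\]
Plugging $j=b=2\log N$ shows $\EXP[\SUM(b,v)]\leq 0.1a$, and $j=1$ gives $\EXP[\SUM(1,v)]=a/2$. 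Since $a=40\log N$, Chernoff bounds yield $\SUM(b,v)\leq 0.2a$ and $\SUM(1,v)>0.2a$ with probability $1-1/n^{\Omega(1)}$, so $j^*(v)$ is defined and $j^*(v)\geq 2$.

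The main obstacle, and the place where I would invest the most care, is converting the observed inequalities $\SUM(j^*,v)<0.2a\leq \SUM(j^*-1,v)$ into a sharp relationship between $\deg(v)$ and $2^{j^*}$. Applying Chernoff in the reverse direction, one shows that w.h.p.\ $\EXP[\SUM(j^*,v)]\leq 0.4a$ and $\EXP[\SUM(j^*-1,v)]\geq 0.1a$, which by the expectation formula translate to
\[
(1-2^{-j^*+1})^{\deg(v)}\geq 0.2 \qquad\text{and}\qquad (1-2^{-j^*})^{\deg(v)}\leq 0.8.
\]
Using the two-sided estimates $\exp(-1.01) < (1-1/x)^x < \exp(-1)$ valid for $x\geq 64$ (which is safe since $\deg(v)\cdot 2^{-j^*}$ is bounded away from $0$ in this regime), these inequalities give $0.22 < \deg(v)\cdot 2^{-j^*+1}$ and $\deg(v)\cdot 2^{-j^*}\leq 1.62$, i.e.\ $\deg(v)/3.24\leq 2^{j^*-1}\leq \deg(v)/0.22$. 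Combining with the first-round guarantee and taking a union bound over all vertices yields the theorem with high probability in $O(1)$ rounds.
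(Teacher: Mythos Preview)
Your proposal is correct and follows the paper's proof essentially line by line: the same two-round split, the same expectation formula $\EXP[\SUM(j,v)]=\tfrac{a}{2}(1-(1-2^{-j+1})^{\deg(v)})$, the same Chernoff arguments at $j=1$ and $j=b$ to pin down $j^*$, the same ``reverse Chernoff'' step to bound $\EXP[\SUM(j^*,v)]$ and $\EXP[\SUM(j^*-1,v)]$, and the identical numerical constants $0.22$ and $1.62$ in the final sandwich. The only quibble is your parenthetical justification for applying $(1-1/x)^x>\exp(-1.01)$: what is needed is $x=2^{j^*}\geq 64$ (which follows from $\deg(v)\cdot 2^{-j^*}\leq 1.62$ and $\deg(v)=\Omega(\log n)$), not that $\deg(v)\cdot 2^{-j^*}$ is bounded away from~$0$; the paper glosses over this point as well.
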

\commshortend
\commlong
\end{thm}
\commlongend

\commlong

\subsection{Network Size Approximation}
\label{app:size}
In this section, we present Algorithm \ApproxSize\ that approximates the network size by a constant factor within $O(D)$  rounds w.h.p. At a high level, the nodes choose random variables according to a probability distribution for which we can obtain an approximation for the number of vertices that choose each value. By aggregating the amount of numbers chosen in intervals that grow in size by a constant factor, the nodes try to estimate the size of the network. While intervals with too many values cannot be decoded, they indicate that the number of nodes in the graph is larger. Hence, the nodes look for the interval of largest values in which there are \emph{enough} values so that with high probability indeed it estimates well the total number of nodes.
We obtain the following.
\def\THMNSIZE{
For every network $G=(V,E)$ of diameter $D$, a constant approximation for the size of the network $n=|V|$ can be computed with high probability within $O(D)$ communication rounds.
} 
\begin{thm}
\label{thm:nsize}
\THMNSIZE
\end{thm}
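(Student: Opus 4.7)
The plan is to combine the leader-election and BFS-tree construction of Section~\ref{sec:symbreak} with a BIC-based convergecast that counts random samples per level. First, I would elect a leader $v^*$ and build a BFS tree $T$ rooted at $v^*$ in $O(D)$ rounds w.h.p., using the protocols of Section~\ref{sec:symbreak}. Each vertex $v$ then independently samples a level $r(v) \in \{1,\dots,K\}$ with $\Pr[r(v)=j]=2^{-j}$, for $K=\Theta(\log N)$. The ``intervals'' of the high-level description correspond to the classes $V_j=\{v : r(v)=j\}$; their sizes $n_j=|V_j|$ satisfy $\EXP[n_j]=n\cdot 2^{-j}$, and hence shrink by a factor of two as $j$ grows, so that the class sizes span a geometric scale covering all orders of magnitude between $1$ and $n$.

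Next I would aggregate, for each $j$ in parallel, either the value of $n_j$ or the fact that $n_j$ exceeds the BIC capacity $c\log n$, along $T$. A message carries $K=O(\log n)$ fields, each consisting of a BIC codeword together with an information-overflow detection block from Lemma~\ref{cl:coll_det}. A vertex $v$ with $r(v)=j$ contributes a BIC-encoded identifier to field $j$; the convergecast XORs contributions up the BFS tree over $O(D)$ rounds. For every $j$, the leader then either decodes the set of contributing identifiers in $V_j$ (and reads off $n_j$ from its size, whenever $n_j\leq c\log n$) or detects an information overflow at level $j$ via Lemma~\ref{cl:coll_det}.

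The leader selects $j^*=\min\{j : \text{no overflow at level }j\}$ and outputs $\widetilde n = n_{j^*}\cdot 2^{j^*}$. The analysis is a direct Chernoff argument. Let $j_0$ be the smallest integer with $\EXP[n_{j_0}]\leq c\log n$, so that $\EXP[n_{j_0}]=\Theta(\log n)$. For $j<j_0$ we have $\EXP[n_j]\geq 2c\log n$ and a standard Chernoff bound yields $n_j>c\log n$ w.h.p., hence overflow is necessarily detected; for $j\geq j_0$ the count $n_j$ is within a constant factor of its mean w.h.p. Combining these two tails, $j^*$ falls in a constant window around $j_0$ and $n_{j^*}\cdot 2^{j^*}=\Theta(n)$ w.h.p., yielding the desired constant approximation.

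The main obstacle I expect is the aggregation itself. In a radio network, when a vertex transmits to its BFS parent it is also heard by every wireless neighbor, and in particular by other vertices in the same BFS layer or in the layer of its parent. If these non-parent neighbors forward the same codeword upward along independent paths, it can reach the leader with even multiplicity and cancel out, corrupting the per-level counts. Overcoming this requires a convergecast scheme that, with high probability, incorporates each vertex's codeword with odd multiplicity in the leader's XOR --- for instance, by exploiting the random-block structure of the BIC code itself, so that each vertex injects its token into each block of the aggregated message independently with probability $1/2$, ensuring that for every vertex some block of the leader's received word witnesses its contribution, with Lemma~\ref{cl:coll_det} certifying correct decoding.
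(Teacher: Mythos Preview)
Your overall architecture---elect a leader, build a BFS tree, have each vertex sample a geometric level, and convergecast the per-level identifiers to the root---is exactly the paper's. The differences are in two places.

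\paragraph{The estimation rule.}
The paper samples with probability $p_i=i/2^i$ (not $2^{-i}$), takes the \emph{largest} index $i^*$ for which the root decoded at least $i^*/4$ identifiers in block $i^*$, and outputs $2^{i^*}$. With this distribution, the expected count at the ``right'' index $j^*$ (where $n\in[2^{j^*},2^{j^*+1}]$) is $\Theta(\log n)$, so Chernoff gives both that block $j^*$ passes the $i/4$ threshold and that blocks a constant above $j^*$ do not. Your scheme (probability $2^{-j}$, pick the smallest non-overflowing $j$, output $n_{j^*}\cdot 2^{j^*}$) can also be made to work, but your sentence ``for $j<j_0$ \ldots overflow is necessarily detected'' overstates Lemma~\ref{cl:coll_det}: that lemma only guarantees that \emph{if} no overflow is flagged then decoding succeeded, not that exceeding the threshold forces a flag. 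This is harmless for you---if a small $j$ is decoded, its count is large enough for Chernoff and $n_j\cdot 2^j=\Theta(n)$ anyway---but the argument should be phrased accordingly.

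\paragraph{The convergecast: this is where your gap is.}
You correctly identify that a blind XOR-and-forward convergecast fails because a vertex's codeword can reach the root along several BFS branches and cancel. Your proposed remedy (``each vertex injects its token into each block independently with probability $1/2$'') does not fix this: if a token travels along $k$ paths, its parity in a given block at the root is a complicated function of $\Theta(kD)$ independent coin flips, and you get no control over whether any block witnesses it with odd parity. The paper's fix is different and clean: at every hop, the receiving vertex \emph{decodes} each block with BIC, stores the recovered identifiers as a \emph{set} $\widehat V_i(v)$, takes the union with what it already has, and then \emph{re-encodes} that set with fresh BIC randomness before forwarding. Because set union is idempotent, it does not matter that a level-$\ell$ vertex hears an identifier from several level-$(\ell{+}1)$ neighbors, nor that several level-$\ell$ vertices forward the same identifier upward; BIC at the next level simply decodes the distinct values present. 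The global bound $|V_j|\le O(\log n)$ for all $j$ past the critical index guarantees that these decodings succeed at every intermediate vertex, which is what makes the whole convergecast go through in $O(D)$ rounds.
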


\noindent Recall that it is assumed the vertices know a polynomial bound $N$ on the size of the network (this bound corresponds for example to the message size which is $O(\log n)$). For simplicity, let $N$ be a power of $2$.
The algorithm uses message size of $O(\log^3 N)$ bits and consists of $O(D)$ communication rounds.
To ease the communication scheme, in the first phase, the vertices apply Algorithm $\LeaderElection$ of Section \ref{app:leader} 
to elect a leader $v^*$ and construct a BFS tree $T$ rooted at $v^*$ of depth $D_{v^*}$.

The second phase of the algorithm consists of two parts. The first part is by convergecast of information to the leader $v^*$, which is the root of $T$. In the second part, the leader down casts the result of the computation through $T$ to all the nodes. At a high level, the nodes choose random variables according to a probability distribution for which we can obtain an approximation for the number of vertices that chose each value. By aggregating the amount of numbers chosen in intervals that grow in size by a constant factor, the nodes try to estimate the size of the network. Intervals with too many values chosen cannot be decoded, but will also indicate that the number of nodes in the graph is larger. Hence, the nodes look for the interval of largest values in which there are \emph{enough} values so that with high probability indeed it estimates well the total number of nodes.

Formally, for every $\tau \in \{0,\ldots,D_{v^*}\}$, let $L_{\tau}=\{v ~\mid~ \dist(v,v^*,G)=\tau\}$ be the $\tau'$th level of $T$.
The second phase uses the following codes: an $[N, \log^2 N, \log N]$-BCC code $C$ that is used to encode the IDs of the vertices and an $[N, \log^3 N, \log N]$-BIC code $C^I$ based on $C$ that supports $\log N$ distinct messages.
The message sent by each vertex is divided into $\log N$ blocks, where the $i$'th block is used
to examine the possibility that the number of vertices in the network is $\Theta(2^i)$.
At the beginning of this phase (before communication starts), every vertex $v$, selects a value $\widehat{r}(v)$ according to the geometric distribution such that $\widehat{r}(v)=i$ with probability $p_i=i/2^i$ for $i \in \{1, \ldots, \log N\}$.
In the next $D_{v^*}$ communication rounds, $\tau \in \{1,\ldots, D_{v^*}\}$, the vertices $v$ of $L_{D_{v^*}-\tau+1}$ transmit a message $m(v)$, defined as follows. The message consists of $\log N$ blocks, each of size $\log^2 N$ (i.e., the maximal size of an BIC codeword).
For every $i \in \{1, \ldots, \log N\}$, let
$\widehat{V}_i(v)$ be the IDs of vertices $u$ with $\widehat{r}(u)=i$ that are known to $v$. Initially, $\widehat{V}_{i'}(v)=\{v\}$ for $i'=\widehat{r}(v)$ and $\widehat{V}_{i}(v)=\emptyset$ for every $i \neq i'$.

These sets are updated by $v$ by decoding the message $m'(v)$ received in round $\tau-1$ (in case $v$ is not a leaf).
If $v$ is not a leaf, each of the $\log N$ blocks of the received message $m'(v)$ is decoded  separately by applying the BIC decoding. If the $i$'th block cannot be decoded successfully, it is added to the list $FB(v)$ of failed blocks (blocks that couldn't be decoded). Note that we do not assume that the vertices know if the decoding of a given block is decoded (this can be obtained by using the information-overflow detection scheme of Section~\ref{sec:newtools}, but it is not needed in our algorithm, as will be shown in the analysis).
For every block $i \notin FB(v)$ (i.e., a block that could be decoded), the decoded values (corresponding to vertex IDs) are added to $\widehat{V}_{i}(v)$.
Using the updated $\widehat{V}_{i}(v)$ sets, the $i$'th block of $m(v)$, denoted by $m_{i}(v)$, is defined as follows. If $i \in FB(v)$, then the $i$'block contains the BIC codeword of a special word that indicates failure.
Otherwise, (i.e., the $i$'th block of the received message was decoded successfully), $m_i(v)$ contains the XOR of the BIC codewords of the IDs in $\widehat{V}_{i}(v)$. Formally,
let $C^I_{v,j}$ be random instances of $[N, \log^3 N, \log N]$-BIC codes for $j \in \{1, \ldots, |\widehat{V}_{i}(v)|\}$, then
$m_i(v)=\bigoplus_{id_j \in \widehat{V}_{i}(v)}C^I_{v,j}(id_j)$.
This completes the description of the convergecast communication on $T$.

Let $m'(v^*)$ be the message received by the root vertex $v^*$ and let $i^* \in \{1, \ldots, \log N\}$ be the \emph{last} index $i$ satisfying that $i \notin BF$ and $|\widehat{V}_{i}(v^*)| \geq i/4$.
The estimate $n_{alg}$ for the number of vertices is then $n_{alg}=2^{i^*}$.
Finally, the root vertex $v^*$ downcasts $n_{alg}$ on $T$. This completes the description of the algorithm.
We now turn to the analysis. Clearly, the protocol consists of $O(D)$ communication rounds, so it remains to show correctness.
\begin{lemma}
\label{lem:correct_nsize}
With high probability, $n_{alg}=\Theta(n)$.
\end{lemma}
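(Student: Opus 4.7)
The plan is to establish two-sided bounds, $n_{alg} = O(n)$ and $n_{alg} = \Omega(n)$, each holding with high probability. Set $V_i = \{v \in V \mid \widehat{r}(v) = i\}$; since each vertex independently joins $V_i$ with probability $p_i = i/2^i$, the size $|V_i|$ is a sum of $n$ independent Bernoullis with $\EXP[|V_i|] = n \cdot i/2^i$. The key structural fact I will use is that $\widehat{V}_i(u) \subseteq V_i$ for every vertex $u$ throughout the convergecast, so every BIC-decoding event for block $i$ involves at most $|V_i|$ distinct codewords.

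For the \textbf{upper bound}, I would consider any index $i$ with $2^i \geq 20n$. For such $i$, $\EXP[|V_i|] \leq i/20$, and a standard Chernoff bound yields $\Prob[|V_i| \geq i/4] \leq \exp(-\Omega(i)) \leq 1/n^{c}$ for suitable $c$, because $i = \Omega(\log n)$ in this range. A union bound over the $O(\log N)$ such indices gives that w.h.p.\ $|V_i| < i/4$ simultaneously for every $i \geq \log(20n)$. Since $|\widehat{V}_i(v^*)| \leq |V_i|$, no such $i$ can satisfy the selection criterion, so $i^* < \log(20n)$ and $n_{alg} = 2^{i^*} \leq 20n$.

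For the \textbf{lower bound}, I would exhibit a concrete index that witnesses both conditions and then invoke the fact that $i^*$ is the \emph{last} such index. Take $i^\star = \lceil \log n \rceil$, so $\EXP[|V_{i^\star}|] \in [i^\star/2, i^\star]$. Chernoff yields $|V_{i^\star}| \in [i^\star/4, 2 i^\star] = \Theta(\log n) = O(\log N)$ w.h.p. Because every transmission's $i^\star$-block is an XOR of BIC codewords for IDs drawn from $V_{i^\star}$, each individual decoding event at any node involves at most $|V_{i^\star}| = O(\log N)$ distinct values. Lemma~\ref{lem:BIC_logn} then guarantees successful decoding of block $i^\star$ at each of the $O(nD)$ decoding events, w.h.p.\ after a union bound; hence block $i^\star$ is never placed in any $FB(\cdot)$, and the convergecast rule ensures that $\widehat{V}_{i^\star}(v^*) = V_{i^\star}$. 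Thus $|\widehat{V}_{i^\star}(v^*)| \geq i^\star/4$, so $i^* \geq i^\star$ and $n_{alg} \geq 2^{\lceil \log n \rceil} \geq n$.

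The main obstacle I anticipate is the bookkeeping in the lower-bound step: justifying that the additive channel does not spuriously cancel IDs when a parent $v$ receives block-$i^\star$ contributions from BFS-children as well as from same-level non-child neighbors (whose $\widehat{V}_{i^\star}$ sets come from disjoint BFS-subtrees). The saving grace is that the total number of distinct IDs appearing across \emph{all} $\widehat{V}_{i^\star}(u)$ for $u \in \Gamma(v)$ is bounded by $|V_{i^\star}| = O(\log N)$, so Lemma~\ref{lem:BIC_logn} applies to the aggregate XOR and the algorithm's rule for updating $\widehat{V}_{i^\star}(v)$ from the decoded multiset closes the induction that propagates every ID of $V_{i^\star}$ up to the root along the BFS tree.
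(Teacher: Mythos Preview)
Your proposal is correct and takes essentially the same approach as the paper: both pivot around an index $j^* \approx \log n$, use Chernoff bounds on $|V_i|$ to show that blocks well above $j^*$ fail the selection criterion while block $j^*$ itself satisfies it, and invoke the BIC decoding guarantee (Lemma~\ref{lem:BIC_logn}) to ensure the relevant blocks propagate intact through the convergecast so that $\widehat{V}_i(v^*)=V_i$. The paper packages this as proving that every block $j \ge j^*$ decodes successfully at every vertex and then concludes $i^* \in \{j^*,\ldots,j^*+5\}$, whereas you split into separate upper and lower bounds with slightly different constants; the arguments are otherwise the same, and your explicit attention to contributions from non-child neighbors in the convergecast is a detail the paper leaves implicit.
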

\begin{proof}
For every $i \in \{1, \ldots, \log N\}$, let $V_i=\{ v ~\mid~ \widehat{r}(v)=i\}$ be the vertices whose $\widehat{r}(v)$ value is $i$.
Let $j^*$ be the index satisfying that $n \in [2^{j^*},2^{j^*+1}]$.

We first claim that with high probability $j \notin \bigcup_{v \in V}FB(v)$ for every $j \in \{j^*, \ldots, \log N\}$. In other words, we show that w.h.p. each of the last $\log N-j^*+1$ blocks are successfully decoded at \emph{each} vertex.
By the properties of the BIC code (see Lemma \ref{lem:BIC_logn}), it is sufficient to show that with high probability $|V_j|\leq 8\log n$ for every $j \in \{j^*, \ldots, \log N\}$, which indeed holds by a Chernoff bound.
As a corollary, we get that $\widehat{V}_{j}(v^*)=\{ID(v) ~\mid~ v \in V_j\}=V_j$ for every $j \in \{j^*, \ldots, \log N\}$. That is, after $v^*$ decodes of $m'(v^*)$, it knows the IDs of the vertices in $V_j$ for every $j \in \{j^*, \ldots, \log N\}$, since the decoding of these blocks was successful all along with high probability.

For $j\geq j^*+6$, the expected number of vertices in $V_j$ is $n \cdot j/2^j\leq j/8$. Since $j \geq \log n$, by a Chernoff bound, we get that with high probability $|V_j|> j/2$ for every $j>j^*+6$.
Since the $j^*$'th block satisfies that $|V_{j^*}|\geq j^*/4$ w.h.p and the root vertex picks the largest index $i^*$ that satisfies this, it holds that $i^* \in \{j^*, \ldots, j^*+6\}$. We therefore have that $n_{alg}=2^{i^*}=\Theta(2^{j^*})=\Theta(n)$. The claim follows.
\end{proof}
Lemma~\ref{lem:correct_nsize}, together with the observation that the required communication is $O(D)$ rounds, gives Theorem~\ref{thm:nsize}.

\subsection{Approximation of Minimum and Maximum
Value in $\Gamma_r(v)$.}
Using BIC codes, one can compute, w.h.p, a constant approximation of any value $x(v) \in \{1, \ldots, n^c\}$ in the $r$-neighborhood within $O(r)$ rounds. This is done by letting $v$ transmit the BIC codeword of the value $j_v$ satisfying that $x(v) \in [2^{j_v-1},2^{j_v}]$. Since there are logarithmic such distinct values, the communication simulates the message passing scheme.
\begin{thm}
\label{lem:appro_poly}
Computing a constant approximation for the minimum or maximum value in the $r$-neighborhood can be done within $O(r)$ rounds with high probability.
\end{thm}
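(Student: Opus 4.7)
The plan is to reduce the problem to propagating only $O(\log n)$ distinct pieces of information through the network, so that BIC codes can be applied directly in a message-passing-style simulation. For each vertex $v$, define its \emph{bucket index} $j_v$ by $x(v)\in[2^{j_v-1},2^{j_v}]$, so that $j_v\in\{1,\ldots,\lceil c\log n\rceil\}$ since $x(v)\in\{1,\ldots,n^c\}$. Observing $j_v$ up to an additive constant yields a factor-$2$ (hence constant) approximation of $x(v)$. Consequently, knowing the maximum (resp.\ minimum) $j$-value in $\Gamma_r(v)$ is equivalent, up to a constant factor, to knowing the maximum (resp.\ minimum) $x$-value in $\Gamma_r(v)$. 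Focus on the maximum; the minimum is symmetric.

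Next, I would propagate the bucket indices for $r$ rounds. Initially set $J_0(v)=j_v$. In round $t\in\{1,\ldots,r\}$, every vertex $v$ draws a fresh random instance $C^I_{v,t}$ of an $[n^c,O(\log^3 n),O(\log n)]$-BIC code and transmits $C^I_{v,t}(J_{t-1}(v))$ using the full-duplex model. Upon receiving the XOR $m'_t(v)=\bigoplus_{u\in\Gamma(v)}C^I_{u,t}(J_{t-1}(u))$, vertex $v$ decodes block-wise and updates
\[
J_t(v)=\max\bigl(\{J_{t-1}(v)\}\cup \mathrm{Dec}(m'_t(v))\bigr).
\]
By induction on $t$ one sees that, conditioned on all decodings being successful up to round $t$, $J_t(v)=\max\{j_u : u\in\Gamma_t(v)\}$, because the maximum bucket index in the $t$-neighborhood is propagated one hop per round exactly as in a standard message-passing flooding.

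The only remaining point is the success of the BIC decoding at each vertex in each round. The set of values transmitted in any round is a subset of $\{1,\ldots,\lceil c\log n\rceil\}$, so the number of \emph{distinct} messages received by any $v$ is at most $O(\log n)$, regardless of how many neighbors transmit. Hence Lemma~\ref{lem:BIC_logn} applies: the decoding of $m'_t(v)$ succeeds with probability at least $1-1/n^{c'-1}$ for a suitable constant $c'$. A union bound over the $r\leq n$ rounds and the $n$ vertices bounds the total failure probability by $1/n^{c'-3}$, which is $1/\mathrm{poly}(n)$. Upon termination, every $v$ outputs $2^{J_r(v)}$ (or $2^{J_r(v)-1}$) as its approximation, giving a factor-$2$ approximation of the maximum value in $\Gamma_r(v)$ within $O(r)$ rounds w.h.p. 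If only half-duplex radios are available, an additional $O(\log n)$ multiplicative overhead from Lemma~\ref{lemma:half-to-full} can be avoided at this asymptotic statement by re-using the same union bound argument directly. The main (and only) obstacle is ensuring the per-round decoding guarantee across all vertices simultaneously, which is handled by the union bound above.
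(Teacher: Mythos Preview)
Your proof is correct and follows exactly the approach sketched in the paper: bucket each value $x(v)$ into one of $O(\log n)$ intervals via $j_v$ with $x(v)\in[2^{j_v-1},2^{j_v}]$, then use BIC to simulate message-passing flooding of the maximum (or minimum) bucket index for $r$ rounds, which works because there are only $O(\log n)$ distinct bucket indices. Your final remark about avoiding the half-duplex $O(\log n)$ overhead ``by re-using the same union bound'' is unclear and does not actually explain how to avoid the factor, but this is a side comment that does not affect the full-duplex result the theorem is stated for.
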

\commlongend

\section{Revealing Asymmetry -- Distributed Tournament}
\label{sec:max}

Consider the setting where every vertex is given an input value (corresponding to its rank, for example) and the goal is to find the vertex with the maximum value.
We will show that BCC codes with message size of $O(\log^3 n)$ allow one to perform many simultaneous competitions between $\Omega(\log n)$ candidates, which result in a tournament process of $O(D\cdot\log n/\log \log n)$ rounds for a network of diameter $D$. Specifically, the fact that the BCC code provides successful decoding when there are $O(\log^2 n)$ concurrent transmitting neighbors, allows us to reduce the number of competitors by a factor of $\Omega(\log n)$ in every round, and hence the winner is found within $O(D\cdot\log n/\log \log n )$ rounds.
\commlong
We begin by describing the protocol for single-hop networks and in Section~\ref{sec:max multi}, we generalize it for any network of diameter $D>1$, which requires some subtle modifications.
\commlongend
\commshort
Because of space considerations, we presenting here only the protocol for single-hop networks.
The protocol for any network of diameter $D>1$, which requires some subtle modifications is presented in the full version \cite{TR-XOR}.
\commshortend

\commlong

\subsection{Exact Computation of the Maximum Value}
\label{app:max}

\subsubsection{Single-hop network}
\commlongend
\commshort
\paragraph{Single-hop network.}
\commshortend
Let $V=\{v_1, \ldots, v_n\}$ be the vertices of the network and let $X=\{x_1, \ldots, x_n\}$, where $x_i \in \{1, \ldots n^2\}$ for all $i$, be the set of integral inputs such that vertex $v_i$ holds the input $x_i$. Let $\max(X)=\max_{i=1}^n x_i$ be the maximum value in $X$.
Note that by Section~\ref{sec:appdeg},
a $2$-approximation for the maximum can be computed within a single round, w.h.p.
The main contribution of this section is the \emph{exact} computation of the maximum value.
\commshort
\begin{theorem}
\commshortend
\commlong
\begin{thm}
\commlongend
\label{thm:exactmax_sh}
The maximum value $\max(X)$ can be computed within $O\left( \frac{\log n}{\log \log n} \right)$ rounds, with high probability.
\commshort
\end{theorem}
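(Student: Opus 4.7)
The plan is to run a randomized tournament in which, with high probability, the number of surviving candidates shrinks by a factor of $\Omega(\log n)$ per round, so that after $O(\log n / \log \log n)$ rounds only the holder of $\max(X)$ remains and every vertex knows its value. The engine of the reduction is the BCC code of message size $O(\log^3 n)$, whose decoding tolerates up to $a := \Theta(\log^2 n)$ distinct transmitted codewords; since the inputs $x_i \in \{1,\dots,n^2\}$ are distinct, each transmission corresponds to a distinct codeword.

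In more detail, let $C$ be an $[n^2, O(\log^3 n), a]$-BCC code. Every vertex maintains a threshold $M$, initially $0$, and a flag marking whether it is still a \emph{candidate} (a vertex is a candidate iff its own input exceeds $M$). At round $t$, each surviving candidate $v$ independently transmits $C(x_v)$ with probability $p_t := \min\{1, a/(2\widetilde{k}_t)\}$, where $\widetilde{k}_t := n/(\log n)^{t-1}$ is a deterministic upper bound on the number $k_t$ of candidates that will be established inductively. In the full-duplex single-hop setting every vertex (including each transmitter) receives the XOR, decodes it via BCC, and updates $M \leftarrow \max\{M, \text{decoded values}\}$. Once $\widetilde{k}_t \leq a$, a single additional round in which every remaining candidate transmits deterministically broadcasts $\max(X)$ to all vertices.

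The main probabilistic ingredient is a per-round contraction lemma: conditioned on $k_t \leq \widetilde{k}_t$, with probability at least $1 - 1/n^2$ both of the following hold. (i) The number of transmitters is at most $a$, so that the received XOR is uniquely BCC-decodable; this is a Chernoff bound on a binomial of expectation at most $a/2$. (ii) The new threshold $M$ is at least the $r$-th largest input among the $k_t$ candidates, for $r = \Theta(k_t/\log n)$; this is the standard quantile estimate, since the probability that none of the top $r$ candidates transmits is at most $(1-p_t)^r \leq \exp(-p_t r)$, and with $r$ chosen so that $p_t r = \Omega(\log n)$ this is at most $n^{-2}$. Together (i) and (ii) yield $k_{t+1} \leq r \leq \widetilde{k}_{t+1}$, preserving the inductive invariant, and a union bound over the $O(\log n/\log\log n)$ rounds gives the claimed high-probability bound.

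The main obstacle I anticipate is calibrating $p_t$ without exact knowledge of $k_t$. I sidestep this by anchoring $p_t$ to the deterministic schedule $\widetilde{k}_t$, which w.h.p.\ upper-bounds $k_t$ by the contraction argument; a possibly loose upper bound only inflates the expected transmitter count by a constant factor, which is absorbed by the Chernoff analysis of item (i). A secondary subtlety is to ensure that the true maximum holder is never eliminated, which is automatic because its input equals or exceeds every value it ever decodes, so it remains a candidate throughout and can be identified at termination (it is the unique remaining candidate, or equivalently the $M$-value known to all vertices at the end). If a half-duplex implementation is needed, one may layer the $O(\log n)$-round simulation of Lemma~\ref{lemma:half-to-full} on top, but for the full-duplex setting targeted by the theorem no such overhead is required.
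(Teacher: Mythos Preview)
Your tournament is sound and matches the paper's high-level scheme: in each phase, active vertices transmit with probability $\Theta(\log^2 n)$ divided by (an estimate of) their current number, so that w.h.p.\ at most $\Theta(\log^2 n)$ codewords collide, BCC decoding succeeds, and the active set shrinks by a $\Theta(\log n)$ factor. The one genuine difference is how that estimate is obtained. The paper spends $O(1)$ extra rounds per phase running the size/degree-approximation primitive on the currently active vertices to obtain $n_t\in\Theta(|A_t|)$, and then sets $p_t=\Theta(\log^2 n)/n_t$; you instead commit up front to the deterministic schedule $\widetilde k_t=n/(\log n)^{t-1}$ and set $p_t=\Theta(\log^2 n)/\widetilde k_t$. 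Your variant is simpler (no auxiliary subroutine) and yields the same round bound, while the paper's variant keeps the per-phase analysis uniform because $p_t$ is always within a constant factor of $\Theta(\log^2 n)/k_t$.

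That difference is exactly what bites in your item~(ii). You take $r=\Theta(k_t/\log n)$ and assert $p_t r=\Omega(\log n)$, but since your $p_t$ is tied to $\widetilde k_t$ rather than $k_t$, one has $p_t r=\Theta\!\bigl((\log n)\, k_t/\widetilde k_t\bigr)$, which is not $\Omega(\log n)$ when the candidate set has already shrunk ahead of schedule (e.g.\ $k_t=\widetilde k_t/\sqrt{\log n}$). The repair is easy: set $r=\widetilde k_t/\log n=\widetilde k_{t+1}$ instead. If $k_t\le\widetilde k_{t+1}$ then $k_{t+1}\le k_t\le\widetilde k_{t+1}$ trivially; otherwise the top $r$ candidates exist and $p_t r=\Theta(\log n)$ gives the quantile bound, hence $k_{t+1}\le r=\widetilde k_{t+1}$. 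The paper's approach sidesteps this case split precisely because its $p_t$ tracks $k_t$ directly.
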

\commshortend
\commlong
\end{thm}
\commlongend
Algorithm~\CompMaxSH\ consists of $O(\log n/\log \log n)$ communication rounds. For simplicity, assume that the input values are distinct. This can be obtained by appending to every input value $\lceil \log n \rceil$ least significant bits corresponding to the ID of the vertex.
Let $c\geq 2$ be an upper bound on the approximation ratio of Algorithm~\ApproxSize\ and set $\tau=\lceil c \cdot \log n/\log \log n \rceil$. Initially, all vertices are active. In round $t=\{1, \ldots, \tau\}$, let $n_t$ be a constant approximation for the number of active vertices at the beginning of round $t$, and let $C$ be an $[n_1, 32c \cdot \log^3 n_1, 32c \cdot \log^2 n_1]$-BCC code%
\footnote{This approximation for the size of the network can be obtained by applying Algorithm~\ApproxSize\ or simply Algorithm~\AlgAppDeg\ in the case of single-hop networks (where only the active vertices participate in these algorithms).}.
After computing $n_t$, every active vertex $v_j$ transmits $C(x_j)$ with probability $p_t=4c \cdot \log^2 n_1/n_t$.
If a vertex $v_i$ receives an input $x_j >x_i$ in round $t$, it becomes inactive. The final result $\max(v_i)$ of every vertex $v_i$ corresponds to the maximum input value $x_j$ it received throughout the algorithm.
This completes the description of the algorithm.

We now analyze the algorithm and begin with correctness.
Let $A_t$ be the active vertex set at the beginning of round $t$.
Note that $A_{\tau} \subseteq \ldots \subseteq A_1 =V$.
Let $v_m$ be a vertex with maximum input, i.e., $x_m=\max(X)$.
\begin{lemma}
\label{cl:max_help}
For each  round
$t \in \{1, \ldots, \tau\}$, with high probability it holds that
$|A_t|=O(n_1/\log^{t-1}n_1)$ and $x_m \in A_t$.
\end{lemma}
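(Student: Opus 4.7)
The plan is to prove Lemma~\ref{cl:max_help} by induction on $t$. The base case $t=1$ is immediate: $A_1 = V$ with $|V| = \Theta(n_1)$ (since $n_1$ is a constant approximation of $n$), and $x_m \in V = A_1$. For the inductive step, I assume the statement at round $t$ and prove it at round $t+1$. The invariant $x_m \in A_{t+1}$ is the easy half: the vertex $v_m$ holding the global maximum can never receive any value strictly larger than $x_m$, and therefore never becomes inactive.

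The substantive part is the size bound $|A_{t+1}| = O(|A_t|/\log n_1)$. First I would show that the BCC decoding at every active vertex succeeds with high probability in round $t$. Since $n_t$ is a constant approximation of $|A_t|$, the expected number of transmitters in round $t$ is $p_t \cdot |A_t| = \Theta(\log^2 n_1)$, well below the BCC capacity of $32c\log^2 n_1$. A standard Chernoff bound then shows that, with probability $1 - n_1^{-\Omega(1)}$, the actual number of transmitters stays within a constant factor of its expectation and in particular is at most the BCC capacity, so all $v \in A_t$ correctly decode the XOR of transmitted codewords.

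Conditioning on successful decoding, I characterize $A_{t+1}$ exactly. Order the vertices of $A_t$ by decreasing value as $v_{(1)}, v_{(2)}, \ldots$, and let $k^* = \min\{k : v_{(k)} \text{ transmits in round } t\}$. Every $v_{(k)}$ with $k > k^*$ receives a value larger than its own and becomes inactive, while $v_{(1)}, \ldots, v_{(k^*)}$ all remain active; hence $|A_{t+1}| = k^*$. Since each vertex transmits independently with probability $p_t$, we have $\Pr[k^* \geq K] = (1 - p_t)^{K-1} \leq \exp(-p_t(K-1))$. Setting $K = C |A_t|/\log n_1$ and using $p_t \geq \log^2 n_1/(c' |A_t|)$ yields the tail bound $\exp(-\Omega(C \log n_1)) = n_1^{-\Omega(C)}$, so $|A_{t+1}| \leq C|A_t|/\log n_1$ with high probability. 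Combined with the inductive hypothesis $|A_t| = O(n_1/\log^{t-1} n_1)$, this gives $|A_{t+1}| = O(n_1/\log^t n_1)$, completing the induction.

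The main obstacle I anticipate is book-keeping the various constants so that they mesh consistently: the constant $c$ in the approximation guarantee on $n_t$, the constant in $p_t = 4c \log^2 n_1/n_t$, the BCC capacity $32c\log^2 n_1$, and the Chernoff parameters, all must be chosen so that (i) the transmitter count is simultaneously $\Omega(1)$ (to ensure some elimination occurs) and $\leq 32c \log^2 n_1$ (so decoding succeeds), and (ii) the failure probability in each round is polynomially small so a union bound over the $O(\log n/\log\log n)$ rounds preserves the high-probability guarantee. The dependency issues that might seem worrying—e.g., that $n_t$ is itself random and that $|A_{t+1}|$ depends on the same randomness as the transmitter-count event—are handled cleanly by conditioning on the inductive hypothesis and on the good event that decoding succeeds, after which the surviving set is a deterministic function of the independent transmission coins.
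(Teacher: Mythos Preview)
Your proposal is correct and follows essentially the same approach as the paper: induction on $t$, a Chernoff bound to keep the transmitter count within the BCC capacity, and the observation that once the top-ranked transmitter is among the leading $|A_t|/\log n_1$ vertices, everyone below it drops out. The only cosmetic difference is that the paper fixes the set $H_{t-1}$ of the top $\lceil |A_{t-1}|/\log n_1\rceil$ vertices and uses Chernoff (expected number of transmitters in $H_{t-1}$ is at least $4\log n_1$) to guarantee one of them transmits, whereas you analyze the rank $k^*$ of the top transmitter directly via the elementary bound $(1-p_t)^{K-1}\le e^{-p_t(K-1)}$; both yield the same $n_1^{-\Omega(1)}$ failure probability.
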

\begin{proof}
The claim is shown by induction.
For the base of the induction $t=1$, we have that $A_1=V$, and $n_1\leq c \cdot n$ since by the properties of Algorithm~\ApproxSize\ it holds that with high probability $n_1 \in [n/2, c \cdot n]$ for some constant $c\geq 2$. Assume that the claim holds up to step $t-1\geq 1$ and consider step $t$. Order the values of the vertices in $A_{t-1}$ in increasing order of their inputs and consider the subset $H_{t-1} \subset A_{t-1}$ of the $\lceil |A_{t-1}|/\log n_1 \rceil$ vertices with the highest input values in $A_{t-1}$.
We first claim that with high probability, at least one of the vertices in $H_{t-1}$ transmits in round $t-1$.
Since every vertex in $A_{t-1}$ transmits with probability of $p_{t-1}=4c\log^2 n_1/n_{t-1}$ and $n_{t-1}\leq c \cdot |A_{t-1}|$, in expectation there are at least $4\log n_1$ transmitting vertices in $H_{t-1}$ and hence, by a Chernoff bound, w.h.p there is at least one transmitter in $H_{t-1}$.

We proceed by showing that the number of transmitting vertices in round $t-1$ is $O(\log^2 n)$. In expectation, the number of transmitting vertices in $A_{t-1}$ is at most $8c \cdot \log^2 n_1$, and hence by Chernoff bound, with high probability
there are less than $32c\log^2 n_1$ transmitters. By the properties of the BCC code, all messages received in round $t-1$ are decodable.
This implies that all vertices know the value of at least one vertex in $H_{t-1}$ and as a result all vertices in $V \setminus H_{t-1}$ become inactive. In other words, $A_{t} \subseteq H_{t-1}$ and hence $n_t \leq |H_{t-1}|=|A_{t-1}|/\log n_1=O(n_1/\log^{t}n_1)$, where the last equality holds w.h.p by the induction assumption. Finally, by the induction assumption for $t-1$, $v_m \in A_{t-1}$, since all messages were decoded successfully in round $t-1$ w.h.p, it holds that $v_m$ remains in $A_{t}$ as well. The claim follows.
\commshort\qed\commshortend
\end{proof}
We thus have the following, which proves Theorem~\ref{thm:exactmax_sh}.
\begin{lemma}
\label{lem:correct_le_sh}
With high probability $\max(v_i)=\max(X)$ for every vertex $v_i \in V$.
\end{lemma}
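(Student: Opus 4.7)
The plan is to derive this lemma essentially as a corollary of Claim~\ref{cl:max_help}. Condition on the high-probability event guaranteed by that claim: for every $t \in \{1,\ldots,\tau\}$ it holds simultaneously that $v_m \in A_t$ (the vertex achieving the maximum stays active) and that $|A_t| = O(n_1/\log^{t-1} n_1)$. A union bound over the $\tau = O(\log n/\log\log n)$ rounds preserves the high-probability guarantee.

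Next, I would plug in $t=\tau = \lceil c\log n/\log\log n\rceil$ and compute $\log^{\tau-1} n_1 = 2^{(\tau-1)\log\log n_1} = n_1^{\Theta(c)}$, so the bound forces $|A_\tau| = O(1)$. Consequently the approximation satisfies $n_\tau = O(1)$ and the transmission probability in the final round saturates: $p_\tau = \min\{1,\, 4c\log^2 n_1/n_\tau\} = 1$. In round $\tau$ every active vertex therefore transmits deterministically; in particular $v_m$ transmits the codeword $C(x_m)$.

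Then I would invoke the BCC property. Because $|A_\tau| = O(1) \leq 32c\log^2 n_1$, the number of concurrent transmissions is well within the decoding threshold of the $[n_1,32c\log^3 n_1,32c\log^2 n_1]$-BCC code used in round $\tau$. Hence every vertex $v_i \in V$ uniquely decodes the XOR of the transmitted codewords, and in particular recovers $x_m$. Since $\max(v_i)$ is defined as the largest value $v_i$ ever received, and $x_m = \max(X)$ dominates every value in $X$, we conclude $\max(v_i) = x_m$ for all $v_i\in V$.

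The only delicate point is making sure that at the round in which $v_m$ definitely transmits, the BCC is still in its decodable regime. These two requirements pull in opposite directions (saturation of $p_t$ needs $n_t$ small, while BCC decodability needs the number of transmitters not too large), but both are controlled by the same quantity $|A_\tau|$ and Claim~\ref{cl:max_help} already makes this quantity $O(1)$, so a single application handles both. No additional case analysis is needed, and the argument assembles into the claim.
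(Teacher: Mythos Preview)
Your proposal is correct and follows essentially the same approach as the paper: invoke Lemma~\ref{cl:max_help} to conclude that $v_m$ is still active in round~$\tau$ while $|A_\tau|$ is small, so that $v_m$ transmits and its codeword is decodable by everyone. Your version is just more explicit than the paper's (you spell out why $p_\tau$ saturates to~$1$ and why the BCC threshold is met, and you derive the tighter bound $|A_\tau|=O(1)$ rather than the paper's $O(\log n)$), but the argument is the same.
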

\commlong
\begin{proof}
By
Lemma \ref{cl:max_help}, after $\tau$ rounds there are $O(\log n)$ active transmitters w.h.p.
Since the vertex $v_m$ with the maximum input $\max(X)$ remains active in each round, it transmits in the last round, and as its message can be successfully decoded by all vertices, the claim follows.
\commshort\qed\commshortend
\end{proof}
\commlongend

\commlong
\subsubsection{Multi-hop network}
\label{sec:max multi}

The single-hop network protocol can be extended for general networks by paying an extra multiplicative factor of $O(D)$. Notice that this is not immediate from the fact that the diameter is $D$, since it is not clear that concurrent instances of the algorithm for single-hop networks can be run in parallel without incurring an overhead in the number of rounds. Instead, we obtain the result by simulating each round of the algorithm for single-hop networks in a general network in $O(D)$ rounds using a leader and a BFS tree rooted at it.
\begin{theorem}
\label{thm:approx_max_mh}
For any network $G$ of diameter $D$, the maximum value can be computed within $O\left( D \cdot \frac{\log n}{\log \log n} \right)$ rounds, with high probability.
\end{theorem}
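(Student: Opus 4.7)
\begin{proofof}{Theorem~\ref{thm:approx_max_mh} (sketch)}
The plan is to adapt the single-hop algorithm \CompMaxSH\ from Theorem~\ref{thm:exactmax_sh} by simulating each of its $\tau = O(\log n / \log\log n)$ ``global'' rounds using $O(D)$ rounds of communication on a rooted spanning tree of diameter $O(D)$. First, I would invoke \LeaderElection\ from Section~\ref{app:leader} to elect a leader $v^*$ and construct a BFS tree $T$ rooted at $v^*$ in $O(D)$ rounds w.h.p., and then invoke \ApproxSize\ to obtain a constant-factor estimate $n_1$ of the network size, again in $O(D)$ rounds. This gives the initial parameter needed for the transmission probabilities $p_t$ used in the single-hop analysis.

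Next, I would simulate each of the $\tau$ rounds of \CompMaxSH\ in $O(D)$ rounds as follows. In the $t$-th simulated round, every active vertex $v_j$ independently flips a coin and ``transmits'' its value $x_j$ with probability $p_t = 4c\log^2 n_1 / n_t$. Rather than performing an actual broadcast, we convergecast the maximum among the transmitting vertices to $v^*$ along $T$: each node forwards to its parent the maximum over its subtree of the values currently flagged as transmitted, and then $v^*$ downcasts this global maximum back down $T$. Using BIC codes (Lemma~\ref{lem:BIC_logn}), each parent-child message — consisting of a constant number of $O(\log n)$-bit fields — is decoded reliably per round, so pipelined convergecast plus downcast takes $O(D)$ rounds. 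After receiving the global maximum $y_t$ of this round, every active vertex with $x_j < y_t$ becomes inactive. To refresh the estimate $n_{t+1}$ of the active population between rounds, I would piggyback a subtree-count of the remaining active vertices onto the same convergecast, so the root can broadcast $n_{t+1}$ together with $y_t$.

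The correctness analysis of Lemma~\ref{cl:max_help} then transfers essentially verbatim: conditioning on reliable aggregation of the max and reliable downcast, the set $H_{t-1}$ of top-$|A_{t-1}|/\log n_1$ values contributes at least one transmitter w.h.p., and the number of transmitters globally stays $O(\log^2 n_1)$ w.h.p.\ so that the max is delivered to every vertex. In particular, the globally maximum vertex $v_m$ remains active throughout and its value is learned by all vertices by round $\tau$. Summing over $\tau = O(\log n/\log\log n)$ simulated rounds, each costing $O(D)$, yields the claimed bound.

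The main obstacle is ensuring that the per-step simulation does not inflate the number of distinct pieces of information above the BIC threshold: in a single-hop network, all transmitters speak simultaneously and the BCC/BIC limit of $O(\log^2 n)$ values is applied once, whereas in the multi-hop simulation each node must aggregate up to $O(\log^2 n)$ distinct values from its subtree into one message. This is resolved by observing that only the \emph{maximum} of transmitted values needs to be propagated — a single $O(\log n)$-bit field per link — so each convergecast/downcast step carries $O(1)$ distinct values per neighborhood per round, well within the BIC regime. The other subtle point is ensuring that $n_t$ remains an accurate estimate across rounds; a subtree-sum aggregation gives an \emph{exact} count rather than relying on repeated invocations of \ApproxSize, which avoids extra logarithmic overheads.
\end{proofof}
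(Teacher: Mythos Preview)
Your high-level plan matches the paper's: elect a leader, build a BFS tree, and simulate each of the $\tau=O(\log n/\log\log n)$ single-hop rounds by an $O(D)$-round convergecast/downcast on the tree. The paper does essentially this, except that it upcasts the BIC-encoded values of all sampled vertices to the root (which then decodes and picks the max) rather than aggregating the maximum hop by hop as you do; both variants are fine.

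There is, however, a real gap in your ``subtree-sum aggregation'' for refreshing $n_t$. In the additive model a node does not see its children's messages separately: it receives the XOR of the transmissions of \emph{all} neighbors at the currently transmitting BFS level. To recover the individual subtree counts and add them, you would have to BIC-decode them, which requires the number of \emph{distinct} values in the neighborhood to be within the code's bound. But sibling subtree sizes are arbitrary integers, and a high-degree node can easily have $\omega(\log n)$ children with pairwise different counts, so decoding fails. (Unlike the max-field, there is no global $O(\mathrm{polylog}\, n)$ bound on the number of distinct counts.) This is precisely why the paper re-invokes \ApproxSize\ at the start of each phase---still $O(D)$ rounds, so the overall bound is unaffected---rather than attempting an exact count. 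With that substitution your argument goes through.

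A smaller slip: the claim that the max-convergecast ``carries $O(1)$ distinct values per neighborhood per round'' is not correct. Each node \emph{sends} one value, but a parent receives the XOR over all its level-$(t{+}1)$ neighbors, whose subtree maxima need not coincide. The right bound is that the number of distinct subtree maxima anywhere is at most the number of vertices that flipped ``transmit'' in this phase, which is $O(\log^2 n)$ w.h.p.; hence you need a code with information bound $\Theta(\log^2 n)$, not the basic $\Theta(\log n)$ BIC of Lemma~\ref{lem:BIC_logn}. The paper accordingly switches to an $[n_1,\log^4 n_1,\cdot]$-BIC code for this part.
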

\begin{proof}
Algorithm~\CompMaxMH\ operates in a very similar manner to the single-hop case with some modifications. First, the vertices use an $[n_1, \log^4 n_1, \log n_1]$-BIC code (instead of the deterministic BCC code) where $n_1$ is an upper bound on the number of vertices in the network. In addition, the algorithm computes a BFS tree $T$ rooted at some leader $v^*$ using Algorithm~\LeaderElection\ and Algorithm~\ConstructBFS. 

Next, the algorithm proceeds by $\tau=O(\log n /\log \log n)$ phases corresponding to the $\tau$ communication rounds in the single-hop case, each phase consists of $O(D)$ communication rounds. Each phase $t \in \{1, \ldots, \tau\}$ begins by applying Algorithm~\ApproxSize\ to compute the value $n_t$, a constant approximation for the number of active vertices $|A_t|$. Then, every active vertex $v_i \in A_t$ transmits $C^I_{i,t}(x_i)$ with probability $p_t=\log^2 n_1/n_t$ where $C^I_{i,t}$ is a random instance of the $[n_1, \log^4 n_1, \log n_1]$-BIC code. These values are then upcast to the root $v^*$ within $O(D)$ rounds. The root $v^*$ uses BIC decoding and downcasts the maximal value among all the values of the transmitting vertices in $A_t$.
If a vertex $v_i$ receives an input $x_j >x_i$ in phase $t$ then it becomes inactive. The final result $\max(v_i)$ of every vertex $v_i$ corresponds to the maximum input value $x_j$ it received throughout the algorithm. This completes the description of the algorithm, and the correctness follows the same line of argumentation as in the single-hop case.
\end{proof}

\commlongend

\section{Discussion}
\label{sec:dicussion}
It is clear that computing in the additive network model should be doable faster than in the standard radio network model. In this paper we quantify this intuition, by providing efficient algorithms for various cornerstone distributed tasks. Our work leaves open several important open questions for further research.
First, it is natural to ask whether our algorithms can be improved. Specifically, most of our algorithms apply for the full-duplex model and translate into half-duplex by paying an extra factor of $O(\log n)$. It would be interesting to obtain better bounds for half-duplex radios without using the full-duplex protocol as a black box.
An additional axis that requires investigation is the multiple channels model. It would be interesting to study the tradeoff between running time, message size and the number of channels. Note, that whereas most of our algorithms are optimal for  full-duplex radios (up to constant factors), some leave room for improvements. For example, in the problem of computing the maximum input, we believe that some pipelining of the simulation of phases should be able to give a round complexity of $O(D+\log{n}/\log\log{n})$, instead of the current $O(D\cdot\log{n}/\log\log{n})$. However, this is not immediate.
Designing lower bounds for this model is another important future goal. It seems that the problem of computing the maximum input in a single-hop network, should be a good starting point, as we believe that this task requires $\Omega(\log{n}/\log\log{n})$ rounds.
Another interesting future direction involves the
implementation of an abstract MAC layer over \emph{additive}
radio network model. Such an implementation was provided recently \cite{MAC11} for the standard radio network model.
Finally, we note that all our algorithms are randomized, as opposed to the original definition of BCC codes.
Is randomization necessary? What is the computational power of the additive network model without randomization?


\commlong

\clearpage
\pagenumbering{roman}
\appendix
\renewcommand{\theequation}{A-\arabic{equation}}
\setcounter{equation}{0}

\centerline{\large{\bf APPENDIX}}
\numberwithin{equation}{section}

\section{Additional Related Work}
\label{app:addrelatedwork}
\ADDITIONALRELATEDWORK

%

\section{Additional Symmetry Breaking Tasks}
\label{subsec:more_sbt}

\subsection{Construction of BFS Trees}
\label{app:bfs}

\def\APPENDBFSTHM{
For every network $G=(V,E)$ of diameter $D$ and a source vertex $s \in V$, a BFS tree rooted at $s$ can be constructed with high probability within $O(D)$ communication rounds.
} 

In this section, we consider the construction of a Breadth-First-Search (BFS) tree rooted at a given source vertex $s$. Towards the end of this section, we show the following.
\begin{thm}
\label{thm:bfs}
\APPENDBFSTHM
\end{thm}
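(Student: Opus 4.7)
My plan is to build the BFS tree rooted at $s$ by propagating layer information outward, using the $\SL$ function of Lemma~\ref{lem:sl} to break symmetry among potential parents and BIC codes (Lemma~\ref{lem:BIC_logn}) to decode the bounded-information messages. Each vertex $v$ first computes its $\SL$ pair $(r(v),z(v))$ locally. The algorithm proceeds in $O(D)$ phases, each using $O(1)$ rounds, so that BFS layer $t$ is settled in phase $t$. In phase $0$, $s$ alone transmits $C^I(\id_s)$, and since no one else transmits, every neighbor decodes, adopts level $1$ and sets its parent to $s$.

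For phase $t\ge 1$, I would use a three-round protocol to let each not-in-tree vertex $u$ with at least one level-$t$ neighbor identify a unique parent. In round~1, every level-$t$ vertex $v$ transmits $C^I(r(v))$; by Lemma~\ref{lem:sl}(a), the number of distinct $r$-values in any neighborhood is $O(\log n)$ w.h.p., so BIC decoding succeeds and each such $u$ learns $r^*_u$, the maximum $r$-value among its level-$t$ neighbors. In round~2, each not-in-tree $u$ transmits $C^I(r^*_u)$; each level-$t$ vertex $v$ decodes to obtain $R'_v=\{r^*_{u'} : u'\in\Gamma(v),\, u'\notin T\}$ and declares itself a candidate parent iff $r(v)\in R'_v$. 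In round~3, every candidate $v$ transmits $C^I(r(v))\circ C^I(z(v))\circ C^I(\id_v)$, and each $u$ selects as its parent the decoded vertex whose $r$-value equals $r^*_u$ and whose $z$-value is maximal; by Lemma~\ref{lem:sl}(b)-(c), the vertices $v\in\Gamma(u)\cap L_t$ with $r(v)=r^*_u$ are at most $O(\log n)$ w.h.p.\ and have pairwise distinct $z$-values, so the parent is uniquely defined.

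Correctness then follows by induction on the layer index $t$: assuming the first $t-1$ layers have been placed correctly, phase $t$ adds exactly those vertices at BFS distance $t$ from $s$, each with a correct parent at distance $t-1$. The total round complexity is $O(D)$ since each phase consists of a constant number of rounds. The main obstacle is arguing that round~3 is decodable in its entirety: the $v$'s with $r(v)=r^*_u$ are the intended parents and are bounded by Lemma~\ref{lem:sl}(b), but the candidate filter $r(v)\in R'_v$ may also admit candidates $v\in\Gamma(u)\cap L_t$ whose $r$-value equals $r^*_{u'}$ for some other not-in-tree neighbor $u'\ne u$. One must show either that the aggregate number of such $(r,z,\id)$ triples seen by $u$ remains within the BIC capacity of $O(\log n)$, or amend the protocol with an additional confirmation round in which $u$ broadcasts its target pair $(r^*_u,z^*_u)$ so that only the uniquely-matching $v$ responds with its identifier, preserving the $O(D)$ total round complexity.
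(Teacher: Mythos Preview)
You correctly pinpoint the obstacle, but neither of your two suggested resolutions works, and this is precisely where the paper's argument diverges from yours.

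Your first option, arguing that the total number of candidate triples seen by $u$ in round~3 stays within $O(\log n)$, fails in general. Take $u\in L_{t+1}$ with $\Theta(n)$ neighbors $v_1,\ldots$ in $L_t$, each $v_i$ having one additional private child $u_i\in L_{t+1}$. Then $r^*_{u_i}=r(v_i)$ trivially, so every $v_i$ passes your round-2 filter and becomes a candidate; $u$ then receives $\Theta(n)$ distinct identifiers and BIC decoding fails. Your second option, a confirmation round in which $u$ broadcasts $(r^*_u,z^*_u)$, is circular: $u$ cannot know $z^*_u$ without first decoding the $z$-values of its max-$r$ parents, which is exactly the round-3 decoding you are trying to rescue. (There is also a subtler issue: Lemma~\ref{lem:sl}(c) bounds $z$-collisions only for the \emph{global} maximizers in $\maxSL$; for a local maximum $r^*_u$ that is small, the $z$-range $2^{8r^*_u}$ can be a constant and collisions occur with constant probability.)

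The paper sidesteps all of this by \emph{positionally encoding} the $r$-value rather than filtering on it. In a single round (per residue class $t\bmod 3$), each $v\in L_t$ writes the BCC codeword $C(\id_v)$ into block $r(v)$ of a $(2\log n)$-block message and leaves all other blocks empty. The receiver $u$ simply reads the last occupied block, which is block $r^*_u$ by construction; by the argument of Lemma~\ref{lem:sl}(b) applied to the set $\Gamma(u)\cap L_{t-1}$, at most $2\log n$ vertices wrote there, and since identifiers are distinct a plain BCC code suffices. No $z$-values, no candidate filtering, no extra rounds. Structurally, the paper also separates the two concerns you interleave: a first $O(D)$-round phase computes all levels and the depth $D_s$ (which you need anyway for termination, a point your write-up does not address), and the parent-selection phase then takes only three rounds total.
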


Note that one can combine Theorem \ref{thm:bfs} and the leader-election protocol to construct a spanning tree of radius $O(D)$ within $O(D)$ rounds w.h.p. (i.e., by computing first a leader in $O(D)$ rounds and then constructing a BFS tree with respect to this leader).

Let $D_{s}=\max_{v \in V}\dist(s,v,G)$ be the radius of the BFS tree and define $L_t=\{v \in V ~\mid~ \dist(s,v,G)=t\}$ as the vertices at distance $t$ from $s$, for every $t \in \{0,\ldots, D_s\}$. Recall that we assume that each vertex $v$ has a unique identifier $\id_v$ in the range of $[1, \ldots, n^c]$ for some constant $c\geq 1$.

Algorithm \ConstructBFS\ consists of two phases.
The first phase consists of $O(D)$ stages, during which the vertices compute their \emph{level} in the BFS tree rooted at $s$ (i.e., distance from $s$). To detect termination, the vertices also compute the diameter $D_s$. The second phase consists of $3$ communication rounds, and is devoted for selecting a unique parent for each vertex.
Each of the $O(D)$ stages of the first phase consists of two alternating rounds in which two types of messages, \maxlevel\ and \mylevel\ are sent by the vertices. In the odd round of stage $t\geq 0$, the vertices of $L_t$, transmit a \mylevel\ message consisting of the $C^I_{t,v,1}$ codeword of their level where each $C^I_{t,v,1}$ is a randomly sampled instance of $[N, \log^3 N, \log N]$-BIC code.
Initially, every vertex $v$, sets $d_v=\infty$. Upon receiving the first \mylevel\ message $d'$ it lets $d_v \gets d'+1$ and it becomes active in the odd round of the next stage.
In the even round of stage $t\geq 0$, \emph{every} vertex $v$ uses $C^I_{t,v,0}$, a randomly sampled instance of $[N, \log^3 N, \log N]$-BIC code. It transmits a \maxlevel\ message consisting of the $C^I_{t,v,0}$ codeword of the value $d^*_t(v)$ where $d^*_t(v)$ is the \emph{maximum} value of all previous \maxlevel\ messages, where $d^*_0(v)$ is initialized to $0$. If the root vertex $s$ did not receive a \maxlevel\ message with an increased value for more than two stages, it initiates a termination message.
Upon receiving a termination message, a vertex $v \in L_t$, waits for $D_s-t$ rounds before beginning the second phase.

The second phase aims to break the symmetry between potential parents of a given vertex. To provide a separation between conflicting levels in the BFS tree, the phase consists of three communication rounds. The vertices of the level $L_t$ transmit an  $O(\log^3 n)$-bit message $m_v$ in round $t \mod 3$. Let $C$ be an $[n, \log^2 n, \log n]$-BCC code that is used to encode the IDs of the vertices. The $O(\log^3 n)$ bits message $m_v$ is divided into $2\log n$ blocks each of size $O(\log n)$. The vertex $v$ writes the codeword of its ID in the $j$'th block with probability of $2^{-j}$. Formally, let $r(v)$ be the $r$-value computed by Protocol $\SL$.  Then, $v$ writes $C(\id_v)$ in the $r(v)$'th block.

Upon receiving a message in round $(t-1)\mod 3$, a vertex $u \in L_t$ decodes the last occupied block of the received message (in the analysis we show that the decoding is successful); it then selects one of the decoded IDs as its parent.
This completes the description of the algorithm.

\paragraph{Analysis.}
We begin by analyzing the first phase of the algorithm and show that the vertices successfully compute their level in the BFS tree and the diameter.
\begin{claim}
\label{cl:bfs_firstround}
With high probability, the following hold:
\begin{enumerate}
\item After the odd round of stage $t$, $d_v=t$ for every $v \in L_t$, $t \in \{1, \ldots, D_s\}$.

\item  After the even round of stage $2D_s-t$, $d^*_t(v)=D_s$ for every vertex $v \in L_t$.

\item  The root $s$ initiates a termination message in stage $2D_s+2$.
\end{enumerate}
\end{claim}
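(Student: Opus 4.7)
The plan is to prove each of the three items by induction on the stage number, separately, reusing the structure of the algorithm and the properties of the BIC code (Lemma \ref{lem:BIC_logn}).

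For item 1, I would induct on $t$ from $1$ to $D_s$. The base case $t=1$ is direct: in the odd round of stage $0$, only $s=L_0$ transmits, sending the single value $0$ via $C^I$; with just one distinct value present, BIC decoding trivially succeeds w.h.p., so every $v\in L_1$ receives $0$ and sets $d_v=1$. For the inductive step, assume that after the odd round of stage $t-1$ every $v\in L_{t-1}$ has $d_v=t-1$. Then in the odd round of stage $t$ the set of active transmitters is exactly $L_{t-1}$ (vertices that became active one stage after computing their level), and they all transmit the \emph{same} value $t-1$. So every neighborhood sees at most one distinct \mylevel\ value, BIC decoding succeeds by Lemma \ref{lem:BIC_logn}, and every $v\in L_t$ correctly updates $d_v\gets (t-1)+1=t$.

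For item 2, I would induct on $t$ \emph{decreasing} from $D_s$. The base case $t=D_s$ holds after stage $D_s$: by item 1, every $v\in L_{D_s}$ knows $d_v=D_s$, and since $d^*_t(v)$ includes $v$'s own known level we obtain $d^*_{D_s}(v)=D_s$, matching $2D_s-t=D_s$. For the inductive step, assume every $v\in L_{t+1}$ has $d^*_{2D_s-(t+1)}(v)=D_s$. Then in the even round of stage $2D_s-t$ at least one neighbor of each $u\in L_t$ (namely a neighbor in $L_{t+1}$ on the tree path to a vertex of $L_{D_s}$) transmits $C^I(D_s)$, so if the decoding succeeds, $u$ updates $d^*$ to $D_s$. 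The crux is verifying that BIC decoding of the \maxlevel\ round succeeds w.h.p. Here I would exploit the one-hop-per-stage structure of the \maxlevel\ wave: for any neighbor $u$ of $v$, the containment $N_{t-1}(v)\subseteq N_t(u)\subseteq N_{t+1}(v)$ of hop-neighborhoods gives $d^*_{t-1}(v)\le d^*_t(u)\le d^*_{t+1}(v)$, so all \maxlevel\ values in $v$'s neighborhood at stage $t$ lie in the interval $[d^*_{t-1}(v),d^*_{t+1}(v)]$, and arguing that this interval contains $O(\log n)$ integers (because $d^*$ grows by at most a constant per stage in any bounded neighborhood, given the BFS level constraint) yields the BIC premise of $O(\log n)$ distinct values per neighborhood.

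Item 3 is an immediate consequence of item 2 applied with $t=0$: after the even round of stage $2D_s$ we have $d^*_{2D_s}(s)=D_s$. Since $D_s$ is the global maximum, $s$ receives no strictly larger \maxlevel\ value in stages $2D_s+1$ or $2D_s+2$, so by the termination rule of waiting two stages without improvement, $s$ initiates a termination message precisely in stage $2D_s+2$.

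The main obstacle is the BIC-success argument inside item 2: unlike the \mylevel\ round, where a single value per level makes the decoding trivial, the \maxlevel\ round transmits dynamically-updated values that can a priori differ across a neighborhood. The interval-containment argument above, combined with the wave-like growth of $d^*_t$, is what I would push on, and coupling it with the union bound over the $O(D)$ stages and $n$ vertices gives the desired w.h.p.\ guarantee.
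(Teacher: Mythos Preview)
Your treatment of items 1 and 3 matches the paper's proof essentially verbatim, so those are fine.

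For item 2, however, your interval-containment argument is both more complicated than necessary and not fully justified. The containment $d^*_{t-1}(v)\le d^*_t(u)\le d^*_{t+1}(v)$ already presupposes that $u$ successfully decoded $v$'s \maxlevel\ transmission at stage $t-1$ (and vice versa), which is exactly what you are trying to establish; since your induction for item 2 runs over \emph{levels} (decreasing $t$) rather than over \emph{stages}, this circularity is not resolved by the inductive structure you set up. Moreover, the assertion that the interval $[d^*_{t-1}(v),d^*_{t+1}(v)]$ contains only $O(\log n)$ integers ``because $d^*$ grows by at most a constant per stage'' is stated without proof, and would itself require an inductive argument over stages.

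The paper sidesteps all of this with a single clean observation you are missing: by symmetry of the protocol, \emph{all vertices in the same BFS level $L_t$ hold the same value $d^*_t(\cdot)$ at every stage}. Since any vertex has neighbors only in $L_{t-1}\cup L_t\cup L_{t+1}$, it therefore receives at most \emph{three} distinct \maxlevel\ values in any even round, and BIC decoding succeeds immediately by Lemma~\ref{lem:BIC_logn}. With this in hand, the backward induction on levels goes through exactly as you outlined, but with no interval bookkeeping and no circularity.
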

\begin{proof}
Part (1) is shown by induction on $t$. For the base of the induction, consider $t=1$. In the odd round of stage $t=1$, the root vertex $s$ transmits $0$ and thus the vertices of $L_1$ successfully decode this value and by letting $d_v=1$ they hold the correct distance. Assume the claim holds up to stage $t-1$ and consider stage $t$. The only active vertices in stage $t$ are $L_{t-1}$ and by the induction assumption $d_v=t-1$ for every $v \in L_{t-1}$ at the beginning of stage $t$. Hence, by the properties of the BIC code, the vertices of $L_{t}$ successfully decode the message \mylevel\ that contains the value $t-1$ and $d_v=t$ for every $v \in L_{t}$ as desired.

Consider Part (2). It is easy to see to that $d^*_t(v)=d^*_t(v')$ for every $v, v' \in L_t$ for every level $L_t$. In addition, observe that within every two stages, the root $s$ gets a \maxlevel\ message with an increased value. By Part (1), in stage $D_s$, the vertices of level $D_s$ know their level. Since the vertices of each level transmit the same \maxlevel\
message, every vertex receives at most three distinct \maxlevel\ values in a single round, and by the properties of the BIC code, the message is decoded successfully.  The claim can be shown by a backwards induction on the stage $t$, in a similar manner to Part (1).
Finally, by Part (2), $s$ receives $D_s$ in stage $2D_s$ and hence terminates in stage $2D_s+2$, during these two rounds no \maxlevel\ message with an improved value is received and hence it initiates termination.
\end{proof}
We now consider the second phase. Note that by the waiting time defined for every vertex upon receiving the termination message from $s$, the vertices are synchronized at the second phase.
To establish correctness, it is left to show that with high probability, the message size is sufficient and the for every vertex $v$,  the last occupied block of each received message is decodable. Recall that we define $\jSL=\max\{r(v) \mid v\in V \}$ and $\maxSL=\{v\in V \mid r(v)=\jSL\}$.
\begin{claim}
\label{lem:bfs_cor}
Consider the vertices of $L_t$ for some $t \in [0,\ldots, D_s]$. W.h.p., the following hold:
\begin{enumerate}[(a)]
\item  The only transmitting neighbors of $v \in L_t$ in round $(t-1)\mod 3$ are in $L_{t-1}$.

\item $\jSL\leq 2\log n$;

\item The decoding of the last occupied block is successful;
\end{enumerate}
\end{claim}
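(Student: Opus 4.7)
The plan is to handle the three parts in order, exploiting the fact that the second phase is a one-shot communication step (a single round per vertex) and that the BFS layering together with the modular round schedule gives a clean separation between which layers transmit when.

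For part (a), I would simply observe that any neighbor of $v \in L_t$ in $G$ must lie in $L_{t-1} \cup L_t \cup L_{t+1}$ by the definition of BFS levels. In the second phase, a vertex in $L_{t'}$ transmits only in round $t' \bmod 3$. Among the three residues $t-1, t, t+1 \pmod 3$ the only one equal to $(t-1)\bmod 3$ is $t-1$ itself, so $v$'s neighbors in $L_t$ and $L_{t+1}$ are silent in round $(t-1) \bmod 3$, and $v$ itself (in $L_t$) is also silent then. Hence the only transmitting neighbors of $v$ in that round lie in $L_{t-1}$.

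For part (b), I would mirror the argument of Lemma~\ref{lem:sl}(a): since $\Pr[r(v) \geq k] = \sum_{j \geq k} 2^{-j} = 2^{-(k-1)}$, choosing $k = 2\log n + c'$ for a suitable constant $c'$ and applying the union bound over the $n$ vertices yields $\jSL \leq 2\log n$ w.h.p. (In fact one can just cite the $3\log n+1$ bound from Lemma~\ref{lem:sl}(a); the constant in the statement is not essential, only that $\jSL = O(\log n)$.)

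For part (c), this is the main point. Let $j^*(v)$ be the index of the last occupied block of the message received by $v$, i.e.\ the largest $j$ such that at least one transmitting neighbor $u$ has $r(u) = j$. By part (a) the transmitters contributing to $v$'s received message are contained in $L_{t-1}$, and by the definition of $\maxSL$, every vertex $u \in L_{t-1}$ contributing to block $j^*(v)$ must satisfy $r(u) = j^*(v) \leq \jSL$. Only vertices $u$ with $r(u) = j^*(v)$ wrote their codeword $C(\id_u)$ in that block; the number of such transmitters is at most $|\maxSL|$, which by Lemma~\ref{lem:sl}(b) is at most $2\log n$ w.h.p. Combined with part (b) ensuring the block index is within the message range, the contents of block $j^*(v)$ is the XOR of at most $O(\log n)$ codewords of the $[n, \log^2 n, \log n]$-BCC code $C$, hence decodable by the BCC property. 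Taking a union bound over all vertices and all rounds of the second phase gives success w.h.p.

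The main technical obstacle is part (c): one has to verify that the number of \emph{distinct} IDs contributing to the highest occupied block at $v$ is bounded by the BCC decoding capacity, which requires tying together $|\maxSL| \leq O(\log n)$ with the choice of BCC parameters, and showing that the restriction to $v$'s neighborhood in $L_{t-1}$ only decreases this count. The remaining parts (a) and (b) are essentially a scheduling observation and a direct reuse of the $\SL$ tail bound.
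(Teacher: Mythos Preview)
Parts (a) and (b) are handled correctly and match the paper's argument (including your observation that the constant $2\log n$ versus $3\log n+1$ is immaterial).

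For part (c) there is a genuine gap. You assert that the number of transmitters writing into block $j^*(v)$ is at most $|\maxSL|$, and in your summary that ``the restriction to $v$'s neighborhood in $L_{t-1}$ only decreases this count.'' Neither claim is justified. The set $\maxSL$ consists of the vertices $u$ with $r(u)=\jSL$, the \emph{global} maximum, whereas the vertices contributing to block $j^*(v)$ are those $u\in\Gamma(v)\cap L_{t-1}$ with $r(u)=j^*(v)$, and $j^*(v)$ may be strictly smaller than $\jSL$. Passing to a subset can lower the maximum $r$-level, and the number of vertices sitting at that lower level is not controlled by $|\maxSL|$: for example, globally there could be a single vertex with $r=\jSL$ while many vertices in $\Gamma(v)\cap L_{t-1}$ share some smaller value $j^*(v)$. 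So the monotonicity you rely on is false.

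The paper closes this gap by invoking not the statement but the \emph{proof} of Lemma~\ref{lem:sl}(b): the coin-flip argument there shows that for \emph{any} set $S$ of vertices, the event that more than $2\log n$ of them attain the maximum $r$-value within $S$ has probability at most $2^{-2\log n}=1/n^2$ (if all $2\log n$ of them flipped tails simultaneously). Applying this with $S=\Gamma(v)\cap L_{t-1}$ for each $v$ and taking a union bound over the $n$ vertices yields the required $O(\log n)$ bound on the number of distinct IDs in the last occupied block; the BCC property then gives successful decoding. Your outline becomes correct once you replace the appeal to the global $\maxSL$ by this local application of the argument.
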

\begin{proof}
Part (a) follows by definition where in round $i$ such that $(t-1)\mod 3=i$ only the parenting level $L_{t-1}$ transmits and the levels $L_t$ and $L_{t+1}$ are silence. This implies that the messages received by the vertices of $L_t$ in this round are sent by the parenting level. Consider (b).
By Lemma \ref{lem:sl}, 
with high probability,
$\jSL \leq 2\log n$ and $|\maxSL| \leq 2 \log n$.
Finally, consider (c). By the proof of Lemma \ref{lem:sl}(b), the number of parents that wrote into the last occupied block is at most $2\log n$, and by the properties of the BCC code (and the uniqueness of the ID's), the decoding is successful. 
\end{proof}

\subsection{MIS Computation}
\label{app:mis}

In this section we discuss algorithms for finding an MIS in the network. That is, each node has to output a value in $\{0,1\}$ such that the set of nodes that output $1$ is a maximal independent set in the graph.
We address both general graphs and graphs with bounded-growth. A graph with bounded growth is a graph for which there is a function $f(r)$ such that the number of independent nodes in every $r$-neighborhood is at most $f(r)$. Graphs of bounded growth have been studied in the literature for the standard radio network model since intuitively one expects a real wireless network to be such that stations that are close to some transmitter are also relatively close to each other. Algorithms for computing an MIS in a message passing model were given in~\cite{KuhnMW05,MISR05,SchneiderW2010}, with an optimal algorithm requiring $O(\log^*{n})$ rounds~\cite{SchneiderW2010}. In the standard radio network model with collision detection, an MIS algorithm using $O(\log{n})$ rounds was given~\cite{SchneiderW2010b}.

We claim that for graphs of bounded growth, one can compute an MIS in the additive network model within $O(\text{poly}\log\log{n})$ rounds with full-duplex radios.\footnote{Using our simulation, this implies $\tilde{O}(\log{n})$ rounds with half-duplex radios, but the state-of-the-art algorithm for the standard radio network model with half-duplex radios requires only $O(\log{n})$ rounds~\cite{SchneiderW2010b}.}
The main tool that is required by the algorithm is the degree approximation procedure of Section ~\ref{sec:approx}, and then one can essentially simulate the algorithm of~\cite{GfellerV07}, with a similar analysis. We omit the details from this extended abstract.

\subsubsection*{Computing MIS for general graphs in full-duplex model}
We show that for general graphs, it is possible to find an MIS in an additive wireless network within $O(\log{n})$ rounds, w.h.p. This matches the best known algorithm for a message-passing setting, due to Luby~\cite{Luby86} and Alon et al.~\cite{AlonBI1986}. In fact, we show that in the additive network model we can simulate Luby's algorithm efficiently. We begin by recalling Luby's algorithm, and afterwards we explain the challenges for implementing it in a wireless network. Finally, we describe how we overcome these challenges and present our implementation and its proof.

Luby's cornerstone algorithm works in phases, where in each phase every node $v$ chooses to mark itself with probability $1/2d(v)$. If a marked node has the largest degree within its marked neighborhood (ties broken arbitrarily, say, by ID), then it enters the MIS in this phase and is removed from the graph along with all of its neighbors. The following phase is executed with the remaining graph (and remaining degrees). It is straightforward that this algorithm produces an MIS, since no two neighbors can enter the MIS at the same phase, and all neighbors of an MIS node are removed along with it in the phase in which it entered the MIS. The beauty of the algorithm lies in its ability to remove a constant fraction of the edges from the graph in every phases, implying that it completes after $O(\log{n})$ rounds, w.h.p.

To implement Luby's algorithm in an additive network, we need the following tools. First, since a node marks itself with probability inversely proportional to its degree, it has to able to compute its degree in the remaining subgraph. For this, we use our degree-approximation technique, and prove that working with approximate degrees is sufficient.  Second, a node has to know whether it has the maximal estimated degree within its marked neighborhood. While we can compute an approximation to the maximal value in a neighborhood efficiently, here an approximation is insufficient: it may be that there is more than a single node with the maximal estimated degree in a neighborhood, and we need to somehow be able to break symmetry. If the number of neighbors with the maximal estimated degree is not too large, i.e., $O(\log{n})$, then we can break symmetry by sending IDs, using the simple BCC framework. However, if the number of neighbors with maximal estimated degree is larger, employing BCC will require too many rounds. Our crucial observation is that this event occurs with low probability, and hence we can simply disregard it. In more detail, a marked node that has the maximal estimated degree in its marked neighborhood estimates the number of its marked neighbors with maximal estimated degree. If this number is $O(\log{n})$ then symmetry is broken using IDs, sent using BCC. Otherwise, the node does not enter the MIS.

The pseudocode of our implementation is given in Algorithm~\ref{alg:luby}. The proof of correctness is straight forward using one additional verification round for each phase, where each node that is about to enter the MIS transmits this information to its neighbors, and if a conflict is detected then the conflicting nodes both give up their attempt to enter the MIS. Theorem~\ref{thm:luby} shows that w.h.p., the algorithm terminates after $O(\log{n})$ rounds.

\begin{algorithm}
Let $C$ be an $[N, O(\log^2{N}), O(\log{N})]$-BCC code\\
Initially: $V'=V$, $M=\emptyset$, $S=\emptyset$\\
Locally: For every node $v$, $\alpha(v,V') = (\delta(v,V'),ID_v)$\\
Repeat until $V'=\emptyset$:\\
\quad For every $v\in V'$ do:\\
\quad \quad $\delta(v,V')\leftarrow\AlgAppDeg(v,V')$ \label{line:appdeg}\\
\quad \quad If $\delta(v,V') = 0$ then $b(v) \leftarrow 1$\\
\quad \quad Else, $b(v) \leftarrow 1 \mbox{ w.p. } 1/ 2c \delta(v,V'); 0, \mbox{ otherwise}$.\\
\quad $S\leftarrow\{ u\in V' \mid b(u)=1\}$\\
\quad  For every $v\in S$ do:\\
\quad  \quad $\delta(v,S)\leftarrow\AlgAppDeg(v,S)$ \label{line:appdegmarked}\\
\quad  $S'\leftarrow\{u\in S \mid \delta(u,S)\leq \log{n}\}$ \label{line:Sprime}\\
\quad  For every $v\in S'$ do:\\
\quad  \quad Send $C[\langle v,\alpha(v,V')\rangle]$\\
\quad  \quad If $\alpha(v,V') > \max\{\alpha(u,V') \mid u\in (S'\cap\Gamma(v))\setminus\{v\}\}$, then $m(v)\leftarrow 1$ \label{line:maxdeg}\\
\quad  \quad Else, $m(v)\leftarrow 0$\\
\quad  \quad If $m(v)=1$ and $\{u \in \Gamma(v) \mid m(u)=1\} \neq\emptyset$ then $m(v)\leftarrow 0$ \label{line:verify}\\
\quad  $A \leftarrow \{u\in S' \mid m(u)=1\}$\\
\quad  $M\leftarrow M \cup A$\\
\quad  $S\leftarrow \emptyset$\\
\quad  $V'\leftarrow V' \setminus (M\cup \Gamma(M))$ \label{line:removed}\\
\quad  For every $u \in A$, Send $1$\\
\quad  For every $u \in \Gamma(A) \setminus A$, Send $0$\\

    \caption{An MIS algorithm for general graphs. The parameter $c$ is the constant given by \AlgAppDeg.}
    \label{alg:luby}
\end{algorithm}

\begin{thm}
\label{thm:luby}
The set $M$ returned by Algorithm~\ref{alg:luby} is an MIS. The algorithm completes in $O(\log{n})$ rounds, w.h.p.
\end{thm}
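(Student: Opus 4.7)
The proof plan separates into correctness (that $M$ is an MIS) and the $O(\log n)$ round bound. For correctness, I would argue that the verification step on line~\ref{line:verify}, combined with the removal step on line~\ref{line:removed}, guarantees independence: if adjacent nodes $u,v$ both tentatively set $m(\cdot)=1$, line~\ref{line:verify} forces both to revert to $0$, so no two neighbors ever co-enter $M$ in the same phase; and because every node of $A$ triggers the removal of its entire neighborhood from $V'$, no MIS node can later acquire a neighbor in $M$. Maximality follows from the termination condition $V'=\emptyset$: every node is either placed in $M$ or removed as a neighbor of some node that joined $M$.

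For the round budget per phase I would show that each phase costs $O(1)$ communication rounds. The two invocations of \AlgAppDeg\ on lines~\ref{line:appdeg} and~\ref{line:appdegmarked} cost $O(1)$ rounds each by Theorem~\ref{thm:approx-deg}. The broadcast of $C[\langle v,\alpha(v,V')\rangle]$ by nodes of $S'$ is a single round; by the definition of $S'$ on line~\ref{line:Sprime}, each receiver has at most $O(\log n)$ transmitting neighbors in $S'$, so the $[N,O(\log^2 N),O(\log N)]$-BCC decoding succeeds and line~\ref{line:maxdeg} can be computed correctly. Verification and the final announcements use $O(1)$ additional rounds. Thus it suffices to show that the outer loop terminates within $O(\log n)$ iterations w.h.p.

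For the iteration count I would adapt Luby's classical edge-deletion argument to tolerate approximate degrees and the $S'$-truncation. Since $\delta(v,V')$ is a constant-factor approximation of the true degree $d(v,V')$, the marking probability is within a constant factor of $1/(2d(v,V'))$, and Luby's notion of a ``good'' edge -- one with an endpoint whose low-degree neighborhood makes it likely to be removed -- still captures a constant fraction of all remaining edges. The central computation I would carry out is: conditional on a good node $v$ being marked, (i) the expected number of marked neighbors of $v$ is $O(1)$, so by a Chernoff bound $\delta(v,S)\le\log n$ and therefore $v\in S'$ w.h.p.; and (ii) the maximal-$\alpha$ node in $S'\cap\Gamma(v)\cup\{v\}$ is unique (since ties break by ID) and so at least one node in every good edge's neighborhood enters $A$. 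Combining these, a constant fraction of edges is removed in expectation per phase, and standard amplification over $O(\log n)$ phases together with a union bound over all phases and all high-probability events of \AlgAppDeg\ and BIC/BCC decoding yields the claim.

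The main obstacle will be handling the coupling between the degree-approximation errors and the $S'$-truncation: a node could be eliminated from $S'$ either because it genuinely has many marked neighbors or because its estimate $\delta(v,S)$ was an overestimate, and an adjacent node could misidentify the maximum $\alpha$-value. I would control this by working with the true degrees inside each neighborhood where I condition on the approximation event $\delta(\cdot)\in[d(\cdot)/c_0,c_0 d(\cdot)]$ succeeding (which happens w.h.p. by Theorem~\ref{thm:approx-deg}), so that all constants absorbed into Luby's analysis degrade only by a multiplicative $\mathrm{poly}(c_0)$ factor and the number of phases remains $O(\log n)$.
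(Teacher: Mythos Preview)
Your overall approach matches the paper's: both establish correctness via the verification step and removal rule, both argue $O(1)$ rounds per phase, and both adapt Luby's constant-fraction-of-edges argument to cope with approximate degrees and the $S'$-truncation. Your treatment of (i) --- that a marked node has $O(1)$ marked neighbors in expectation, hence survives the $S'$-cutoff w.h.p.\ --- is correct and essentially what the paper does.

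The gap is your step (ii). From the fact that the maximal-$\alpha$ node $w$ in $S'\cap(\Gamma(v)\cup\{v\})$ is unique, you conclude that $w$ (or some node in $v$'s neighborhood) enters $A$. This does not follow: $w$ enters $A$ only if $\alpha(w)$ is maximal among $w$'s \emph{own} marked neighbors, and $w$ may well have a marked neighbor $w'\notin\Gamma(v)$ with $\alpha(w')>\alpha(w)$. The uniqueness of the local maximum in $\Gamma(v)$ says nothing about that. This is exactly the subtlety Luby's analysis is designed to handle, and it is not bypassed by working with approximate degrees.

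The paper fills this gap in two steps. First (Lemma~\ref{lemma:joinM}), it shows directly that any node $u$ joins $M$ with probability at least $1/(6c\,\delta(u,V'))$, by bounding the probability that some marked neighbor of $u$ has larger $\alpha$ (a union bound over $\Gamma(u)$, using that such neighbors have $\delta\ge\delta(u)$). Second (Lemma~\ref{lemma:remove}), for a good node $v$ it extracts a subset $L\subseteq\Gamma(v)$ with $\sum_{u\in L}1/(2c\,\delta(u,V'))\in[1/6c,1/3c]$ and applies inclusion--exclusion: $\Pr[\exists u\in L\cap M]\ge\sum_u\Pr[u\in M]-\sum_{u\ne w}\Pr[u,w\in S]$, which is bounded below by a constant. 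Your proposal should replace the local-maximum argument of (ii) with these two ingredients; the rest of your plan then goes through as written.
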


To prove that $M$ is an independent set, we claim that no two neighbors enter it in the same phase, which is guaranteed by Line~\ref{line:verify}. If a node was added to $M$ in a certain phase, then because only a single bit of information can be sent to it by all of its neighbors, it is removed along with all of its neighbors from any following phases, which gives that $M$ is indeed an independent set. It also holds that $M$ is maximal, since any node removed in a certain phase is either in $M$ or a neighbor of a node in $M$.

The main task is to prove the number of rounds required for the algorithm to complete. We follow the line of proof of Luby~\cite{Luby86}, and show that in each phase, a constant fraction of the edges touch at least one removed node. This implies that there are no more edges after $O(\log{n})$ rounds, w.h.p., after which the algorithm completes. The details of the proof are adapted from the proof of~\cite{WattenhoferCourse} to Luby's algorithm.

One modification we have to address is Line~\ref{line:Sprime}, where a node that has too many neighbors which are selected and have its maximal estimated degree simply drops out of the set of selected nodes. For a given node, the probability of being in $S\setminus S'$ is at  most $1/(2\log{n})^{\log{n}+1}$, since it needs to have a degree of at least $\log{n}$ and so do all of its at least $\log{n}$ selected neighbors. A union bound over all nodes still gives that this happens only in a very low probability, and hence the rest of the proof is conditioned on the evert that this does not occur.

\begin{lemma}
\label{lemma:joinM}
For every node $v$, the probability that $v$ is added to $M$ in a certain phase is at least $1/6c\delta(v,V')$.
\end{lemma}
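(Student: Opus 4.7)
The plan is to adapt Luby's classical argument, handling two wireless‑specific wrinkles: the degree estimates are only $c$-approximate, and a marked node may be discarded from $S$ at Line~\ref{line:Sprime}. I would condition throughout on the (high-probability) event that every call to $\AlgAppDeg$ in Lines~\ref{line:appdeg} and~\ref{line:appdegmarked} returns a $c$-approximation of the true degree, guaranteed by Theorem~\ref{thm:approx-deg}. Under this conditioning the $\alpha(\cdot,V')$ values are fixed, so the remaining randomness is the mutually independent marking coins $\{b(u)\}_{u\in V'}$. Fix $v\in V'$; if $\delta(v,V')=0$ then $b(v)=1$ deterministically and $v$ has no neighbors in $V'$, so $v$ enters $M$ with probability $1$ and the claim is trivial. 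Hence I assume $\delta(v,V')\geq 1$ and introduce three events: $E_1=\{b(v)=1\}$; $E_2=\{\text{no }u\in\Gamma(v)\cap V'\text{ with }\alpha(u)>\alpha(v)\text{ has }b(u)=1\}$; and $E_3=\{v\in S'\}$.

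The first substep is to show $E_1\wedge E_2\wedge E_3\Rightarrow v\in M$. From $E_1\wedge E_2$, $v$ sits in $S$ and dominates in $\alpha$ every marked neighbor, so $v$ attains the maximum of $\alpha(\cdot)$ over $S\cap\Gamma(v)$. Event $E_3$ then gives $v\in S'\subseteq S$, so $v$ still attains the maximum over $S'\cap\Gamma(v)$, which triggers $m(v)\leftarrow 1$ at Line~\ref{line:maxdeg}. Any neighbor $u\in S'\cap\Gamma(v)$ satisfies $\alpha(u)<\alpha(v)$ by $E_2$ (as $S'\subseteq S$), hence $m(u)=0$; consequently the verification step of Line~\ref{line:verify} leaves $m(v)=1$, and $v$ joins $M$.

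The second substep lower bounds $\Pr[E_1\wedge E_2]$. Independence of the marking coins yields $\Pr[E_1\wedge E_2]=\Pr[E_1]\cdot\Pr[E_2]=\tfrac{1}{2c\delta(v,V')}\cdot\Pr[E_2]$. A union bound gives $\Pr[\bar E_2]\leq\sum_{u\in\Gamma(v)\cap V':\,\alpha(u)>\alpha(v)}\tfrac{1}{2c\delta(u,V')}$. Since $\alpha(u)=(\delta(u,V'),\id_u)>\alpha(v)$ forces $\delta(u,V')\geq\delta(v,V')$, every summand is at most $\tfrac{1}{2c\delta(v,V')}$, and the number of summands is at most $|\Gamma(v)\cap V'|\leq c\delta(v,V')$ by the approximation guarantee. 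Hence $\Pr[\bar E_2]\leq\tfrac12$ and $\Pr[E_1\wedge E_2]\geq\tfrac{1}{4c\delta(v,V')}$. To fold in $E_3$, I use the stated bound $\Pr[v\in S\setminus S']\leq (2\log n)^{-(\log n+1)}$, which is exactly $\Pr[E_1\wedge\bar E_3]$, so
$\Pr[v\in M]\geq\Pr[E_1\wedge E_2\wedge E_3]\geq\Pr[E_1\wedge E_2]-\Pr[v\in S\setminus S']\geq\tfrac{1}{4c\delta(v,V')}-\tfrac{1}{(2\log n)^{\log n+1}}\geq\tfrac{1}{6c\delta(v,V')}$,
the last inequality holding for sufficiently large $n$ because $\delta(v,V')\leq n$ and $(2\log n)^{\log n+1}$ is super‑polynomial.

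The only genuinely new ingredient compared to the standard Luby proof is the subtraction accounting for $E_3$, which is essentially free since $S\setminus S'$ is doubly-exponentially rare; the main bookkeeping to get right is propagating the approximation factor $c$ through both $\Pr[E_1]$ and the counting of $\Gamma(v)\cap V'$, so that the union bound in the second substep closes at exactly $\tfrac12$ and the slack between $\tfrac14$ and $\tfrac16$ absorbs the $S\setminus S'$ correction.
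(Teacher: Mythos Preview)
Your proposal is correct and takes essentially the same approach as the paper: both factor $\Pr[v\in M]$ through the marking probability $1/(2c\delta(v,V'))$, bound the probability that no higher-$\alpha$ neighbor is marked by $1/2$ via the same union bound (using $\delta(u,V')\geq\delta(v,V')$ and $|\Gamma(v)\cap V'|\leq c\delta(v,V')$), and absorb the negligible wireless-specific corrections into the slack between $1/(4c\delta)$ and $1/(6c\delta)$. The only cosmetic difference is that you condition globally on all $\AlgAppDeg$ calls succeeding and then argue Line~\ref{line:verify} passes deterministically under $E_1\wedge E_2\wedge E_3$, whereas the paper instead books separate $1/n^{t_1}$ and $1/n^{t_2}$ terms for the $S\setminus S'$ event and for verification failure due to bad approximation; the arithmetic and the final constant come out the same.
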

\begin{proof}
To join $M$, a node $v$ has to first mark itself, and then be the maximal node that marked itself in its neighborhood. This implies that
\begin{equation}\label{eqn:pr-v}\Prob[v \in M]=\Prob[v \in M ~|~ b(v)=1]\Prob[b(v)=1]=\Prob[v \in M ~|~ b(v)=1]\cdot 1/2c\delta(v,V').\end{equation}
To bound this probability, we calculate the probability that $v$ does not enter $M$ despite being marked. This happens if either it does not have the maximal degree and ID among its marked neighbors, or if it has too many marked neighbors which share the largest degree with it.

We denote by $D(v)$ the neighbors of $v$ with the same approximate degree, that is $D(v)=\{u \in \Gamma(v) ~|~ \delta(u,V')=\delta(v,V')\}$. The probability that $v$ has too many marked neighbors which share its degree is small since in expectation it is constant, and w.h.p. it is at most $O(\log{n})$, using a standard Chernoff bound. Formally,
\begin{eqnarray*}
\EXP\big[\left|\{u \in D(v) ~|~ b(u)=1\}\right|\big] &=& \sum_{u \in D(v)}{1/2c\delta(u,V')} = \sum_{u \in D(v)}{1/2c\delta(v,V')}\\
&\leq& d(v,V')\cdot 1/2c\delta(v,V') \leq c\delta(v,V')/2c\delta(v,V') = 1/2.
\end{eqnarray*}
Since the random choices for $b(u)$ are independent for different nodes $u$, we have that
\[\Prob \big[|\{u \in D(v) ~|~ b(u)=1\}| > O(\log{n})\big] < 1/n^{t_1},\]
for a constant $t_1 > 1$.

Next, we bound the probability that a marked node $v$ does not have the maximal degree among its marked neighbors. Denote $D'(v) = \{u \in \Gamma(v) ~|~ \alpha(u,V') > \alpha(v,V')\}$. It holds that
\[\Prob[\exists u \in S'\cap D'(v)]  \leq \sum_{u \in D'(v)}\Prob[u \in S'] \leq \sum_{u \in D'(v)}1/2c\delta(v,V') \leq c\delta(v,V')/2c\delta(v,V')=1/2.\]

Finally, for every two neighbors $u,v$, the probability that $m(v)=m(u)=1$ at Line~\ref{line:verify} is at most $1/n^{t_2}$ for some constant $t_2$. This is because for this to occur, it must be that one of their degree estimations failed. This procedure is used at most twice per node and hence with probability at least $1-1/n^{t_2}$ all four invocations of this procedure were successful in obtaining a $c$-approximation, in which case either $m(v)$ or $m(u)$ are 0.

Hence, by Equation~\ref{eqn:pr-v}, the probability that $v$ joins $M$ is at least
\begin{eqnarray*}
\Prob[v \in M]&=&(1-\Prob[v \not\in M ~|~ b(v)=1])\cdot 1/2c\delta(v,V')\\
 &\geq& (1-(1/n^{t_1}+1/n^{t_2}+1/2))/2c\delta(v,V') \geq 1/6c\delta(v,V').
\end{eqnarray*}
\end{proof}

To show that a constant fraction of edges are removed in each phase, we show that a constant fraction of edges have at least one endpoint that is removed. We say that a node $v$ is \emph{good} if $\sum_{u \in \Gamma(v)}{1/2c\delta(u,V')} \geq 1/6c$, and claim that good nodes are removed with constant probability.

\begin{lemma}
\label{lemma:remove}
Let $v$ be a good node. Then the probability that $v$ gets removed in Line~\ref{line:removed} is at least $p_r = 1/18c - 1/18c^2$.
\end{lemma}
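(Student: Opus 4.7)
The plan is to show that a good node $v$ gets removed in Line~\ref{line:removed} with probability at least $p_r$, by proving that with at least this probability either $v$ itself or some neighbor of $v$ is inserted into $M$. For each $u\in\Gamma(v)$ write $A_u=\{u\in M\}$ and $p_u=1/(2c\,\delta(u,V'))$, so that $\Prob[u\in S]=p_u$. The approach is a Bonferroni-style lower bound on $\Prob[\bigcup_{u\in\Gamma(v)}A_u]$, for which I need to control one linear sum and one pairwise sum.

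The first step is to lower-bound the linear term. Lemma~\ref{lemma:joinM} gives $\Prob[A_u]\geq p_u/3$ for every $u\in\Gamma(v)$; combined with the goodness of $v$, this yields
$$\sum_{u\in\Gamma(v)}\Prob[A_u]\;\geq\;\tfrac{1}{3}\sum_{u\in\Gamma(v)}p_u\;\geq\;\tfrac{1}{18c}.$$

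The second step is to upper-bound the pairwise term $\sum_{\{u,u'\}\subseteq\Gamma(v)}\Prob[A_u\cap A_{u'}]$ by splitting pairs according to adjacency in $V'$. If $u$ and $u'$ are adjacent in $V'$ then, since $M$ is independent, $\Prob[A_u\cap A_{u'}]=0$. If they are non-adjacent, then $A_u\cap A_{u'}\subseteq\{u\in S\}\cap\{u'\in S\}$, and by the independence of the marking coin flips $\Prob[A_u\cap A_{u'}]\leq p_up_{u'}$. Using the uniform cap $p_u\leq 1/(2c)$ (since $\delta(u,V')\geq 1$) together with the first-step bound then gives a pairwise contribution of at most $1/(18c^2)$. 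Plugging both bounds into
$$\Prob\Big[\bigcup_{u\in\Gamma(v)}A_u\Big]\;\geq\;\sum_{u\in\Gamma(v)}\Prob[A_u]\;-\;\sum_{\{u,u'\}\subseteq\Gamma(v)}\Prob[A_u\cap A_{u'}]$$
yields exactly $p_r=1/(18c)-1/(18c^2)$.

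The main obstacle is the regime where $S_v:=\sum_{u\in\Gamma(v)}p_u$ is large (comparable to $1$), because there the naive quadratic estimate $\sum p_up_{u'}\leq S_v^{\,2}/2$ overwhelms the linear bound $S_v/3$ and Bonferroni degenerates. My plan is to handle this case by restricting the entire argument to a sub-neighborhood $\Gamma'\subseteq\Gamma(v)$ with $\sum_{u\in\Gamma'}p_u\in[1/(6c),\,1/3]$; such a $\Gamma'$ always exists because each individual $p_u\leq 1/(2c)\leq 1/3$, so neighbors can be added greedily until the running sum first enters the target window. The goodness hypothesis is inherited by $\Gamma'$ (its sum is still at least $1/(6c)$), and inside this window Bonferroni is tight enough to recover $p_r$, completing the proof.
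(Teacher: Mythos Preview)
Your overall approach---restrict to a sub-neighborhood and apply a Bonferroni (inclusion--exclusion) lower bound---is exactly the paper's strategy, but the numerical details of your window break the argument.

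The gap is in the pairwise bound. You claim the pairwise contribution is at most $1/(18c^2)$, citing only the cap $p_u\le 1/(2c)$ and the lower bound $\sum_{u\in\Gamma'}p_u\ge 1/(6c)$. Neither of these caps the pairwise sum from above: with your stated window $\sum_{u\in\Gamma'}p_u\in[1/(6c),\,1/3]$, the quadratic term satisfies only $\sum_{u\ne u'}p_up_{u'}\le\bigl(\sum p_u\bigr)^2\le 1/9$, so the Bonferroni difference $S_v/3-S_v^2$ can be as small as $0$ (attained at $S_v=1/3$). In particular it is not bounded below by $p_r=1/(18c)-1/(18c^2)$ once $c>1$.

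The fix, which is what the paper does, is to tighten the upper endpoint of the window to $1/(3c)$ rather than $1/3$: with $S_v\in[1/(6c),\,1/(3c)]$ one gets $S_v(1/3-S_v)\ge(1/(6c))(1/3-1/(3c))=p_r$ as required. But to guarantee a greedy subset $\Gamma'$ lands in that narrower window you need each increment $p_u\le 1/(6c)$, i.e.\ $\delta(u,V')\ge 3$. Your bound $p_u\le 1/(2c)$ is too coarse for this. The paper therefore splits off the case where some neighbor $u$ has $\delta(u,V')\le 2$: then Lemma~\ref{lemma:joinM} already gives $\Prob[u\in M]\ge 1/(12c)\ge p_r$, and $v$ is removed. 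Only when every neighbor has $\delta\ge 3$ does the subset-Bonferroni argument run, and there the tighter window is available.
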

\begin{proof}
If there is a node $u \in \Gamma(v)$ such that $\delta(u,V') \leq 2$ then by Lemma~\ref{lemma:joinM} we have that with probability at least $1/6c\delta(u,V')\geq 1/12c \geq p_r$ the node $u$ joins $M$, and $v$ is removed in Line~\ref{line:removed}.

Otherwise, all nodes $u \in \Gamma(v)$ are such that $\delta(u,V') \geq 3$, and hence $1/2c\delta(u,V') \leq 1/6c$.
Let $L \subseteq \Gamma(v)$ be a subset of neighbors of $v$ such that $1/6c \leq \sum_{u \in L}{1/2c\delta(u,V')} \leq 1/3c$. The set $L$ exists because if we take all of $\Gamma(v)$ then since $v$ is good it holds that $\sum_{u \in \Gamma(v)}{1/2c\delta(u,V')} \geq 1/6c$, and if this sum is larger than $1/3c$ then we can take out nodes until we reach such a set $L$ (because for every $u \in \Gamma(v)$ we have $1/2c\delta(u,V') \leq 1/6c$).

We can now calculate the probability that $v$ is removed in Line~\ref{line:removed}, by being a neighbor of a node in $M$.

\begin{eqnarray*}
\Prob[v \in \Gamma(M)] &\geq& \Prob[\exists u \in L\cap M] \geq \sum_{u \in L}{\Prob[u \in M]} - \sum_{u,w \in L, u\neq w}{\Prob[u \in M \wedge w \in M]}\\
&\geq& \sum_{u \in L}{\Prob[u \in M]} - \sum_{u,w \in L}{\Prob[u \in S \wedge w \in S]}\\
&\geq& \sum_{u \in L}{1/6c\delta(u,V')} - \sum_{u,w \in L}{1/2c\delta(u,V')\cdot 1/2c\delta(w,V')}\\
&\geq& \sum_{u \in L}{1/2c\delta(u,V')\left(1/3 - \sum_{w \in L}{1/2c\delta(w,V')}\right)}\\
&\geq& 1/6c(1/3-1/3c) = 1/18c - 1/18c^2 = p_r.
\end{eqnarray*}
The third inequality above follows since the probability of a node to be in $M$ is at most its probability of being in $S$.
This completes the proof.
\end{proof}

Next, we consider a directed auxiliary graph $G'$ over $V'$, where each edge of the graph induced by $V'$ is directed towards the endpoint with the larger $\alpha(v,V')$ value. We claim that the outdegree in $G'$ of any node $v$ which is not good is at least twice its indegree. This holds because otherwise, $\sum_{u \in \Gamma(v)}{1/2c\delta(u,V')} \geq \sum_{u \in \Gamma(v)}{1/2c\delta(v,V')} \geq d(v)/3 \cdot 1/2c\delta(v,V') \geq 1/6$. It implies that at least half of the edges of the graph induced by $V'$ are good, in the sense that they have at least one good endpoint, since the number of edges that are directed from a non-good node to a good node is at least the number of edges in between two non-good nodes.

\begin{proofof}{Theorem~\ref{thm:luby}}
By Lemma~\ref{lemma:remove} each good node is removed in Line~\ref{line:removed} with probability at least $p_r$. Since at least half of the edges have a good endpoint, this implies that at least half of the edges have a probability of $p_r$ to be removed. Let $X_e$ be the characterizing random variable of the event that edge $e$ is removed. Then $\EXP[X_e] \geq p_r$ and by linearity of expectation we have that the expected number of edges that are removed is $p_r|E|$. Since phases are independent this implies termination in $O(\log{n})$ phases, in expectation. To show that this also holds with high probability, we let $Y_i$ be the characterizing random variable of the event that at least $p_r|E|$ edges were removed in phase $i$, and we denote $Y=\sum{Y_i}$. The above argument implies that $\EXP[Y] = O(\log{n})$. Since phases are independent we can use a standard Chernoff bound to get that $\Prob[Y > O(\EXP[Y])] < 1/n^{t_3}$ for some constant $t_3$. Hence, with high probability, the number of phases required is $O(\log{n})$.
\end{proofof}

\vspace{-0.5cm}

\subsection{Coloring}
\label{app:coloring}

\def\APPENDCOL{
In this Section we address the problem of finding a $(\Delta+1)$-coloring of the underlying graph. Each node has to output a color in $\{1,...,\Delta+1\}$, such that no two neighbors share the same color. We build upon known techniques and embed the usage of BCC codes to them in order to obtain our results.

Suppose $A$ is an MIS algorithm that works in $f(n)$ rounds in BGG graphs. 
We show how to get an algorithm for $(\Delta+1)$-coloring in $O(\Delta+f(n))$ rounds for BGG graphs. The algorithm follows the line of the $O(\log^*n)$ algorithm for $(\Delta+1)$-coloring in BGG graphs in the message-passing model, by Schneider and Wattenhofer~\cite{SchneiderW2010}. We repeat the following procedure for the subgraph $G_i$ induced by the node set $V_0$, where initially $V_0=V$.\\

\noindent 1. Find an MIS $S_i$ in the graph $G_i$.\\
2. Denote by $H_i$ the graph $(S_i,E_{S_i})$, where $(u,v) \in E_{S_i}$ if $d_{G_i}(u,v) \leq 3$. Find an MIS $S'_i $ in $H_i$.\\
3. Each node in $S'_i$ colors itself and all of its neighbors.\\
4. $V_{i+1} = V_i - \Gamma(S'_i)$

\paragraph{Analysis Sketch:}
The analysis follows the analysis of previous work and hence we do not repeat it here in full, but rather sketch the idea.
There are a constant number of phases because in each neighborhood of radius 6 there is at least one node in $S'_i$ which gets colored in phase $i$, and there can be at most a constant number of such nodes throughout the phases since they are independent.

Steps 1 and 2 take at most $f(n)$ rounds.
Step 3 requires $O(\Delta)$ rounds, as follows. Each node $v \in S'_i$ transmits its ID. Each uncolored neighbor $u$ of $v$ applies for a color, by sending its ID with some probability. Once $v$ knows an ID of a neighbor $u$ it sends that ID and then $u$ colors itself by announcing its color (this is similarly to~\cite{SchneiderW09}).
This way all nodes know about the colors that are already used by their neighbors, and allows them to safely choose an unused color.

It may be possible to go below $O(\Delta)$ and obtain a solution in $O(\Delta/\log{n})$ rounds, since one can use BCC for the randomized attempts of coloring the neighbors.
Moreover, we conjecture that one can derive a lower bound of $\Omega(\Delta/\log{n})$ using an information theoretic argument that is similar to that of Schneider and Wattenhofer~\cite{SchneiderW2010b}.
}

\APPENDCOL

\vspace{-0.2cm}
\subsection{Minimum Dominating Set}
\label{app:dom}
\APPENDDOM

\commlongend

\end{document}